\begin{document}  

\newcommand{\nc}{\newcommand}
\newcommand{\delete}[1]{}

\nc{\mlabel}[1]{\label{#1}}  
\nc{\mcite}[1]{\cite{#1}}  
\nc{\mref}[1]{\ref{#1}}  
\nc{\mbibitem}[1]{\bibitem{#1}} 

\delete{
\nc{\mlabel}[1]{\label{#1}  
{\hfill \hspace{1cm}{\bf{{\ }\hfill(#1)}}}}
\nc{\mcite}[1]{\cite{#1}{{\bf{{\ }(#1)}}}}  
\nc{\mref}[1]{\ref{#1}{{\bf{{\ }(#1)}}}}  
\nc{\mbibitem}[1]{\bibitem[\bf #1]{#1}} 
}

\newtheorem{theorem}{Theorem}[section]
\newtheorem{prop}[theorem]{Proposition}
\newtheorem{defn}[theorem]{Definition}
\newtheorem{lemma}[theorem]{Lemma}
\newtheorem{coro}[theorem]{Corollary}
\newtheorem{prop-def}{Proposition-Definition}[section]
\newtheorem{claim}{Claim}[section]
\newtheorem{remark}[theorem]{Remark}
\newtheorem{propprop}{Proposed Proposition}[section]
\newtheorem{conjecture}{Conjecture}
\newtheorem{exam}[theorem]{Example}
\newtheorem{assumption}{Assumption}
\newtheorem{condition}[theorem]{Assumption}
\newtheorem{notation}[theorem]{Notation}

\renewcommand{\labelenumi}{{\rm(\alph{enumi})}}
\renewcommand{\theenumi}{\alph{enumi}}

\nc{\tred}[1]{\textcolor{red}{#1}}
\nc{\tblue}[1]{\textcolor{blue}{#1}}
\nc{\tgreen}[1]{\textcolor{green}{#1}}
\nc{\tpurple}[1]{\textcolor{purple}{#1}}
\nc{\btred}[1]{\textcolor{red}{\bf #1}}
\nc{\btblue}[1]{\textcolor{blue}{\bf #1}}
\nc{\btgreen}[1]{\textcolor{green}{\bf #1}}
\nc{\btpurple}[1]{\textcolor{purple}{\bf #1}}

\nc{\li}[1]{\textcolor{red}{Li:#1}}
\nc{\cm}[1]{\textcolor{blue}{Chengming: #1}}
\nc{\xiang}[1]{\textcolor{red}{{\it Xiang: #1}}}

\nc{\dual}{\ast}
\nc{\otr}[1]{\bar{#1}}
\nc{\ad}{\mathrm{ad}}
\nc{\gl}{\mathfrak{gl}}
\nc{\ttl}[1]{\hat{#1}}
\nc{\TTL}{\wedge}
\nc{\ltt}[1]{\check{#1}}
\nc{\LTT}{\vee}
\nc{\ttt}[1]{\tilde{#1}}
\nc{\TTT}{T}
\nc{\adec}{\check{;}} \nc{\aop}{\alpha}
\nc{\dftimes}{\widetilde{\otimes}} \nc{\dfl}{\succ} \nc{\dfr}{\prec}
\nc{\dfc}{\circ} \nc{\dfb}{\bullet} \nc{\dft}{\star}
\nc{\dfcf}{{\mathbf k}} \nc{\apr}{\ast} \nc{\spr}{\cdot}
\nc{\twopr}{\circ} \nc{\sempr}{\ast} \nc{\bwt}{{mass}\xspace}
\nc{\bwts}{{masses}\xspace} \nc{\bop}{{extention}\xspace}
\nc{\ewt}{{mass}\xspace} \nc{\ewts}{{masses}\xspace}
\nc{\disp}[1]{\displaystyle{#1}} \nc{\tto}{{extended}\xspace}
\nc{\Tto}{{Extended}\xspace} \nc{\tte}{{extended}\xspace}
\nc{\gyb}{{generalized}\xspace} \nc{\Gyb}{{Generalized}\xspace}
\nc{\ECYBE}{{ECYBE}\xspace} \nc{\GAYBE}{{GAYBE}\xspace}
\nc{\mltimes}{\ltimes}

\nc{\bin}[2]{ (_{\stackrel{\scs{#1}}{\scs{#2}}})}  
\nc{\binc}[2]{ \left (\!\! \begin{array}{c} \scs{#1}\\
    \scs{#2} \end{array}\!\! \right )}  
\nc{\bincc}[2]{  \left ( {\scs{#1} \atop
    \vspace{-.5cm}\scs{#2}} \right )}  
\nc{\sarray}[2]{\begin{array}{c}#1 \vspace{.1cm}\\ \hline
    \vspace{-.35cm} \\ #2 \end{array}}
\nc{\bs}{\bar{S}} \nc{\dcup}{\stackrel{\bullet}{\cup}}
\nc{\dbigcup}{\stackrel{\bullet}{\bigcup}} \nc{\etree}{\big |}
\nc{\la}{\longrightarrow} \nc{\fe}{\'{e}} \nc{\rar}{\rightarrow}
\nc{\dar}{\downarrow} \nc{\dap}[1]{\downarrow
\rlap{$\scriptstyle{#1}$}} \nc{\uap}[1]{\uparrow
\rlap{$\scriptstyle{#1}$}} \nc{\defeq}{\stackrel{\rm def}{=}}
\nc{\dis}[1]{\displaystyle{#1}} \nc{\dotcup}{\,
\displaystyle{\bigcup^\bullet}\ } \nc{\sdotcup}{\tiny{
\displaystyle{\bigcup^\bullet}\ }} \nc{\hcm}{\ \hat{,}\ }
\nc{\hcirc}{\hat{\circ}} \nc{\hts}{\hat{\shpr}}
\nc{\lts}{\stackrel{\leftarrow}{\shpr}}
\nc{\rts}{\stackrel{\rightarrow}{\shpr}} \nc{\lleft}{[}
\nc{\lright}{]} \nc{\uni}[1]{\tilde{#1}} \nc{\wor}[1]{\check{#1}}
\nc{\free}[1]{\bar{#1}} \nc{\den}[1]{\check{#1}} \nc{\lrpa}{\wr}
\nc{\curlyl}{\left \{ \begin{array}{c} {} \\ {} \end{array}
    \right .  \!\!\!\!\!\!\!}
\nc{\curlyr}{ \!\!\!\!\!\!\!
    \left . \begin{array}{c} {} \\ {} \end{array}
    \right \} }
\nc{\leaf}{\ell}       
\nc{\longmid}{\left | \begin{array}{c} {} \\ {} \end{array}
    \right . \!\!\!\!\!\!\!}
\nc{\ot}{\otimes} \nc{\sot}{{\scriptstyle{\ot}}}
\nc{\otm}{\overline{\ot}} \nc{\ora}[1]{\stackrel{#1}{\rar}}
\nc{\ola}[1]{\stackrel{#1}{\la}}
\nc{\pltree}{\calt^\pl} \nc{\epltree}{\calt^{\pl,\NC}}
\nc{\rbpltree}{\calt^r} \nc{\scs}[1]{\scriptstyle{#1}}
\nc{\mrm}[1]{{\rm #1}}
\nc{\dirlim}{\displaystyle{\lim_{\longrightarrow}}\,}
\nc{\invlim}{\displaystyle{\lim_{\longleftarrow}}\,}
\nc{\mvp}{\vspace{0.5cm}} \nc{\svp}{\vspace{2cm}}
\nc{\vp}{\vspace{8cm}} \nc{\proofbegin}{\noindent{\bf Proof: }}
\nc{\proofend}{$\blacksquare$ \vspace{0.5cm}}
\nc{\freerbpl}{{F^{\mathrm RBPL}}}
\nc{\sha}{{\mbox{\cyr X}}}  
\nc{\ncsha}{{\mbox{\cyr X}^{\mathrm NC}}} \nc{\ncshao}{{\mbox{\cyr
X}^{\mathrm NC,\,0}}}
\nc{\shpr}{\diamond}    
\nc{\shprm}{\overline{\diamond}}    
\nc{\shpro}{\diamond^0}    
\nc{\shprr}{\diamond^r}     
\nc{\shpra}{\overline{\diamond}^r} \nc{\shpru}{\check{\diamond}}
\nc{\catpr}{\diamond_l} \nc{\rcatpr}{\diamond_r}
\nc{\lapr}{\diamond_a} \nc{\sqcupm}{\ot} \nc{\lepr}{\diamond_e}
\nc{\vep}{\varepsilon} \nc{\labs}{\mid\!} \nc{\rabs}{\!\mid}
\nc{\hsha}{\widehat{\sha}} \nc{\lsha}{\stackrel{\leftarrow}{\sha}}
\nc{\rsha}{\stackrel{\rightarrow}{\sha}} \nc{\lc}{\lfloor}
\nc{\rc}{\rfloor} \nc{\tpr}{\sqcup} \nc{\nctpr}{\vee}
\nc{\plpr}{\star} \nc{\rbplpr}{\bar{\plpr}} \nc{\sqmon}[1]{\langle
#1\rangle} \nc{\forest}{\calf} \nc{\ass}[1]{\alpha({#1})}
\nc{\altx}{\Lambda_X} \nc{\vecT}{\vec{T}} \nc{\onetree}{\bullet}
\nc{\Ao}{\check{A}} \nc{\seta}{\underline{\Ao}}
\nc{\deltaa}{\overline{\delta}} \nc{\trho}{\tilde{\rho}}

\nc{\mmbox}[1]{\mbox{\ #1\ }} \nc{\ann}{\mrm{ann}}
\nc{\Aut}{\mrm{Aut}} \nc{\can}{\mrm{can}} \nc{\twoalg}{{two-sided
algebra}\xspace} \nc{\colim}{\mrm{colim}} \nc{\Cont}{\mrm{Cont}}
\nc{\rchar}{\mrm{char}} \nc{\cok}{\mrm{coker}} \nc{\dtf}{{R-{\rm
tf}}} \nc{\dtor}{{R-{\rm tor}}}
\renewcommand{\det}{\mrm{det}}
\nc{\depth}{{\mrm d}} \nc{\Div}{{\mrm Div}} \nc{\End}{\mrm{End}}
\nc{\Ext}{\mrm{Ext}} \nc{\Fil}{\mrm{Fil}} \nc{\Frob}{\mrm{Frob}}
\nc{\Gal}{\mrm{Gal}} \nc{\GL}{\mrm{GL}} \nc{\Hom}{\mrm{Hom}}
\nc{\hsr}{\mrm{H}} \nc{\hpol}{\mrm{HP}} \nc{\id}{\mrm{id}}
\nc{\im}{\mrm{im}} \nc{\incl}{\mrm{incl}} \nc{\length}{\mrm{length}}
\nc{\LR}{\mrm{LR}} \nc{\mchar}{\rm char} \nc{\NC}{\mrm{NC}}
\nc{\mpart}{\mrm{part}} \nc{\pl}{\mrm{PL}} \nc{\ql}{{\QQ_\ell}}
\nc{\qp}{{\QQ_p}} \nc{\rank}{\mrm{rank}} \nc{\rba}{\rm{RBA }}
\nc{\rbas}{\rm{RBAs }} \nc{\rbpl}{\mrm{RBPL}} \nc{\rbw}{\rm{RBW }}
\nc{\rbws}{\rm{RBWs }} \nc{\rcot}{\mrm{cot}}
\nc{\rest}{\rm{controlled}\xspace} \nc{\rdef}{\mrm{def}}
\nc{\rdiv}{{\rm div}} \nc{\rtf}{{\rm tf}} \nc{\rtor}{{\rm tor}}
\nc{\res}{\mrm{res}} \nc{\SL}{\mrm{SL}} \nc{\Spec}{\mrm{Spec}}
\nc{\tor}{\mrm{tor}} \nc{\Tr}{\mrm{Tr}} \nc{\mtr}{\mrm{sk}}

\nc{\ab}{\mathbf{Ab}} \nc{\Alg}{\mathbf{Alg}}
\nc{\Algo}{\mathbf{Alg}^0} \nc{\Bax}{\mathbf{Bax}}
\nc{\Baxo}{\mathbf{Bax}^0} \nc{\RB}{\mathbf{RB}}
\nc{\RBo}{\mathbf{RB}^0} \nc{\BRB}{\mathbf{RB}}
\nc{\Dend}{\mathbf{DD}} \nc{\bfk}{{\bf k}} \nc{\bfone}{{\bf 1}}
\nc{\base}[1]{{a_{#1}}} \nc{\detail}{\marginpar{\bf More detail}
    \noindent{\bf Need more detail!}
    \svp}
\nc{\Diff}{\mathbf{Diff}} \nc{\gap}{\marginpar{\bf
Incomplete}\noindent{\bf Incomplete!!}
    \svp}
\nc{\FMod}{\mathbf{FMod}} \nc{\mset}{\mathbf{MSet}}
\nc{\rb}{\mathrm{RB}} \nc{\Int}{\mathbf{Int}}
\nc{\Mon}{\mathbf{Mon}}
\nc{\remarks}{\noindent{\bf Remarks: }}
\nc{\OS}{\mathbf{OS}} 
\nc{\Rep}{\mathbf{Rep}} \nc{\Rings}{\mathbf{Rings}}
\nc{\Sets}{\mathbf{Sets}} \nc{\DT}{\mathbf{DT}}

\nc{\BA}{{\mathbb A}} \nc{\CC}{{\mathbb C}} \nc{\DD}{{\mathbb D}}
\nc{\EE}{{\mathbb E}} \nc{\FF}{{\mathbb F}} \nc{\GG}{{\mathbb G}}
\nc{\HH}{{\mathbb H}} \nc{\LL}{{\mathbb L}} \nc{\NN}{{\mathbb N}}
\nc{\QQ}{{\mathbb Q}} \nc{\RR}{{\mathbb R}} \nc{\TT}{{\mathbb T}}
\nc{\VV}{{\mathbb V}} \nc{\ZZ}{{\mathbb Z}}


\nc{\calao}{{\mathcal A}} \nc{\cala}{{\mathcal A}}
\nc{\calc}{{\mathcal C}} \nc{\cald}{{\mathcal D}}
\nc{\cale}{{\mathcal E}} \nc{\calf}{{\mathcal F}}
\nc{\calfr}{{{\mathcal F}^{\,r}}} \nc{\calfo}{{\mathcal F}^0}
\nc{\calfro}{{\mathcal F}^{\,r,0}} \nc{\oF}{\overline{F}}
\nc{\calg}{{\mathcal G}} \nc{\calh}{{\mathcal H}}
\nc{\cali}{{\mathcal I}} \nc{\calj}{{\mathcal J}}
\nc{\call}{{\mathcal L}} \nc{\calm}{{\mathcal M}}
\nc{\caln}{{\mathcal N}} \nc{\calo}{{\mathcal O}}
\nc{\calp}{{\mathcal P}} \nc{\calr}{{\mathcal R}}
\nc{\calt}{{\mathcal T}} \nc{\caltr}{{\mathcal T}^{\,r}}
\nc{\calu}{{\mathcal U}} \nc{\calv}{{\mathcal V}}
\nc{\calw}{{\mathcal W}} \nc{\calx}{{\mathcal X}}
\nc{\CA}{\mathcal{A}}

\nc{\fraka}{{\mathfrak a}} \nc{\frakB}{{\mathfrak B}}
\nc{\frakb}{{\mathfrak b}} \nc{\frakd}{{\mathfrak d}}
\nc{\oD}{\overline{D}} \nc{\frakF}{{\mathfrak F}}
\nc{\frakg}{{\mathfrak g}} \nc{\frakm}{{\mathfrak m}}
\nc{\frakM}{{\mathfrak M}} \nc{\frakMo}{{\mathfrak M}^0}
\nc{\frakp}{{\mathfrak p}} \nc{\frakS}{{\mathfrak S}}
\nc{\frakSo}{{\mathfrak S}^0} \nc{\fraks}{{\mathfrak s}}
\nc{\os}{\overline{\fraks}} \nc{\frakT}{{\mathfrak T}}
\nc{\oT}{\overline{T}}
\nc{\frakX}{{\mathfrak X}} \nc{\frakXo}{{\mathfrak X}^0}
\nc{\frakx}{{\mathbf x}}
\nc{\frakTx}{\frakT}      
\nc{\frakTa}{\frakT^a}        
\nc{\frakTxo}{\frakTx^0}   
\nc{\caltao}{\calt^{a,0}}   
\nc{\ox}{\overline{\frakx}} \nc{\fraky}{{\mathfrak y}}
\nc{\frakz}{{\mathfrak z}} \nc{\oX}{\overline{X}}

\font\cyr=wncyr10

\nc{\redtext}[1]{\textcolor{red}{#1}}


\title{Generalizations of the classical Yang-Baxter equation and $\calo$-operators}

\author{Chengming Bai}
\address{Chern Institute of Mathematics\& LPMC, Nankai University, Tianjin 300071, P.R. China}
         \email{baicm@nankai.edu.cn}
\author{Li Guo}\thanks{Corresponding author: Li Guo, Department of Mathematics and Computer Science,
         Rutgers University,
         Newark, NJ 07102, U.S.A., E-mail: liguo@rutgers.edu}
\address{Department of Mathematics and Computer Science,
         Rutgers University,
         Newark, NJ 07102, U.S.A.}
\email{liguo@rutgers.edu}
\author{Xiang Ni}
\address{Chern Institute of Mathematics \& LPMC, Nankai
University, Tianjin 300071, P.R.
China}\email{xiangn$_-$math@yahoo.cn}



\begin{abstract}
Tensor solutions ($r$-matrices) of the classical Yang-Baxter equation (CYBE) in a Lie algebra, obtained as the  classical limit of the $R$-matrix
solution of the quantum Yang-Baxter equation (QYBE), is an important
structure appearing in different areas such as integrable systems,
symplectic geometry, quantum groups and quantum field theory.
Further study of CYBE led to its interpretation as certain operators, giving rise to the concept of
${\calo}$-operators. In~\mcite{Bai}, the
$\calo$-operators were in turn interpreted as tensor solutions of CYBE by enlarging the Lie algebra. The
purpose of this paper is to extend this study to a more
general class of operators that were recently
introduced~\mcite{BGN1} in the study of Lax pairs in integrable
systems. Relationship between $\calo$-operators, relative
differential operators and
Rota-Baxter operators are also discussed.
\end{abstract}

\subjclass[2000]{}

\keywords{Lie algebra, Yang-Baxter equation, $\calo$-operator, Rota-Baxter operator}

\maketitle

\tableofcontents

\setcounter{section}{0}

\section{Introduction}

This paper studies the relationship of a generalization of
$\calo$-operators with the classical Yang-Baxter equation (CYBE)~\mcite{BD} and its generalizations.

The CYBE in its original tensor form is the classical limit of the
quantum Yang-Baxter equation~\mcite{Ba,Ya} and has
played an important role in integrable systems~\mcite{BV1, BV2} and
Poisson-Lie groups (see \mcite{BGN1} and the
references therein). However the operator form of CYBE is often more
useful~\mcite{Se}. For example, the modified classical Yang-Baxter equation is
given in terms of the operator form~\mcite{Se,Bo}. This point of
view also allowed  Kupershmidt~\mcite{Ku} to generalize the notion
of (operator form of) CYBE to so-called $\calo$-operators, which
in fact can be traced back to Bordemann~\mcite{Bo} in integrable
systems.

\smallskip

It was shown in~\mcite{Bai} that an $\calo$-operator on
a Lie algebra can be realized as a tensor form solution of CYBE in a
larger Lie algebra. Thus these two seemingly distinct approaches to
solutions of the classical Yang-Baxter equation are unified.

Since then, both the tensor form approach and the operator form
approach of CYBE have been generalized. On one hand, the tensor form of CYBE has been generalized to
extended CYBE (ECYBE) and generalized CYBE (GCYBE) with the
latter arising naturally from the study of Lie
bialgebras~\mcite{Vai}.
On the other hand, the operator form of CYBE, in its generalized
form of $\calo$-operators, has been further generalized to
extended $\calo$-operators with modifications by several
parameters. These generalizations have found fruitful applications
to Lax pairs, Lie bialgebras and PostLie algebras~\mcite{BGN1}.
These generalizations have also motivated the study of their analogues for associative algebras~\mcite{BGN2}.

The purpose of this paper is to  further unify these
generalizations of the tensor forms and the operator
forms of CYBE in a similar framework as in~\mcite{Bai}. We also
study the relationship between Rota-Baxter operators and
$\calo$-operators, and study their differential analogues.

\smallskip

In Section~\mref{sec:ecy}, we study the relationship between
extended $\calo$-operators and extended CYBE. We then establish the
relationship between \tto $\calo$-operators and generalized CYBE in
Section~\mref{sec:gcyb}. Finally in Section~\mref{sec:rbod}, we
introduce a differential variation of an $\calo$-operator,
 called a relative differential operator. We then show that both an $\calo$-operator and
 a relative differential operator can be regarded as a Rota-Baxter operator on a larger Lie algebra.

\smallskip

\noindent {\bf Acknowledgements.} C. Bai thanks NSFC (10921061),
NKBRPC (2006CB805905) and SRFDP (200800550015) for support. L. Guo
thanks NSF grant DMS 1001855 for support and thanks the Chern Institute of Mathematics for hospitality. The authors thank the anonymous referee for helpful suggestions.

\section{Extended $\calo$-operators and ECYBE}
\mlabel{sec:ecy} We first recall in Section~\mref{ss:cyb} the
definitions of \tto $\calo$-operators, ECYBE and GCYBE. As a
motivation for our study, we also recall their relationship~\mcite{Bai} in the special case of $\calo$-operators
and CYBE. We then establish the relationship between \tto
$\calo$-operators and ECYBE in Section~\mref{ss:ecyb}.

\subsection{$\calo$-operators and CYBE}
\mlabel{ss:cyb} We first recall the classical result that a
skew-symmetric solution of CYBE in a Lie algebra gives an
$\calo$-operator through a duality between tensor
product and linear maps. Not every $\calo$-operator comes from a
solution of CYBE in this way. However, any $\calo$-operator can be
recovered from a solution of CYBE in a larger Lie algebra.

For the rest of the paper, $\bfk$ denotes a field whose
characteristic is not 2, unless otherwise stated.
A Lie algebra is taken to be a Lie algebra over $\bfk$. A tensor product is also taken over $\bfk$.

\subsubsection{From CYBE to $\calo$-operators}

Let $\frak{g}$ be a Lie algebra. For $r=\sum\limits_i a_i\ot b_i\in \frakg^{\ot 2}$,  we use the notations
(in the universal enveloping algebra $U(\frak g)$):
$$
r_{12}=\sum_ia_i\otimes b_i\otimes 1,\quad r_{13}=\sum_{i}a_i\otimes
1\otimes b_i,\quad r_{23}=\sum_i1\otimes a_i\otimes b_i,
$$
and
$$
[r_{12},r_{13}]=\sum_{i,j}[a_i,a_j]\otimes b_i\otimes b_j,\;
[r_{13},r_{23}]=\sum_{i,j}a_i\otimes
a_j\otimes[b_i,b_j],\;
[r_{12},r_{23}]=\sum_{i,j}a_i\otimes
[b_i,a_j]\otimes b_j\,.
$$

If $r\in \frakg^{\ot 2}$ satisfies the {\bf classical Yang-Baxter equation (CYBE)}
\begin{equation}
\textbf{C}(r)\equiv[r_{12},r_{13}]+[r_{12},r_{23}]+[r_{13},r_{23}]=0,
\mlabel{eq:cybe}
\end{equation}
then $r$ is called a {\bf solution of CYBE in $\frakg$}.

The {\bf twisting operator}
$\sigma: \frak{g}^{\ot 2} \rightarrow \frak{g}^{\ot 2}$ is defined by
$$
\sigma (x \otimes y) = y\otimes x,\quad \forall x, y\in \frak{g}.
$$
We call $r=\sum\limits_i a_i\ot b_i\in
\frak{g}^{\ot 2}$ {\bf skew-symmetric} (resp. {\bf symmetric}) if
$r=-\sigma(r)$ (resp. $r=\sigma(r)$).

We recall the following classical result~\mcite{Se} that establishes
the first connection between CYBE and certain linear operators that
had been called {\bf Rota-Baxter operators}~\mcite{Bax,R}
in the context of associative algebras which have found applications to renormalization in quantum field theory and number theory~\mcite{EGK,GZ} recently.
Let $\frakg$  be a Lie algebra with finite dimension over $\bfk$. Let
 $$\TTL:\frakg \ot \frakg \to \Hom(\frakg^*,\frakg) ,\quad  r\mapsto \ttl{r}, \quad \forall r\in \frakg\ot \frakg,$$
be the usual linear isomorphism, namely, for $r=\sum_i u_i\ot v_i\in \frakg^{\ot 2}$, we define
$$ \ttl{r}:\frakg^* \to \frakg, \quad \ttl{r}(a^*)=\sum_i a^*(u_i)v_i, \quad \forall a^*\in
\frakg^*.$$
 In other words,
$$ \langle \ttl{r}(a^*), b^*\rangle = \langle a^* \ot b^*, r\rangle, \quad a^*,b^*\in \frakg^*,$$
where, for a finite dimensional vector space,
$\langle\ , \ \rangle$ denotes the usual pairings $V\ot V^* \to \bfk$ and $V^*\ot V\to \bfk$.

Recall that a bilinear form $B(\ ,\ ):\frakg\ot \frakg \to \bfk$ is called {\bf invariant} if
$$B([x,y],z)=B(x,[y,z]), \quad \forall x, y, z\in
\frakg.$$

\begin{theorem} $($\mcite{Se}$)$
Suppose $\frakg$ has a nondegenerate and
symmetric bilinear form $B(\;,\;):\frakg\ot \frakg \to \bfk$ which is invariant, allowing us to identify $\frakg^*$ with $\frakg$.
 Let $r\in \frakg^{\ot 2}$ be skew-symmetric.
 Then $r$ is a solution  of CYBE if and only if $\ttl{r}:\frak g\rightarrow
 \frak g$ satisfies the {\bf Rota-Baxter equation} $($of weight 0$)$
\begin{equation}
[\ttl{r}(x),\ttl{r}(y)]=\ttl{r}([\ttl{r}(x),y]+[x,\ttl{r}(y)]), \quad \forall x,y\in \frakg.
\mlabel{eq:rb0}
\end{equation}
\mlabel{thm:sts}
\end{theorem}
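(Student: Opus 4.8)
The plan is to establish, once $\frakg^*$ is identified with $\frakg$ via $B$ (so that $\ttl{r}$ becomes an endomorphism of $\frakg$ and $\frakg^{\ot 3}\cong(\frakg^*)^{\ot 3}$), the single pointwise identity
\[
\langle \textbf{C}(r),\, x\ot y\ot z\rangle \;=\; B\bigl([\ttl{r}(x),\ttl{r}(y)]-\ttl{r}([\ttl{r}(x),y]+[x,\ttl{r}(y)]),\, z\bigr), \qquad x,y,z\in\frakg .
\]
Granting this, the theorem follows at once from nondegeneracy of $B$, used twice: the left-hand side vanishes for all $x,y,z$ precisely when $\textbf{C}(r)=0$ in $\frakg^{\ot 3}$, while, fixing $x,y$ and letting $z$ vary, the right-hand side vanishes for all $z$ precisely when $[\ttl{r}(x),\ttl{r}(y)]=\ttl{r}([\ttl{r}(x),y]+[x,\ttl{r}(y)])$. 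Thus \eqref{eq:cybe} holds if and only if \eqref{eq:rb0} holds, and the ``if'' and ``only if'' directions are disposed of simultaneously.

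To prove the displayed identity, write $r=\sum_i u_i\ot v_i$ and record two facts that will be used repeatedly. (i) The contraction $\sum_i B(u_i,x)\,v_i$ equals $\ttl{r}(x)$, whereas, because $r$ is skew-symmetric, the ``opposite'' contraction $\sum_i B(v_i,x)\,u_i$ equals $-\ttl{r}(x)$; as a consequence $\ttl{r}$ is skew-adjoint for $B$, i.e. $B(\ttl{r}(x),y)=-B(x,\ttl{r}(y))$. (ii) Invariance of $B$ allows a Lie bracket to be moved freely from one argument of $B$ to the other. Now expand $[r_{12},r_{13}]$, $[r_{12},r_{23}]$ and $[r_{13},r_{23}]$ by the explicit formulas recalled before the theorem, pair each with $x\ot y\ot z$ via $B\ot B\ot B$, and carry out the inner contractions using (i); one finds that the three terms equal $B(x,[\ttl{r}(y),\ttl{r}(z)])$, $-B(y,[\ttl{r}(x),\ttl{r}(z)])$ and $B(z,[\ttl{r}(x),\ttl{r}(y)])$ respectively. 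Finally, use invariance (ii) together with skew-adjointness to rewrite the first of these as $-B(\ttl{r}([x,\ttl{r}(y)]),z)$, the second as $-B(\ttl{r}([\ttl{r}(x),y]),z)$, and the third as $B([\ttl{r}(x),\ttl{r}(y)],z)$; adding the three yields the identity.

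The whole argument is bookkeeping, and the one place to be careful — the expected main obstacle — is sign discipline: one must keep track of which of the three tensor slots each partial contraction acts on and which element then lands inside which Lie bracket, and one must not conflate $\sum_i B(u_i,-)\,v_i=\ttl{r}$ with $\sum_i B(v_i,-)\,u_i=-\ttl{r}$, since interchanging these is exactly where a spurious sign would enter. Finite-dimensionality of $\frakg$ is what legitimizes the identification $\frakg^{\ot 3}\cong(\frakg^*)^{\ot 3}$ and the detection of vanishing by pairings, and the symmetry and invariance of $B$ are both used essentially in the concluding rearrangement.
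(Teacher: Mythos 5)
Your proposal is correct. The paper does not actually prove this theorem --- it is recalled from Semenov-Tian-Shansky's work with only a citation --- so there is no in-paper argument to compare against; your computation is the standard one (pair $\textbf{C}(r)$ with $x\ot y\ot z$ via $B^{\ot 3}$, use skew-symmetry of $r$ to get skew-adjointness of $\ttl{r}$, and use invariance of $B$ to collect everything against $z$), and I verified that each of the three contractions and the final rearrangement carry the signs you claim.
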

Because of this theorem,
Eq.~(\mref{eq:rb0}) is called the {\bf operator form of CYBE} while
Eq.~(\mref{eq:cybe}) is called the {\bf tensor form of CYBE}.
Without assuming the
existence of a nondegenerate symmetric invariant bilinear form on
$\frakg$, it is known that the following result holds~(\mcite{YKS2}): if the symmetric part $r\in \frakg\ot
\frakg$ is invariant, then $r$ is a solution of the tensor form
of CYBE if and only if $\ttl{r}:\frakg^*\to \frakg$ satisfies
\begin{equation}
[r(a^*),r(b^*)]=r({\rm ad}^*(r(a^*))b^*-{\rm
ad}^*(r(b^*))a^*+[a^*,b^*]_{-}),\quad  \forall
a^*,b^*\in\frak{g}^*,\label{eq:cokuper}
\end{equation}
where $[,]_{-}$ is a Lie bracket on $\frak{g}^*$ defined by
\begin{equation}
[a^*,b^*]_{-}\equiv-{\rm ad}^*((r+r^t)(a^*))b^*,\quad \forall
a^*,b^*\in \frak{g}^*,\label{eq:pmbrac}
\end{equation}
with $r^t$ denoting the transpose of $r$. When $r$ is skew-symmetric, Eq.~(\mref{eq:cokuper}) becomes
\begin{equation}
[\ttl{r}(x),\ttl{r}(y)]
=\ttl{r}(\ad^*\ttl{r}(x)(y)-\ad^*\ttl{r}(y)(x)), \quad \forall
x,y\in \frakg^*. \mlabel{eq:kuper}
\end{equation}
which can be regarded as a generalization of the operator
form~(\mref{eq:rb0}) of CYBE.

There is a further generalization~\mcite{Bo,Ku} of Eq.~(\mref{eq:kuper}).

\begin{defn}
{\rm Let $\frakg$ be a Lie algebra. Let $V=(V,\rho)$ be a
$\frakg$-module, given by a representation $\rho:\frakg\to \frak
g\frak l(V)$ where $\frak g\frak l (V)$ is the Lie algebra on
$\End(V)$. Denote
$$g\cdot v:=\rho(g)(v), \quad \forall g\in \frakg,
v\in V.$$ A linear map $\alpha: V\to \frakg$ is called an {\bf
$\calo$-operator} if
\begin{equation}
[\alpha(x),\alpha(y)]=\alpha(
\alpha(x)\cdot y - \alpha(y)\cdot x), \quad \forall x,y\in V.
\mlabel{eq:oop}
\end{equation}
}
\end{defn}
Thus a skew-symmetric solution $r\in \frakg^{\ot 2}$ of CYBE gives an $\calo$-operator $\ttl{r}:\frakg^* \to \frakg$.

\subsubsection{From $\calo$-operators to CYBE}

In general it is not true that every $\calo$-operator $\alpha:\frakg^* \to \frakg$ comes from a skew-symmetric solution of CYBE in $\frakg$. As we will see next, such an $\alpha$ corresponds to a solution of CYBE in a larger Lie algebra.

The $\frakg$-module $(V,\rho)$ defines a Lie algebra  bracket $[\
,\ ]_\rho$ on $\frakg \oplus V$, called the {\bf semidirect
product} and denoted by $\frakg\ltimes_\rho V$, such
that
\begin{equation}
[g_1+v_1,g_2+v_2]_\rho=[g_1,g_2]+g_1\cdot v_2 -g_2\cdot v_1, \quad
\forall g_1,g_2\in \frakg, v_1, v_2 \in V. \mlabel{eq:sdir}
\end{equation}

Most of the following results are standard. But we want to use the notations throughout the rest of the paper.
\begin{prop}
Let $V$ and $W$ be finite dimensional vector spaces over $\bfk$.
\begin{enumerate}
\item
We have the natural isomorphisms
\begin{eqnarray}
&&\TTL:=\TTL_{V,W}: V\ot W \cong  V^{**} \ot W \cong \Hom(V^*,W),\quad r \mapsto \ttl{r}, \quad  \forall r\in V\ot W, \\
&&\LTT:=\LTT_{\Hom(V,W)}: \Hom(V,W) \cong V^* \ot W, \quad \alpha \mapsto \ltt{\alpha}, \quad \forall \alpha\in \Hom(V,W).
\end{eqnarray}
Thus the maps $\TTL_{V,W}$ and $\LTT_{\Hom(V^*,W)}$ are the inverses of each other.
\item
Define the {\bf twisting operator} by
\begin{equation}
\sigma: V\ot W \to W\ot V, \quad v\ot w \mapsto w\ot v, \quad \forall v\in V, w\in W.
\end{equation}
For $\alpha:V^*\to W$, let $\alpha^{\dual}:W^* \to V^{**}\cong V$ be the
dual map of $\alpha$.

Then for $r\in V\ot W$, we have
\begin{equation}
\widehat{\sigma(r)}=\ttl{r}^{\dual}.
\mlabel{eq:tt}
\end{equation}
\mlabel{it:dual}
\item
We also have the natural injections
\begin{eqnarray}
&\TTT:=\TTT_{V\ot W}:& V\ot W \to (V\oplus W)^{\ot 2}, \notag\\
&& \quad v\ot w \mapsto \widetilde{v\ot w}:=(v,0)\ot (0,w), \quad \forall v\in V, w\in W.
\mlabel{eq:TTTt}
\\
& \TTT: =\TTT_{\Hom(V,W)}:&
\Hom(V,W)\to \Hom(V\oplus W^*, V^*\oplus W), \notag \\
&& \alpha \mapsto \ttt{\alpha}:=\iota_2\circ \alpha \circ p_1, \quad \forall
\alpha \in \Hom(V,W).
\mlabel{eq:TTTl}
\end{eqnarray}

Here for vector spaces $V_i$, $i=1,2$, $\iota_i:V_i\to V_1\oplus V_2$ is the usual inclusion and $p_i: V_1\oplus V_2 \to V_i$ is the usual projection.
\item
We have the following commutative diagram.
\begin{equation}
\xymatrix{ \Hom(V,W) \ar^{\LTT}[rr] \ar_{\TTT}[d]
    & & V^* \ot W \ar^{\TTT}[d] \\
    \Hom(V\oplus W^*, V^*\oplus W) \ar^{\LTT}[rr] && (V^*\oplus W)^{\ot 2}
    }
\mlabel{eq:tttcom}
\end{equation}
\end{enumerate}
\mlabel{pp:tl}
\end{prop}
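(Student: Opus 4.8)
The plan is to dispatch the four parts in turn, each being a routine verification in finite-dimensional linear algebra; the only genuine care required is to keep the canonical identification $V\cong V^{**}$ (and the identifications it induces, such as $(V\oplus W^*)^*\cong V^*\oplus W$) explicit, so that the asserted isomorphisms and the square in~(d) are literal equalities rather than equalities only after inserting evaluation maps.

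For part~(a), fix bases $\{e_i\}$ of $V$ and $\{f_j\}$ of $W$ with dual bases $\{e_i^*\}$, $\{f_j^*\}$. First I would check that $r\mapsto\ttl{r}$ is well defined independently of the chosen expression $r=\sum_i u_i\ot v_i$: for a fixed $a^*\in V^*$ the element $\sum_i a^*(u_i)v_i$ is the image of $r$ under the well-defined linear map $a^*\ot\id_W\colon V\ot W\to W$, hence depends only on $r$. The map $\LTT_{\Hom(V,W)}$ is the standard (basis-free) isomorphism $\Hom(V,W)\cong V^*\ot W$, which on the chosen basis is $\alpha\mapsto\sum_i e_i^*\ot\alpha(e_i)$, and $\TTL_{V,W}$ is the composite $V\ot W\cong V^{**}\ot W\cong\Hom(V^*,W)$ described by $r\mapsto\ttl{r}$; linearity of both is immediate. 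Using $a^*=\sum_i a^*(e_i)e_i^*$, I then evaluate $\TTL_{V,W}\circ\LTT_{\Hom(V^*,W)}$ and $\LTT_{\Hom(V^*,W)}\circ\TTL_{V,W}$ on the generators above and find both composites equal to the identity; since $\LTT$ applied to $\Hom(V^*,W)$ outputs $(V^*)^*\ot W=V^{**}\ot W$, which is $V\ot W$ under evaluation, this proves the isomorphism statements together with the final sentence of~(a).

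For part~(b), with $r=\sum_i u_i\ot v_i$ I would test both sides on functionals. By definition of the dual map, $\langle\ttl{r}^{\dual}(b^*),a^*\rangle=\langle b^*,\ttl{r}(a^*)\rangle=\sum_i a^*(u_i)\,b^*(v_i)$ for all $a^*\in V^*$ and $b^*\in W^*$, whereas $\sigma(r)=\sum_i v_i\ot u_i$ gives $\langle\widehat{\sigma(r)}(b^*),a^*\rangle=\sum_i b^*(v_i)\,a^*(u_i)$. These agree for all $a^*,b^*$, so $\widehat{\sigma(r)}=\ttl{r}^{\dual}$ once $V^{**}$ is identified with $V$. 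For part~(c), linearity of the two $\TTT$ maps is clear; injectivity of $\TTT_{V\ot W}$ follows from the decomposition $(V\oplus W)^{\ot 2}=(V\ot V)\oplus(V\ot W)\oplus(W\ot V)\oplus(W\ot W)$, in which $\TTT_{V\ot W}$ is the inclusion of the $V\ot W$ summand, and injectivity of the Hom version follows from $p_2\circ(\iota_2\circ\alpha\circ p_1)\circ\iota_1=\alpha$. For the square in~(d) I would evaluate both composites on an arbitrary $\alpha\in\Hom(V,W)$, using the induced basis $\{(e_i,0)\}\cup\{(0,f_j^*)\}$ of $V\oplus W^*$ whose dual basis inside $(V\oplus W^*)^*=V^*\oplus W^{**}=V^*\oplus W$ is $\{(e_i^*,0)\}\cup\{(0,f_j)\}$: going along the top and then down sends $\alpha$ to $\sum_i e_i^*\ot\alpha(e_i)$ and then to $\sum_i(e_i^*,0)\ot(0,\alpha(e_i))$; going down and then along the bottom sends $\alpha$ to $\iota_2\circ\alpha\circ p_1$, which annihilates each $(0,f_j^*)$ and sends $(e_i,0)$ to $(0,\alpha(e_i))$, hence maps under $\LTT_{\Hom(V\oplus W^*,V^*\oplus W)}$ again to $\sum_i(e_i^*,0)\ot(0,\alpha(e_i))$. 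The two outputs coincide, so the square commutes.

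The whole argument is mechanical once a consistent choice of bases is fixed; the single point deserving attention throughout is the bookkeeping of the canonical isomorphisms $V\cong V^{**}$ and $(V\oplus W^*)^*\cong V^*\oplus W$, which is what makes the two $\TTL$-maps literal inverses of the corresponding $\LTT$-maps and the diagram commute on the nose. I do not foresee any real obstacle beyond this bookkeeping.
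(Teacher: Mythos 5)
Your proposal is correct and, for Item~(\mref{it:dual}) --- the only part the paper actually proves, the others being dismissed as standard --- it follows essentially the same route as the paper: expand $r=\sum_i u_i\ot v_i$, pair both sides against dual basis functionals, and observe the two bilinear pairings agree. Your additional verifications of Items (a), (c) and (d) (inverse composites via $a^*=\sum_i a^*(e_i)e_i^*$, injectivity via the direct-sum decomposition and $p_2\circ\ttt{\alpha}\circ\iota_1=\alpha$, and the basis computation of both composites in the square) are all sound and fill in exactly the routine bookkeeping the paper omits.
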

\begin{proof}
We just verify Item~(\mref{it:dual}). Let $\{e_1,...e_n\}$ be a basis of
$V$ and $\{f_1,...,f_m\}$ be a basis of $W$. Let
$\{e_1^*,...,e_n^*\}$ and $\{f_1^*,...,f_m^*\}$ be the corresponding
dual bases. Then for $\alpha=\sum_iv_i\otimes w_i\in V\otimes W$, where $v_i\in V,w_i\in W$, $\ttl{\alpha}$ is give by
$\ttl{\alpha}(e_k^*)=\sum_i\langle v_i,e_k^*\rangle w_i$, where
$1\leq k\leq n$. For any $1\leq s\leq m$, by definition we have
$$\sum_i\langle v_i,e_k^*\rangle\langle
w_i,f_s^*\rangle=\langle\ttl{\alpha}(e_k^*),f_s^*\rangle=\langle
e_k^*, (\ttl{\alpha})^{\dual}(f_s^*)\rangle.$$
Thus
$(\ttl{\alpha})^{\dual}(f_s^*)=\sum_iv_i\langle
w_i,f_s^*\rangle=\widehat{\sigma(\alpha)}(f_s^*)$, as required.
\end{proof}

Let $V$ be a vector space. We also use the following
notations:
\begin{equation}
r_\pm=(r\pm\sigma(r))/2, \quad \alpha_\pm:=(\alpha\pm \alpha^{\dual})/2,
\quad \forall r\in V^{\ot 2}, \alpha\in \Hom(V^*,V).
\mlabel{eq:alphabeta}
\end{equation}
Note that for any $r\in V\otimes V$, $(\ttl{r})_\pm =
\widehat{r_\pm}$ by Eq.~(\mref{eq:tt}). So the notation
$\ttl{r}_\pm$ is well-defined.

For a representation $\rho:\frak g\to \gl(V)$ of a Lie algebra $\frakg$, let $\rho^*:\frakg \to \gl(V^*)$ be the dual representation. Then
$\frakg\ltimes_{\rho^*} V^*$ is defined. Suppose that $\frakg$ and
$V$ are finite dimensional. Then using Proposition~\mref{pp:tl}, we
have the natural embedding
$$ \xymatrix{ \Hom(V,\frakg)\ \ \ar^{\LTT}@{>>->>}[r] & V^* \ot \frakg  \ \ \ar^{\TTT\ \ \ }@{>>->}[r] & (\frakg\ltimes_{\rho^*}V^*)^{\ot 2}, & \alpha \ar@{|->}[r] &
    \ltt{\alpha} \ar@{|->}[r] & \ttt{\ltt{\alpha}}}, \quad \alpha\in \Hom(V,\frakg).$$
The following result identifies any $\calo$-operator
as a solution of CYBE in a suitable Lie algebra.

\begin{theorem} (\mcite{Bai})
A linear map $\alpha: V\to \frakg$ is an $\calo$-operator if and only  if
$\ttt{\ltt{\alpha}}_- =(\ttt{\ltt{\alpha}}-\sigma(\ttt{\ltt{\alpha}}))/2$
is a skew-symmetric solution of CYBE in $\frakg\ltimes_{\rho^*}
V^*$. \mlabel{thm:bai}
\end{theorem}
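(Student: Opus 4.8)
The plan is to compute both sides explicitly in a basis and match them, using the commutative diagram~(\mref{eq:tttcom}) to pass freely between the operator $\alpha$, the tensor $\ltt{\alpha}\in V^*\ot\frakg$, and its image $\ttt{\ltt{\alpha}}\in (\frakg\ltimes_{\rho^*}V^*)^{\ot 2}$. Fix a basis $\{e_1,\dots,e_n\}$ of $V$ with dual basis $\{e_1^*,\dots,e_n^*\}$ of $V^*$, and write $\alpha(e_i)=\sum_j a_{ij}\, g_j$ for a basis $\{g_j\}$ of $\frakg$; then $\ltt{\alpha}=\sum_{i}e_i^*\ot \alpha(e_i)$ and, by~(\mref{eq:TTTt}), $\ttt{\ltt{\alpha}}=\sum_i (e_i^*,0)\ot(0,\alpha(e_i))$ inside $(\frakg\oplus V^*)^{\ot 2}$, where I abusively regard $e_i^*\in V^*\subseteq \frakg\ltimes_{\rho^*}V^*$. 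Set $r:=\ttt{\ltt{\alpha}}_-=\tfrac12(\ttt{\ltt{\alpha}}-\sigma(\ttt{\ltt{\alpha}}))$; this is skew-symmetric by construction, so what must be shown is the equivalence $\textbf{C}(r)=0\iff \alpha$ is an $\calo$-operator.

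The key computational step is to evaluate $\textbf{C}(r)=[r_{12},r_{13}]+[r_{12},r_{23}]+[r_{13},r_{23}]$ in $\hat{\frakg}:=\frakg\ltimes_{\rho^*}V^*$ using the semidirect-product bracket~(\mref{eq:sdir}). Writing $r=\tfrac12\sum_i\big((e_i^*,0)\ot(0,\alpha(e_i))-(0,\alpha(e_i))\ot(e_i^*,0)\big)$, the brackets in $\textbf{C}(r)$ break into pieces of three types: $[\frakg,\frakg]$-terms (giving the Lie bracket of $\frakg$), mixed $[\frakg,V^*]$-terms (giving the action $\rho^*$, hence $\ad^*$ on $V^*$), and $[V^*,V^*]$-terms (which vanish, since $V^*$ is an abelian ideal). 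Collecting the surviving terms, the vanishing of $\textbf{C}(r)$ will reduce to an identity in $\hat\frakg^{\ot 3}$ whose $\frakg\ot\frakg\ot V^*$-component encodes exactly~(\mref{eq:kuper}), i.e.\ $[\ttl{\ltt{\alpha}}(x),\ttl{\ltt{\alpha}}(y)]=\ttl{\ltt{\alpha}}(\ad^*\ttl{\ltt{\alpha}}(x)(y)-\ad^*\ttl{\ltt{\alpha}}(y)(x))$ for $x,y\in V^*$; but by~(\mref{eq:tttcom}) and~(\mref{eq:TTTl}), $\ttl{\ltt{\alpha}}$ applied to $V^*$ is just $\alpha$ composed with the identifications, and $\ad^*=\rho^*$ is precisely the action $g\cdot(-)$ dualized, so~(\mref{eq:kuper}) on $\hat\frakg$ is literally~(\mref{eq:oop}) for $\alpha:V\to\frakg$. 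The remaining graded components of $\textbf{C}(r)$ (those landing in $\frakg\ot V^*\ot V^*$, etc.) I expect to vanish automatically or to be the $\sigma$-symmetric reflections of the main one, using skew-symmetry of $r$ and invariance of the pairing; this bookkeeping is routine but must be done carefully.

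The main obstacle is organizing the grading bookkeeping: since $r$ is a sum of a term in $V^*\ot\frakg$ and a term in $\frakg\ot V^*$, each of $r_{12},r_{13},r_{23}$ has two homogeneous pieces, so $\textbf{C}(r)$ a priori expands into many brackets living in different summands of $(\frakg\oplus V^*)^{\ot3}$, and one must check that everything outside the $\frakg\ot\frakg\ot V^*$-type slots cancels — either identically (from $[V^*,V^*]=0$) or in conjugate pairs under the obvious $S_3$-type symmetry of the three tensor legs. Once the reduction to~(\mref{eq:kuper}) on $\hat\frakg$ is in place, the equivalence with the $\calo$-operator condition~(\mref{eq:oop}) is immediate from the definitions of $\rho^*$, $\LTT$, $\TTT$ and the commutative diagram~(\mref{eq:tttcom}), so I would present that identification cleanly and relegate the cancellation of the auxiliary components to a short verification.
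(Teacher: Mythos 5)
Your overall strategy---expanding $\textbf{C}(r)$ directly in $\hat{\frakg}=\frakg\ltimes_{\rho^*}V^*$ and sorting terms according to which tensor legs lie in $\frakg$ and which in $V^*$---is viable; it is essentially the original argument of \cite{Bai} and differs from the route this paper takes, which is to prove the operator-level equivalence first (Theorem~\mref{thm:skewgm} with $\beta=0$: $\alpha$ satisfies~(\mref{eq:oop}) iff $\ttt{\alpha}_-$ satisfies~(\mref{eq:kuper}) on $\hat{\frakg}$) and then apply the tensor/operator correspondence~(\mref{eq:kuper}) on the enlarged algebra; see the proof of Corollary~\mref{co:motoaybe1}(\mref{it:motoaybe5}).

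However, your central reduction is misstated. Each summand of $r=\ttt{\ltt{\alpha}}_-$ pairs one $V^*$-leg with one $\frakg$-leg, and in $\hat{\frakg}$ one has $[V^*,V^*]=0$ and $[\frakg,V^*]\subseteq V^*$; consequently every surviving term of $[r_{12},r_{13}]+[r_{12},r_{23}]+[r_{13},r_{23}]$ has exactly \emph{one} leg in $\frakg$ and two in $V^*$. The component you single out, of type $\frakg\ot\frakg\ot V^*$, receives no contribution whatsoever, so it cannot ``encode~(\mref{eq:kuper})''; the content of $\textbf{C}(r)=0$ sits instead in the slots $\frakg\ot V^*\ot V^*$, $V^*\ot\frakg\ot V^*$ and $V^*\ot V^*\ot\frakg$. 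Carrying out your expansion with $r=\tfrac12\sum_i\big(e_i^*\ot\alpha(e_i)-\alpha(e_i)\ot e_i^*\big)$, the $\frakg\ot V^*\ot V^*$ part equals
$\tfrac14\sum_{i,j}\big([\alpha(e_i),\alpha(e_j)]-\alpha(\alpha(e_i)\cdot e_j-\alpha(e_j)\cdot e_i)\big)\ot e_i^*\ot e_j^*$,
whose vanishing is exactly~(\mref{eq:oop}); the other two components reduce to the same identity (directly, or via the cyclic invariance of $\textbf{C}(r)$ for skew-symmetric $r$), so none of them ``vanish automatically.'' Finally, your closing claim that ``(\mref{eq:kuper}) on $\hat{\frakg}$ is literally~(\mref{eq:oop})'' is too quick: identity~(\mref{eq:kuper}) for $\ttt{\alpha}_-$ must be tested on all pairs from $\hat{\frakg}^*\cong V\oplus\frakg^*$, and only the $V\times V$ instances give~(\mref{eq:oop}) on the nose; the mixed $V\times\frakg^*$ and the $\frakg^*\times\frakg^*$ instances are the dualized identities verified at length in the proof of Theorem~\mref{thm:skewgm} and do require an argument. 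With the grading corrected and that verification supplied, your proof goes through.
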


\subsection{From extended CYBE to extended $\calo$-operators}
\mlabel{ss:ecyb}
Recently, the concepts of (the tensor form of) CYBE and $\calo$-operators have been generalized, and the connection from CYBE to $\calo$-operators has been generalized to this context.

\subsubsection{Extended CYBE}

For any $r=\sum\limits_ia_i\otimes b_i\in \frak
g\otimes \frak g$, we set
$$r_{21}=\sum_ib_i\otimes a_i\otimes 1,\quad r_{32}=\sum_i1\otimes
b_i\otimes a_i,\quad r_{31}=\sum_ib_i\otimes 1\otimes a_i.$$
Moreover, we set
\begin{equation}
 [(a_1\ot a_2\ot a_3), (b_1\ot b_2\ot b_3)] =
  [a_1, b_1]\ot [a_2, b_2] \ot [a_3, b_3], \quad \forall\, a_i,b_i\in \frakg, i=1,2,3.
\notag 
\end{equation}

\begin{defn}
{\rm Let $\frak{g}$ be a Lie algebra. Fix an $\epsilon\in
\bfk$. The equation
\begin{equation}
[r_{12},r_{13}]+[r_{12},r_{23}]+[r_{13},r_{23}]=\epsilon
[(r_{13}+r_{31}),(r_{23}+r_{32})] \mlabel{eq:ecybe}
\end{equation}
is called the {\bf \tte classical Yang-Baxter equation of \ewt
$\epsilon$} (or {\bf ECYBE of \ewt $\epsilon$} in short). }
\mlabel{de:ecybe}
\end{defn}

\subsubsection{Extended $\calo$-operators}

Our generalization of an $\calo$-operator was inspired by two
developments. On one hand, since an $\calo$-operator is a natural
generalization of a Rota-Baxter operator of weight 0, it is
desirable to define an $\calo$-operator of non-zero weight that
generalizes a Rota-Baxter of non-zero weight that was first defined
for associative algebras. On the other hand, Semenov-Tian-Shansky
\cite{Se} introduced the notion of {\bf modified Yang-Baxter
equation}
\begin{equation}
[R(x),R(y)]-R([R(x),y]+[x,R(y)])=-[x,y], \mlabel{eq:mbe}
\end{equation}
where $R:\frakg\rightarrow \frakg$ is a linear operator.  Moreover,
a {\bf Baxter Lie algebra} was introduced as a Lie algebra with a
linear operator $R$ satisfying the modified Yang-Baxter equation
\cite{BV1,Bo}.

These developments motivated us to give a framework of $\calo$-operators with extensions to uniformly treat these generalizations, and to generalize Theorem~\mref{thm:bai} to such extended $\calo$-operators.

We first introduce the basic Lie algebra setup.
\begin{defn}
 {\rm
\begin{enumerate}
\item Let $(\frakg,[\,,\,]_\frakg)$, or simply $\frakg$, denote a Lie algebra $\frakg$ with Lie bracket $[\,,\,]_\frakg$.
\item
For a Lie algebra $\frakb$, let ${\rm Der}_{\bfk}\frakb$ denote the Lie algebra of derivations of $\frakb$.
\item
Let $\fraka$ be a Lie algebra. An {\bf $\fraka$-Lie algebra} is a
triple $(\frakb,[\,,\,]_\frakb,\pi)$ consisting of a Lie algebra
$(\frakb,[\,,\,]_\frakb)$ and a Lie algebra homomorphism $\pi:\fraka
\to {\rm Der}_{\bfk}\frakb$. To
simplify the notation, we also let $(\frakb,\pi)$ or simply $\frakb$
denote $(\frakb,[\,,\,]_\frakb,\pi)$. \item
 Let $\fraka$ be a Lie algebra and let $(\frakg,\pi)$ be an $\fraka$-Lie algebra. Let $a\cdot b$ denote $\pi(a)b$ for $a\in \fraka$ and $b\in \frakg$.
\end{enumerate}
}
\mlabel{de:semidi}
\end{defn}

We recall a well-known result in Lie theory which will be
used throughout this paper.
\begin{prop}~\mcite{Kna}
 Let $\mathfrak{a}$ be a Lie algebra and let $(\mathfrak{b},\pi)$ be an $\fraka$-Lie algebra. Then there exists a unique Lie
algebra structure on the vector space direct sum
$\mathfrak{g}=\mathfrak{a}\oplus\mathfrak{b}$ retaining the old brackets in $\mathfrak{a}$ and $\mathfrak{b}$ and satisfying
$[x,y]=\pi(x)y$ for $x\in\mathfrak{a}$ and $y\in\mathfrak{b}$.
\mlabel{pp:semidi}
\end{prop}

\begin{remark}
 {\rm  The Lie algebra $\mathfrak{g}$ in the above
proposition is called the {\bf semidirect product} of $\mathfrak{a}$
and $\mathfrak{b}$, and we write it as
$\mathfrak{g}=\mathfrak{a}\mltimes_{\pi}\mathfrak{b}$. When the Lie
bracket in $\mathfrak{b}$ happens to be trivial, i.e., it is a
vector space, then we recover the usual semidirect product in Eq.~(\mref{eq:sdir}).}
\end{remark}

\begin{defn}
{\rm Let $\frak{g}$ be a Lie algebra.
\begin{enumerate}
\item
 Let $\kappa\in\bfk$ and
let $(V,\rho)$ be a $\frak{g}$-module. A linear map
$\beta:V\to\frak{g}$ is called an {\bf antisymmetric
$\frak{g}$-module homomorphism of \ewt $\kappa$} if
\begin{eqnarray}
& &\kappa\beta(x)\cdot y+\kappa\beta(y)\cdot x=0 ,\mlabel{eq:thantisymmetry}\\
& & \kappa\beta(\xi\cdot
x)=\kappa[\xi,\beta(x)]_{\frak{g}},\quad \forall x,y\in V, \xi\in\frak{g}.
\mlabel{eq:thginvariance}
\end{eqnarray}

\item
Let $\kappa,\mu\in\bfk$ and let $(\frak{k},\pi)$ be a $\frak{g}$-Lie
algebra. A linear map $\beta:\frak{k}\rightarrow \frak{g}$ is called an
{\bf antisymmetric $\frak{g}$-module homomorphism of \ewt
$(\kappa,\mu)$} if $\beta$ satisfies Eq.~(\mref{eq:thantisymmetry}),
Eq.~(\mref{eq:thginvariance}) and the following equation:
\begin{equation}
\mu\beta([x,y]_{\frak{k}})\cdot z=\mu[\beta(x)\cdot
y,z]_{\frak{k}},\quad \forall x,y,z\in\frak{k}. \mlabel{eq:thequivalence}
\end{equation}
 \end{enumerate}
}
\end{defn}
\begin{defn}
{\rm  Let $\frak{g}$ be a Lie algebra and let $(\frak{k},\pi)$ be a
$\frak{g}$-Lie algebra.
\begin{enumerate}
\item
Let $\lambda,\kappa,\mu\in\bfk$. Fix an antisymmetric
$\frak{g}$-module homomorphism $\beta:V\to \frak{g}$ of \ewt
$(\kappa,\mu)$. A linear map $\alpha:\frak{k}\to \frak{g}$ is called an
{\bf extended $\calo$-operator of weight $\lambda$ with extension
$\beta$ of \ewt $(\kappa,\mu)$} if:
\begin{equation}
[\alpha(x),\alpha(y)]_{\frak{g}}-\alpha(\alpha(x)\cdot y-\alpha(y)\cdot
x+\lambda[x,y]_{\frak{k}})=\kappa[\beta(x),\beta(y)]_{\frak{g}} +\mu\beta([x,y]_{\frak{k}}),\;\;\forall
x,y\in\frak{k}. \mlabel{eq:gmcybe}
\end{equation}
\item
We also let $(\alpha,\beta)$ denote an extended $\calo$-operator with
extension $\beta$.
\item
When $(V,\rho)$ is a $\frak{g}$-module, we regard $(V,\rho)$ as a
$\frak{g}$-Lie algebra with the trivial bracket. Then $\lambda,\mu$ are
irrelevant. We then call the pair {\bf $(\alpha,\beta)$ an extended
$\calo$-operator with extension $\beta$ of \ewt $\kappa$}.
\end{enumerate}
}
\end{defn}

\begin{defn}
{\rm Let $\frak{g}$ be a Lie algebra and $(\frak{k},\pi)$ be a
$\frak{g}$-Lie algebra. Then $\alpha:\frak{k}\to\frak{g}$ is
called an {\bf $\calo$-operator of weight $\lambda\in\bfk$} if it satisfies
\begin{equation}
[\alpha(x),\alpha(y)]_{\frak{g}}=\alpha\big(\alpha(x)\cdot y-\alpha(y)\cdot
x+\lambda[x,y]_{\frak{k}}\big),\;\;\forall x,y\in\frak{k}.
\mlabel{eq:gcybe}
\end{equation}
When $(\frak{k},\pi)=(\frak{g},{\rm ad})$, Eq.~(\mref{eq:gcybe})
takes the following form:
\begin{equation}
[\alpha(x),\alpha(y)]_{\frak{g}}=\alpha([\alpha(x),
y]_{\frak{g}}+[x,\alpha(y)]_{\frak{g}} +\lambda[x,y]_{\frak{g}}),\;\;\forall
x,y\in\frak{g}. \mlabel{eq:rotabaxter}
\end{equation}
A linear endomorphism $\alpha:\frak{g}\to\frak{g}$ satisfying
Eq.~(\mref{eq:rotabaxter}) is called a {\bf Rota-Baxter operator of
weight $\lambda$}~\mcite{Bax,EGK,GZ,R} (in the Lie algebra context).}
\end{defn}

\subsubsection{From extended CYBE to extended $\calo$-operators}
We next generalize Theorem~\mref{thm:sts}.
\begin{lemma}{\rm (\mcite{BGN1})}
Let $\frak{g}$ be a Lie algebra with finite $\bfk$-dimension
and $r\in \frak{g}\otimes\frak{g}$ be symmetric. Then the following
conditions are equivalent.
\begin{enumerate}
\item
 $r\in\frak{g}\otimes\frak{g}$ is {\bf invariant}, that is,
$({\rm ad}(x)\otimes \id+\id\otimes{\rm ad}(x))r=0, \forall x\in\frak{g}$;\mlabel{it:betainvariant}
\item
$\ttl{r}:\frak{g}^*\rightarrow\frak{g}$ is {\bf antisymmetric}, that
is, ${\rm ad}^*(\ttl{r}(a^*))b^*+{\rm ad}^*(\ttl{r}(b^*))a^*=0, \forall a^*,b^*\in\frak{g}^*$;\mlabel{it:betaantisymmetry}
\item
$\ttl{r}:\frak{g}^*\rightarrow\frak{g}$ is {\bf $\frakg$-invariant},
that is, $\ttl{r}({\rm ad}^*(x)a^*)=[x,\ttl{r}(a^*)], \forall x\in\frak{g}, a^*\in\frak{g}^*$.\mlabel{it:betaginvariant}
\end{enumerate} \mlabel{le:symmetry}
\end{lemma}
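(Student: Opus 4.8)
The plan is to prove the equivalences by translating each statement across the isomorphism $\TTL:\frakg\ot\frakg\to\Hom(\frakg^*,\frakg)$, $r\mapsto\ttl{r}$, unwinding the definitions of the relevant operators in terms of the pairing. Throughout I work with a basis $\{e_i\}$ of $\frakg$, dual basis $\{e_i^*\}$ of $\frakg^*$, and write $r=\sum_i u_i\ot v_i$, so that $\ttl{r}(a^*)=\sum_i a^*(u_i)\,v_i=\sum_i\langle u_i,a^*\rangle v_i$.

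First I would prove the equivalence of (\mref{it:betaantisymmetry}) and (\mref{it:betaginvariant}). The equation $\ttl{r}(\ad^*(x)a^*)=[x,\ttl{r}(a^*)]$ pairs with an arbitrary $b^*\in\frakg^*$ to give $\langle \ttl{r}(\ad^*(x)a^*),b^*\rangle=\langle[x,\ttl{r}(a^*)],b^*\rangle$. Using $\langle\ttl{r}(c^*),b^*\rangle=\langle c^*\ot b^*,r\rangle$ and the definition of $\ad^*$ (namely $\langle\ad^*(x)a^*,y\rangle=-\langle a^*,[x,y]\rangle=\langle a^*,\ad^*(-x)^{\text{ordinary}} y\rangle$; I will fix conventions so that $\langle \ad^*(x)\xi,v\rangle = -\langle\xi,[x,v]\rangle$), the left side becomes $-\langle a^*\ot b^*,(\ad(x)\ot\id)r\rangle$ and, after moving the $\ad^*$ on the $b^*$-slot, $\langle[x,\ttl{r}(a^*)],b^*\rangle=-\langle a^*\ot b^*,(\id\ot\ad(x))r\rangle$ — wait, I must be careful with signs; the correct bookkeeping gives that (\mref{it:betaginvariant}) holds for all $x,a^*,b^*$ iff $\langle a^*\ot b^*,(\ad(x)\ot\id+\id\ot\ad(x))r\rangle=0$ for all $x,a^*,b^*$, which is exactly (\mref{it:betainvariant}). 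Meanwhile (\mref{it:betaantisymmetry}) says $\ad^*(\ttl{r}(a^*))b^*+\ad^*(\ttl{r}(b^*))a^*=0$; pairing with $x\in\frakg$ and using the $\ad^*$ convention turns each term into a pairing of $r$ against $x$, and the symmetry $r=\sigma(r)$ is what makes the two resulting expressions combine into $\langle a^*\ot b^*,(\ad(x)\ot\id+\id\ot\ad(x))r\rangle$ up to sign — so (\mref{it:betaantisymmetry}) is also equivalent to (\mref{it:betainvariant}). Thus all three route through the single tensor identity $(\ad(x)\ot\id+\id\ot\ad(x))r=0$.

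Concretely, I would organize the proof as: (i) fix the convention $\langle\ad^*(x)a^*,v\rangle=-\langle a^*,[x,v]\rangle$; (ii) show (\mref{it:betainvariant})$\Leftrightarrow$(\mref{it:betaginvariant}) by pairing (\mref{it:betaginvariant}) with an arbitrary $b^*$ and rewriting both sides as pairings of $a^*\ot b^*$ against a tensor built from $r$, exploiting that the map $r\mapsto\ttl r$ is a bijection and that $b^*$ ranges over all of $\frakg^*$; (iii) show (\mref{it:betainvariant})$\Leftrightarrow$(\mref{it:betaantisymmetry}) similarly by pairing (\mref{it:betaantisymmetry}) with an arbitrary $x\in\frakg$, where the hypothesis $r=\sigma(r)$ (symmetry) is essential to identify $\ad^*(\ttl r(a^*))b^*$ paired with $x$ as the ``second-slot'' part of $(\ad(x)\ot\id+\id\ot\ad(x))r$. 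Since all finite-dimensional pairings are perfect and bases are available, each ``for all'' quantifier can be stripped off cleanly.

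The main obstacle I anticipate is purely bookkeeping: getting the sign and slot conventions for $\ad^*$, for the identification $\frakg^{**}\cong\frakg$, and for $\sigma$ consistent so that the three reformulations land on literally the same tensor equation. In particular, the step (\mref{it:betaantisymmetry})$\Leftrightarrow$(\mref{it:betainvariant}) requires symmetry of $r$ in an essential way (for a general $r$, antisymmetry of $\ttl r$ corresponds to invariance only of the symmetric part $r_+$), so I would make sure to invoke $r=\sigma(r)$ exactly at that point and nowhere spuriously. Once the conventions are pinned down, each implication is a one-line computation with the pairing, and the equivalences follow because $\TTL$ is an isomorphism and the pairings $\frakg\ot\frakg^*\to\bfk$, $\frakg^*\ot\frakg\to\bfk$ are nondegenerate.
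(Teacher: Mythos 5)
The paper states this lemma without proof, quoting it from~\mcite{BGN1}, so there is no in-paper argument to compare against; judged on its own, your proposal is correct. Writing $r=\sum_i u_i\ot v_i$ and fixing $\langle \ad^*(x)a^*,v\rangle=-\langle a^*,[x,v]\rangle$, one gets $\langle \ttl{r}(\ad^*(x)a^*),b^*\rangle=-\langle a^*\ot b^*,(\ad(x)\ot\id)r\rangle$ and $\langle [x,\ttl{r}(a^*)],b^*\rangle=+\langle a^*\ot b^*,(\id\ot\ad(x))r\rangle$ (note the plus sign here, correcting the slip you flagged yourself), so (c)$\Leftrightarrow$(a) falls out without using symmetry at all; and $\langle \ad^*(\ttl{r}(a^*))b^*,x\rangle=\langle a^*\ot b^*,(\id\ot\ad(x))r\rangle$ together with $\langle \ad^*(\ttl{r}(b^*))a^*,x\rangle=\langle a^*\ot b^*,(\ad(x)\ot\id)\sigma(r)\rangle$ shows that (b)$\Leftrightarrow$(a) is exactly where $r=\sigma(r)$ is needed, as you correctly anticipated. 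The only work remaining is to carry out the bookkeeping you describe, which goes through without obstruction.
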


The following result characterizes solutions of ECYBE in a Lie algebra $\frakg$ in terms of \tto $\calo$-operators on $\frakg$.

\begin{theorem}~$($\mcite{BGN1}$)$
Let $\frakg$ be a Lie algebra with finite $\bfk$-dimension,
let $r\in \frakg\otimes \frakg$ and let $\ttl{r}: \frakg^*\to
\frakg$ be the corresponding linear map. Define $\ttl{r}_\pm$ by
Eq.~$($\mref{eq:alphabeta}$)$. Suppose that $r_+$ is invariant. Then
$r$ is a solution of ECYBE of \ewt $\frac{\kappa+1}{4}$:
\begin{equation}
[r_{12},r_{13}]+[r_{12},r_{23}]+[r_{13},r_{23}]=\frac{\kappa+1}{4}[(r_{13}+r_{31}),
(r_{23}+r_{32})] \notag
\end{equation}
if and only if $\ttl{r}_-$ is an \tto $\calo$-operator with \bop
$\ttl{r}_+$ of \bwt $\kappa$, i.e., the following equation holds:
\begin{equation}
[\ttl{r}_-(a^*),\ttl{r}_-(b^*)]-\ttl{r}_-({\rm ad}^*(\ttl{r}_-(a^*))b^*-{\rm
ad}^*(\ttl{r}_-(b^*))a^*)=\kappa[\ttl{r}_+(a^*),\ttl{r}_+(b^*)], \quad \forall
a^*,b^*\in\frak{g}^*.\label{eq:kmcybe}
\end{equation}
\mlabel{thm:cybea}
\end{theorem}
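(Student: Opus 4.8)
The plan is to reduce Theorem~\mref{thm:cybea} to the classical case (Theorem~\mref{thm:sts}/Eq.~(\mref{eq:cokuper})) by expanding both sides of ECYBE of weight $\frac{\kappa+1}{4}$ into symmetric and skew-symmetric components relative to the twisting operator $\sigma$, using $r=r_-+r_+$. First I would write $\textbf{C}(r)=[r_{12},r_{13}]+[r_{12},r_{23}]+[r_{13},r_{23}]$ and expand $\textbf{C}(r_-+r_+)$ multilinearly into terms involving $r_\pm$ in each of the three tensor slots. The key structural fact to exploit is that $r_+$ is invariant, i.e.\ by Lemma~\mref{le:symmetry} the map $\ttl{r}_+:\frakg^*\to\frakg$ is both antisymmetric and $\frakg$-invariant; translating invariance back to the tensor side, $[(r_+)_{12},(r_+)_{13}]$, $[(r_+)_{12},(r_+)_{23}]$, $[(r_+)_{13},(r_+)_{23}]$ and the various mixed brackets satisfy relations that let many terms cancel or combine. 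This is the standard computation behind the passage from tensor CYBE-type equations to operator equations, carried out here in the presence of the symmetric part.

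Next I would identify, on the operator side, the expression $[\ttl{r}(a^*),\ttl{r}(b^*)]-\ttl{r}(\ad^*(\ttl{r}(a^*))b^*-\ad^*(\ttl{r}(b^*))a^*)$ with a pairing of $\textbf{C}(r)$ against $a^*\ot b^*\ot c^*$ in the appropriate slots, exactly as in the proof of Eq.~(\mref{eq:cokuper}) from \mcite{YKS2}; the point is that $\textbf{C}(r)=0$ as a tensor is equivalent to that operator identity holding for $\ttl{r}$ (when the symmetric part is invariant). Here, instead, $\textbf{C}(r)$ equals the right-hand side $\frac{\kappa+1}{4}[(r_{13}+r_{31}),(r_{23}+r_{32})]$, so the operator identity acquires a correction term. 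I would compute that $[(r_{13}+r_{31}),(r_{23}+r_{32})]$, upon pairing against $a^*\ot b^*\ot c^*$ and using $r_{13}+r_{31}=2(r_+)_{13}$, $r_{23}+r_{32}=2(r_+)_{23}$, produces precisely $4[\ttl{r}_+(a^*),\ttl{r}_+(b^*)]$ up to the $\ad^*$-correction coming from the bracket in the third slot; the factor $\frac{\kappa+1}{4}\cdot 4=\kappa+1$ then needs to be reconciled with the coefficient $\kappa$ appearing in Eq.~(\mref{eq:kmcybe}), the discrepancy of $1$ being absorbed by a term that is already present on the left-hand side when one separates the $\ttl{r}_-$ part from the $\ttl{r}_+$ part. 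Concretely, I expect the decomposition $\ttl{r}=\ttl{r}_-+\ttl{r}_+$ substituted into the weight-zero operator identity for $\ttl{r}$ to yield $[\ttl{r}_-(a^*),\ttl{r}_-(b^*)]-\ttl{r}_-(\ad^*(\ttl{r}_-(a^*))b^*-\ad^*(\ttl{r}_-(b^*))a^*)$ on one side and $[\ttl{r}_+(a^*),\ttl{r}_+(b^*)]$ plus ``cross'' terms on the other, where the cross terms vanish precisely because $\ttl{r}_+$ is $\frakg$-invariant and antisymmetric (Lemma~\mref{le:symmetry}).

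The main obstacle will be bookkeeping: keeping careful track of which of the terms in the fully expanded $\textbf{C}(r_-+r_+)$ survive, making sure the $\ad^*$ versus $\ad$ duality is applied with the correct signs, and verifying that the coefficient $\frac{\kappa+1}{4}$ on the ECYBE side is exactly what is forced by the normalization in Eq.~(\mref{eq:alphabeta}) (the $/2$ in $r_\pm$) together with the $\kappa$ on the right of Eq.~(\mref{eq:kmcybe}). A clean way to organize this is to first prove the identity
\begin{equation}
\langle \textbf{C}(r) - \tfrac{\kappa+1}{4}[(r_{13}+r_{31}),(r_{23}+r_{32})],\, a^*\ot b^*\ot c^*\rangle = \langle E(a^*,b^*),\, c^*\rangle \notag
\end{equation}
where $E(a^*,b^*)$ denotes the difference of the left-hand side and right-hand side of Eq.~(\mref{eq:kmcybe}), valid for all $a^*,b^*,c^*\in\frakg^*$ whenever $r_+$ is invariant; then nondegeneracy of the pairing gives the ``if and only if'' immediately. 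I would establish this identity by the same slot-by-slot pairing computation used for Theorem~\mref{thm:sts}, inserting the invariance relations for $r_+$ from Lemma~\mref{le:symmetry} wherever a bracket hits an $r_+$-factor. Once the identity is in place, both directions of the theorem follow at once.
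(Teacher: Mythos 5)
The paper does not prove Theorem~\mref{thm:cybea}; it is quoted from \mcite{BGN1} without proof, so there is no internal argument to compare against. Judged on its own, your plan is the standard one and it is correct: pair the tensor identity against $a^*\ot b^*\ot c^*$, use the invariance of $r_+$ (Lemma~\mref{le:symmetry}, in the form that $\ttl{r}_+$ is antisymmetric and $\frakg$-invariant) to collapse the mixed terms, and invoke nondegeneracy of the pairing for the ``if and only if.'' Your coefficient bookkeeping also checks out. Since $r_{13}+r_{31}=2(r_+)_{13}$ and $r_{23}+r_{32}=2(r_+)_{23}$, pairing the right-hand side against $a^*\ot b^*\ot c^*$ gives exactly $(\kappa+1)\langle[\ttl{r}_+(a^*),\ttl{r}_+(b^*)],c^*\rangle$ --- note there is in fact no further $\ad^*$-correction, since the bracket already sits in the third slot, which is the one paired off to produce the $\frakg$-valued output. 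On the left, expanding $\ttl{r}=\ttl{r}_-+\ttl{r}_+$ in the operator form of $\textbf{C}(r)$, the cross terms $[\ttl{r}_\mp(a^*),\ttl{r}_\pm(b^*)]$ cancel against $\ttl{r}_+(\ad^*(\ttl{r}(a^*))b^*-\ad^*(\ttl{r}(b^*))a^*)$ by $\frakg$-invariance, the terms $\pm 2\ttl{r}_-(\ad^*(\ttl{r}_+(a^*))b^*)$ cancel by antisymmetry, and one is left with an extra $+[\ttl{r}_+(a^*),\ttl{r}_+(b^*)]$; moving it to the right turns $\kappa+1$ into the $\kappa$ of Eq.~(\mref{eq:kmcybe}). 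So the argument goes through as you outline it.
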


In the special case when $r_+=0$ (hence $\ttl{r}_+=0$), we obtain Kupershmidt's result Eq.~(\mref{eq:kuper})
and hence Theorem~\mref{thm:sts}.

\subsection{From \tto $\calo$-operators to ECYBE}
\mlabel{ss:tto}
We now start with an arbitrary \tto $\calo$-operator and characterize it as a solution of ECYBE in a suitable Lie algebra.

Let $\frak{g}$ be a Lie algebra and let $(V,\rho)$ be a
$\frak{g}$-module, both with finite $\bfk$-dimensions. Let
$(V^*,\rho^*)$ be the dual $\frak{g}$-module and let
$\ttt{\frak{g}}= \frak{g}\ltimes_{\rho^*}V^*$. Then from
Proposition~\mref{pp:tl}, we have the commutative diagram
\begin{equation}
\xymatrix{ \Hom(V,\frak{g}) \ar^{\LTT}[rr] \ar^{\TTT}[d] && \frak{g}\ot V^*  \ar^{\TTT}[d]\\
\Hom(\ttt{\frakg}^*,\ttt{\frakg}) \ar^{\LTT}[rr] & & \ttt{\frak{g}} \ot \ttt{\frak{g}} } \qquad
\xymatrix{ \beta \ar@{|->}[rr] \ar@{|->}[d] && \ltt{\beta} \ar@{|->}[d] \\ \ttt{\beta} \ar@{|->}[rr] && \ttt{\ltt{\beta}}=\ltt{\ttt{\beta}}
}
\mlabel{eq:ltl}
\end{equation}

\begin{lemma}
Let $\frak{g}$ be a Lie algebra and let $(V,\rho)$ be a
$\frak{g}$-module, both with finite $\bfk$-dimensions. Then
$\beta\in \Hom(V,\frak{g})$ is an antisymmetric $\frakg$-module
homomorphism of \ewt $\kappa$ if and only if $\ttt{\beta}_+ \in
\Hom({\ttt{\frak{g}}}^*,\ttt{\frak{g}})$ is an antisymmetric
$\ttt{\frak g}$-module homomorphism of \ewt $\kappa$.
\mlabel{le:syco}
\end{lemma}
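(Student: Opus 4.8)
The plan is to unwind both sides of the claimed equivalence into explicit identities and match them term by term, using the commutative diagram~(\mref{eq:ltl}) to translate between $\beta:V\to\frak{g}$ and $\ttt{\beta}:\ttt{\frak{g}}^*\to\ttt{\frak{g}}$. First I would fix bases $\{e_i\}$ of $V$ with dual basis $\{e_i^*\}$, and a basis $\{g_j\}$ of $\frak{g}$; then $\ttt{\frak{g}}=\frak{g}\ltimes_{\rho^*}V^*$ has basis $\{g_j\}\cup\{e_i^*\}$ with dual basis (inside $\ttt{\frak{g}}^*=\frak{g}^*\oplus V$) given by $\{g_j^*\}\cup\{e_i\}$. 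The key computational point is to record, for $a^*=\xi^*+x$ and $b^*=\eta^*+y$ in $\ttt{\frak{g}}^*=\frak{g}^*\oplus V$, the three ingredients: (i) the value $\ttt{\beta}(a^*)$, which by~(\mref{eq:TTTl}) is $\iota_2\circ\beta\circ p_1$ applied to $\xi^*+x$, hence equals $\beta(x)\in\frak{g}\subseteq\ttt{\frak{g}}$ (it kills the $\frak{g}^*$-component and lands in the $\frak{g}$-summand); (ii) the dual map $\ttt{\beta}^{\dual}$ on $\ttt{\frak{g}}^{*}$, computed via Eq.~(\mref{eq:tt}) / Proposition~\mref{pp:tl}(\mref{it:dual}); and (iii) the coadjoint action $\ad^*$ of $\ttt{\frak{g}}$ on $\ttt{\frak{g}}^*$, which decomposes according to the semidirect product structure of $\ttt{\frak{g}}$.

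With these in hand, I would write out the two defining conditions for $\ttt{\beta}_+$ to be an antisymmetric $\ttt{\frak{g}}$-module homomorphism of weight $\kappa$, namely the weight-$\kappa$ analogues of Lemma~\mref{le:symmetry}: that $\ttt{\beta}_+$ be $\kappa$-antisymmetric, $\kappa(\ad^*(\ttt{\beta}_+(a^*))b^*+\ad^*(\ttt{\beta}_+(b^*))a^*)=0$, and $\kappa$-$\ttt{\frak{g}}$-invariant, $\kappa\,\ttt{\beta}_+(\ad^*(X)a^*)=\kappa[X,\ttt{\beta}_+(a^*)]$ for $X\in\ttt{\frak{g}}$. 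Now substitute the block decompositions $a^*=\xi^*+x$, $b^*=\eta^*+y$, $X=\zeta+u^*$ with $\zeta\in\frak{g}$, $u^*\in V^*$, and project onto the summands $\frak{g}$ and $V^*$ of $\ttt{\frak{g}}$ (or equivalently pair against $\frak{g}^*$ and $V$). The assertion is that, after this expansion, the equations for $\ttt{\beta}_+$ collapse exactly onto Eq.~(\mref{eq:thantisymmetry}) and Eq.~(\mref{eq:thginvariance}) for $\beta$ — i.e., only the ``$x,y\in V$'' and ``$\zeta\in\frak{g}$, $x\in V$'' components survive with content, and the remaining components are automatically satisfied (they reduce to identities forced by the Lie algebra axioms of $\ttt{\frak{g}}$ and the definition of $\rho^*$). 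Because everything is symmetrized by $\ttt{\beta}_+=(\ttt{\beta}+\ttt{\beta}^{\dual})/2$ and by Eq.~(\mref{eq:tt}) one has $\ttt{\beta}_+=\widehat{(\ltt{\ttt{\beta}})_+}=\widehat{\ttt{\ltt{\beta}}_+}$, it is cleanest to phrase the computation on the tensor side $\ttt{\ltt{\beta}}\in\ttt{\frak{g}}^{\ot 2}$ and use Lemma~\mref{le:symmetry} directly: invariance of $\ttt{\ltt{\beta}}_+$ as a tensor in $\ttt{\frak{g}}^{\ot 2}$ is what we want to match against invariance-type conditions on $\beta$.

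The two implications are then symmetric in structure, so I would actually prove the equivalence in one pass: show that the single tensor identity ``$\ttt{\ltt{\beta}}_+$ is $\ttt{\frak{g}}$-invariant'' (equivalently the operator identity ``$\ttt{\beta}_+$ is antisymmetric of weight $\kappa$'' after Lemma~\mref{le:symmetry}) is, component by component, equivalent to the conjunction of Eq.~(\mref{eq:thantisymmetry}) and Eq.~(\mref{eq:thginvariance}) for $\beta$ with the same $\kappa$. Concretely: pairing $\ad(\zeta+u^*)$-invariance of $\ttt{\ltt{\beta}}_+$ against the various tensor slots, the $\frak{g}^*\ot\frak{g}^*$-slot gives a tautology (since $\ttt{\beta}$ vanishes on $\frak{g}^*$), the $\frak{g}^*\ot V$ and $V\ot\frak{g}^*$-slots reproduce Eq.~(\mref{eq:thginvariance}), and the $V\ot V$-slot reproduces Eq.~(\mref{eq:thantisymmetry}); the $V^*$-involving slots are handled by the $\rho^*$-definition $\langle\rho^*(\zeta)u^*,v\rangle=-\langle u^*,\rho(\zeta)v\rangle$ and contribute nothing new. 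The main obstacle I anticipate is purely bookkeeping: the semidirect product $\ttt{\frak{g}}=\frak{g}\ltimes_{\rho^*}V^*$ makes $\ad^*$ on $\ttt{\frak{g}}^*$ a $2\times 2$ block operator, and the dual-map identity~(\mref{eq:tt}) must be applied carefully to keep track of which summand each term lands in; once the block structure of $\ad_{\ttt{\frak{g}}}$, $\ad^*_{\ttt{\frak{g}}}$, and $\ttt{\beta}$ is tabulated, the matching is mechanical. No genuinely hard step is expected — the content is entirely in organizing the semidirect-product indices so that the ``extra'' $V^*$- and $\frak{g}^*$-components visibly vanish.
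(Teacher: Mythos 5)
Your proposal is correct and follows essentially the same route as the paper: compute $\ttt{\beta}_+$ on the two summands of $\ttt{\frak{g}}^*$ (it acts as $\beta^*/2$ on $\frak{g}^*$ and as $\beta/2$ on $V$), exploit the symmetry of $\ttt{\ltt{\beta}}_+$ together with Lemma~\mref{le:symmetry} to reduce the two conditions on $\ttt{\beta}_+$ to a single antisymmetry/invariance condition, and expand that condition block-by-block against the semidirect-product decomposition so that the only nontrivial components reproduce Eqs.~(\mref{eq:thantisymmetry}) and~(\mref{eq:thginvariance}). The only point left implicit is the trivial reduction to the case $\kappa\neq 0$ (for $\kappa=0$ both sides are vacuous), which the paper dispatches in one line.
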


\begin{proof}
The case that $\kappa=0$ is obvious. So we suppose that
$\kappa\ne0$. Then antisymmetric of \ewt $\kappa$ is the same as
antisymmetric (of \ewt 1) since we assume that $\bfk$ is a field.
Note that for any $a^*\in\frak{g}^*$ and $u\in V$, we have
$\ttt{\beta}_{+}(a^*)=\beta^*(a^*)/2$ and
$\tilde{\beta}_{+}(u)=\beta(u)/2$ where $\beta^*:\frak{g}^*\to V^*$
is the dual linear map associated to $\beta$. In fact, we have
$\tilde{\beta}_{+}=(\tilde{\beta}+{\tilde{\beta}}^{\dual})/2$. Moreover,
for any $a^*\in \frak g^*, u\in V$,
\begin{eqnarray*}
&&\tilde{\beta}(a^*) = l_2\circ\beta\circ p_1(a^*)=0,\\
&&\tilde{\beta}(u)= l_2\circ\beta\circ
p_1(u)=\beta(u).
\end{eqnarray*}
Hence for any $b^*\in \frak g^*, v\in V$,
{\allowdisplaybreaks
\begin{eqnarray*}
&&\langle {\tilde{\beta}}^{\dual}(a^*),v\rangle =\langle a^*,\tilde{\beta}(v)\rangle =\langle a^*,\beta(v)\rangle
=\langle \beta^*(a^*),v\rangle ; \\
&& \langle
\tilde{\beta}^{\dual}(a^*),b^*\rangle =\langle
a^*,\tilde{\beta}(b^*)\rangle =0;\\
&& \langle {\tilde{\beta}}^{\dual}(u),a^*\rangle =\langle u, \tilde{\beta}(a^*)\rangle =0;\\
&&\langle {\tilde{\beta}}^{\dual}(u),v\rangle =\langle
u,\tilde{\beta}(v)\rangle =\langle u,\beta(v)\rangle =0.
\end{eqnarray*}
}
So we have
 $\tilde{\beta}_{+}(a^*) =(\tilde{\beta}(a^*)+{\tilde{\beta}}^{\dual}(a^*))/2 = {\beta}^*(a^*)/2$
and
$\tilde{\beta}_{+}(u) =(\tilde{\beta}(u)+{\tilde{\beta}}^{\dual}(u))/2=\beta(u)/2$.

Now suppose that $\beta:(V,\rho)\to
\frak{g}$ is an antisymmetric $\frak{g}$-module homomorphism of
\ewt $\kappa$. Let $b^*\in\frak{g}^*,v\in V$, then
$${\rm
ad}^*_{\ttt{\frak{g}}}(\ttt{\beta}_{+}(a^*+u))(b^*+v) =(1/2)({\rm
ad}^*_{\ttt{\frak{g}}}(\beta^*(a^*))b^*+{\rm
ad}^*_{\ttt{\frak{g}}}(\beta^*(a^*))v+{\rm
ad}^*_{\ttt{\frak{g}}}(\beta(u))b^*+{\rm
ad}^*_{\ttt{\frak{g}}}(\beta(u))v),$$
$${\rm ad}^*_{\ttt{\frak{g}}}(\ttt{\beta}_{+}(b^*+v))(a^*+u) =(1/2)({\rm ad}^*_{\ttt{\frak{g}}}(\beta^*(b^*))a^*+{\rm
ad}^*_{\ttt{\frak{g}}}(\beta^*(b^*))u+{\rm
ad}^*_{\ttt{\frak{g}}}(\beta(v))a^*+{\rm
ad}^*_{\ttt{\frak{g}}}(\beta(v))u).$$

On the other hand, for any $x\in \frak{g},w^*\in V^*$,
{\allowdisplaybreaks {\small
\begin{eqnarray*} &&\langle {\rm
ad}^*_{\ttt{\frak{g}}}(\beta^*(a^*))b^*+{\rm
ad}^*_{\ttt{\frak{g}}}(\beta^*(b^*))a^*,x\rangle=\langle
b^*,[x,\beta^*(a^*)]\rangle+\langle a^*,[x,\beta^*(b^*)]\rangle=0,\\
&&\langle {\rm ad}^*_{\ttt{\frak{g}}}(\beta^*(a^*))b^* +{\rm
ad}^*_{\ttt{\frak{g}}}(\beta^*(b^*))a^*,w^*\rangle=\langle
b^*,[w^*,\beta^*(a^*)]\rangle +\langle
a^*,[w^*,\beta^*(b^*)]\rangle=0,\\
& &\langle {\rm ad}^*_{\ttt{\frak{g}}}(\beta^*(a^*))v+{\rm
ad}^*_{\ttt{\frak{g}}}(\beta(v))a^*,x\rangle=\langle
v,[x,\beta^*(a^*)]\rangle +\langle a^*,[x,\beta(v)]\rangle =-\langle
\beta(\rho(x)v),a^*\rangle+\langle a^*,[x,\beta(v)]\rangle=0,\\
& &\langle {\rm ad}^*_{\ttt{\frak{g}}}(\beta^*(a^*))v+{\rm
ad}^*_{\ttt{\frak{g}}}(\beta(v))a^*,w^*\rangle=\langle
v,[w^*,\beta^*(a^*)]\rangle+\langle a^*,[w^*,\beta(v)]\rangle=0,\\
&&\langle {\rm ad}^*_{\ttt{\frak{g}}}(\beta(u))b^*+{\rm
ad}^*_{\ttt{\frak{g}}}(\beta^*(b^*))u,x\rangle=\langle
b^*,[x,\beta(u)]\rangle+\langle u,[x,\beta^*(b^*)]\rangle=\langle
b^*,[x,\beta(u)]\rangle-\langle \beta(\rho(x)u),b^*\rangle=0,\\
& &\langle {\rm ad}^*_{\ttt{\frak{g}}}(\beta(u))b^*+{\rm
ad}^*_{\ttt{\frak{g}}}(\beta^*(b^*))u,w^*\rangle=\langle
b^*,[w^*,\beta(u)]\rangle+\langle u,[w^*,\beta^*(b^*)]\rangle=0,\\
& &\langle {\rm ad}^*_{\ttt{\frak{g}}}(\beta(u))v+{\rm
ad}^*_{\ttt{\frak{g}}}(\beta(v))u,x\rangle=\langle
v,[x,\beta(u)]\rangle+\langle u,[x,\beta(v)]\rangle=0, \\
& &\langle {\rm ad}^*_{\ttt{\frak{g}}}(\beta(u))v+{\rm
ad}^*_{\ttt{\frak{g}}}(\beta(v))u,w^*\rangle=\langle
v,[w^*,\beta(u)]\rangle+\langle u,[w^*,\beta(v)]\rangle =\langle
\rho(\beta(u))v+\rho(\beta(v))u,w^*\rangle=0.
\end{eqnarray*}}
}

Therefore, ${\rm ad}^*_{\ttt{\frak{g}}}(\ttt{\beta}_{+}
(a^*+u))(b^*+v)+ {\rm ad}^*_{\ttt{\frak{g}}}(\ttt{\beta}_{+}
(b^*+v))(a^*+u)=0.$ Since $\ttt{\ltt{\beta}}_{+}\in
\ttt{\frak{g}}\otimes\ttt{\frak{g}}$ is symmetric, by
Lemma~\mref{le:symmetry}, $\ttt{\beta}_{+}$ is an antisymmetric
$\ttt{\frak g}$-module homomorphism of \ewt
$\kappa$.

Conversely, if $\ttt{\beta}_{+}$ is an antisymmetric
$\ttt{\frak g}$-module homomorphism of \ewt
$\kappa$, then for any $u,v\in V,x\in\frak{g}$,
\begin{eqnarray*}
{\rm ad}^*_{\ttt{\frak{g}}}(\ttt{\beta}_{+}(u))v+{\rm
ad}^*_{\ttt{\frak{g}}}(\ttt{\beta}_{+}(v))u =0&\Leftrightarrow&\rho(\beta(u))v+\rho(\beta(v))u=0,\\
\ttt{\beta}_{+}({\rm
ad}^*_{\ttt{\frak{g}}}(x)v)=[x,\ttt{\beta}_{+}(v)] &\Leftrightarrow&\beta(\rho(x)v)=[x,\beta(v)].
\end{eqnarray*}
 So
$\beta:(V,\rho)\to\frak{g}$ is an antisymmetric $\frak{g}$-module
homomorphism of \ewt $\kappa$.
\end{proof}

\begin{theorem}
Let $\frak{g}$ be a Lie algebra and $(V,\rho)$ be a
$\frak{g}$-module, both with finite $\bfk$-dimensions. Let
$\alpha,\beta:V\to \frakg$ be two linear maps. Using the notations
in Eq.~(\mref{eq:ltl}), $\alpha$ is an extended $\calo$-operator
with extension $\beta$ of \ewt $\kappa$ if and only if
$\ttt{\alpha}_-$ is an extended $\calo$-operator with extension
$\ttt{\beta}_+$ of \ewt $\kappa$. \mlabel{thm:skewgm}
\end{theorem}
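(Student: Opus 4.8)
The plan is to reduce Theorem~\mref{thm:skewgm} to a computation of both sides of the extended $\calo$-operator equation inside the semidirect product $\ttt{\frakg}=\frakg\ltimes_{\rho^*}V^*$, using the explicit formulas for $\ttt{\alpha}_-$ and $\ttt{\beta}_+$ already worked out in the proof of Lemma~\mref{le:syco}. First I would record that, since $\ttt{\beta}_+$ is the symmetric part of $\ttt{\ltt{\beta}}$, Lemma~\mref{le:syco} already tells us that ``$\beta$ is an antisymmetric $\frakg$-module homomorphism of \ewt $\kappa$'' is equivalent to ``$\ttt{\beta}_+$ is an antisymmetric $\ttt{\frakg}$-module homomorphism of \ewt $\kappa$'', so the notion of ``extension of \ewt $\kappa$'' transports correctly and only the main identity~(\mref{eq:gmcybe}) needs to be matched. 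Next I would compute $\ttt{\alpha}_-$ explicitly: exactly as in Lemma~\mref{le:syco}, $\ttt{\alpha}(a^*)=0$, $\ttt{\alpha}(u)=\alpha(u)$ for $a^*\in\frakg^*$, $u\in V$, and dually $\ttt{\alpha}^{\dual}(a^*)=-\alpha^*(a^*)$ appropriately, so that on the decomposition $\ttt{\frakg}^*=V\oplus\frakg^*$ (using $V^{**}\cong V$) we get $\ttt{\alpha}_-(u+a^*)=\tfrac12\alpha(u)-\tfrac12\alpha^*(a^*)$ landing in $\frakg\oplus V^*\subseteq\ttt{\frakg}$, where $\alpha^*:\frakg^*\to V^*$ is the dual of $\alpha$.

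Then I would substitute $x=u+a^*$, $y=v+b^*$ into the extended $\calo$-operator equation for the pair $(\ttt{\alpha}_-,\ttt{\beta}_+)$ on $\ttt{\frakg}$ (here $(\ttt{\frakg},\mathrm{ad})$ plays the role of the $\ttt{\frakg}$-Lie algebra, with $\rho^*_{\ttt{\frakg}}=\mathrm{ad}^*_{\ttt{\frakg}}$, and $\lambda,\mu$ irrelevant by the weight-$\kappa$ convention), and expand everything using the semidirect-product bracket~(\mref{eq:sdir}) and the coadjoint action. The left-hand side $[\ttt{\alpha}_-(x),\ttt{\alpha}_-(y)]_{\ttt{\frakg}}-\ttt{\alpha}_-(\mathrm{ad}^*_{\ttt{\frakg}}(\ttt{\alpha}_-(x))y-\mathrm{ad}^*_{\ttt{\frakg}}(\ttt{\alpha}_-(y))x)$ splits, along the decomposition $\ttt{\frakg}=\frakg\oplus V^*$, into a $\frakg$-component and a $V^*$-component; pairing the $V^*$-component against an arbitrary $w\in V$ and the $\frakg$-component against an arbitrary $c^*\in\frakg^*$ turns the operator identity into scalar identities. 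The $\frakg$-part, after collecting, should reproduce exactly $[\alpha(u),\alpha(v)]_\frakg-\alpha(\alpha(u)\cdot v-\alpha(v)\cdot u)=\kappa[\beta(u),\beta(v)]_\frakg$ tested against $c^*$, i.e. the original equation~(\mref{eq:gmcybe}) (with trivial $\frak k$-bracket on $V$), while the $V^*$-part should collapse to $0=0$ automatically, or else follow from the antisymmetry/$\frakg$-invariance of $\alpha$ and $\beta$ — which, crucially, we are \emph{not} assuming for $\alpha$, so I expect the $V^*$-component to vanish identically by a Jacobi-type cancellation rather than by hypothesis. Running the same expansion backwards gives the converse, since every pairing test vector is arbitrary.

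The main obstacle will be the bookkeeping in the $V^*$-component of the left-hand side: one gets terms like $\mathrm{ad}^*(\alpha(u))\mathrm{ad}^*(\alpha(v))$, cross terms mixing $\alpha$ with $\alpha^*$, and contributions of the form $\rho(\alpha(u))\rho(\alpha(v))w$ coming from $\ttt{\alpha}_-$ applied to a $V$-valued argument, and one must see that these organize — via the module axiom $\rho([\xi,\eta])=[\rho(\xi),\rho(\eta)]$ and the identity $\langle \mathrm{ad}^*(\xi)\eta^*,v\rangle=-\langle\eta^*,[\xi,v]\rangle$ — into a perfect cancellation independent of whether $\alpha$ is antisymmetric. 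A clean way to control this is to observe that $\ttt{\ltt{\alpha}}_-$ is skew-symmetric in $\ttt{\frakg}\otimes\ttt{\frakg}$ and to reuse the structural computations already performed for Theorem~\mref{thm:bai}, so that the weight-$\kappa$ case differs from the weight-$0$ case (Theorem~\mref{thm:bai}) only by the extra summand $\kappa[\ttt{\beta}_+(x),\ttt{\beta}_+(y)]$ on the right, whose $\frakg$-component is $\tfrac{\kappa}{4}[\beta^*(\cdot),\beta^*(\cdot)]$-type terms matching $\kappa[\beta(u),\beta(v)]_\frakg$ after the scaling by $\tfrac12$ in $\ttt{\beta}_+$ is accounted for, and whose $V^*$-component vanishes by Lemma~\mref{le:symmetry} applied to $\ttt{\beta}_+$. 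I would therefore present the proof as: (1) cite Lemma~\mref{le:syco} for the extension datum; (2) note the equivalence for weight $0$ is Theorem~\mref{thm:bai}; (3) verify the single extra $\kappa$-term matches componentwise; (4) conclude.
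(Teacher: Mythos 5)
Your overall strategy --- compute $\ttt{\alpha}_-(u+a^*)=\tfrac12\alpha(u)-\tfrac12\alpha^*(a^*)$, expand the extended $\calo$-operator identity for the pair $(\ttt{\alpha}_-,\ttt{\beta}_+)$ inside $\ttt{\frakg}=\frakg\ltimes_{\rho^*}V^*$, split along $\ttt{\frakg}=\frakg\oplus V^*$, and transport the extension datum via Lemma~\mref{le:syco} --- is exactly the paper's. But your prediction for the $V^*$-component is wrong, and this is not a detail: it is where most of the content of the theorem sits. The $V^*$-component of the left-hand side, paired against an arbitrary $w\in V$, works out to
$\langle b^*,\,[\alpha(w),\alpha(u)]-\alpha(\rho(\alpha(w))u)+\alpha(\rho(\alpha(u))w)\rangle$
(plus the analogous term in $a^*$), i.e.\ it is exactly the quantity $B_\alpha(w,u)$ of Eq.~(\mref{eq:cocycle}) tested against $b^*$. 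This does \emph{not} vanish by any Jacobi-type cancellation for a general linear map $\alpha$; it equals $\kappa\langle b^*,[\beta(w),\beta(u)]\rangle$ precisely \emph{because} of the hypothesis~(\mref{eq:gmcybe}), now applied to the pair $(w,u)$ rather than $(u,v)$. One then still needs the antisymmetry of $\beta$ to rewrite $\kappa\langle b^*,[\beta(w),\beta(u)]\rangle=\langle\kappa\rho^*(\beta(u))\beta^*(b^*),w\rangle$, so that it matches the $V^*$-component of $\kappa[\ttt{\beta}_+(u+a^*),\ttt{\beta}_+(v+b^*)]$, which is $\tfrac{\kappa}{4}\big(\rho^*(\beta(u))\beta^*(b^*)-\rho^*(\beta(v))\beta^*(a^*)\big)$ and is likewise \emph{not} zero; your claim that this component ``vanishes by Lemma~\mref{le:symmetry} applied to $\ttt{\beta}_+$'' confuses the invariance of $\ttt{\beta}_+$ with the vanishing of brackets of its values.

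Consequently your reduction ``weight $0$ is Theorem~\mref{thm:bai}, then match the one extra $\kappa$-term componentwise'' does not close the argument as stated: even in Theorem~\mref{thm:bai} the mixed components are not an automatic $0=0$ but a second copy of the $\calo$-operator identity in dualized form, so the forward implication genuinely uses~(\mref{eq:gmcybe}) on all pairs of elements of $V$, while the backward implication only needs the pure $V\times V$ component (set $a^*=b^*=0$), which is how one recovers $[\alpha(u),\alpha(v)]-\alpha(\rho(\alpha(u))v-\rho(\alpha(v))u)=\kappa[\beta(u),\beta(v)]$. If you replace your expected ``perfect cancellation'' by this dualization step, the rest of your outline is correct and coincides with the paper's proof.
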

\begin{proof}
Note that for any $a^*\in \frak{g}^*,v\in V$, we have
$\tilde{\alpha}_{-}(a^*)=
-\alpha^*(a^*)/2$ and $\tilde{\alpha}_{-}(v)= \alpha(v)/2$
 where
$\alpha^*:\frak{g}^*\rightarrow V^*$ is the dual linear map of
$\alpha$. Suppose that $\alpha$ is an extended $\calo$-operator with
extension $\beta$ of \ewt $\kappa$. Then for any $a^*,b^*\in
\frak{g}^*$, $u,v\in V$, we have

\begin{eqnarray*} & &[\ttt{\alpha}_{-}(u+a^*),
\ttt{\alpha}_{-}(v+b^*)] -\ttt{\alpha}_{-}({\rm
ad}^*_{\ttt{\frak{g}}}(\ttt{\alpha}_{-}(u+a^*))(v+b^*)-{\rm
ad}^*_{\ttt{\frak{g}}}(\ttt{\alpha}_{-}(v+b^*))(u+a^*))\\
&=&(1/4)\{[\alpha(u),\alpha(v)]-[\alpha(u),\alpha^*(b^*)]-[\alpha^*(a^*),\alpha(v)]+[\alpha^*(a^*),\alpha^*(b^*)]\}\\
&& -(1/2) \ttt{\alpha}_{-}({\rm ad}^*_{\ttt{\frak{g}}}(\alpha(u))v+
{\rm ad}^*_{\ttt{\frak{g}}}(\alpha(u))b^*-{\rm
ad}^*_{\ttt{\frak{g}}}(\alpha^*(a^*))v-{\rm
ad}^*_{\ttt{\frak{g}}}(\alpha^*(a^*))b^* \\
&& -{\rm
ad}^*_{\ttt{\frak{g}}}(\alpha(v))u-{\rm
ad}^*_{\ttt{\frak{g}}}(\alpha(v))a^*+{\rm
ad}^*_{\ttt{\frak{g}}}(\alpha^*(b^*))u +{\rm
ad}^*_{\ttt{\frak{g}}}(\alpha^*(b^*))a^*)\\
&=&(1/4)\{[\alpha(u),\alpha(v)]-\alpha(\rho(\alpha(u))v) +\alpha(\rho(\alpha(v))u)-\rho^*(\alpha(u))\alpha^*(b^*)
+\alpha^*({\rm ad}^*(\alpha(u))b^*)\\
&& +
\alpha^*({\rm
ad}^*_{\ttt{\frak{g}}}(\alpha^*(b^*))u)+\rho^*(\alpha(v))\alpha^*(a^*)-\alpha^*({\rm
ad}^*_{\ttt{\frak{g}}}(\alpha^*(a^*))v)-\alpha^*({\rm
ad}^*(\alpha(v))a^*)\}.
\end{eqnarray*}

On the other hand, for any $w\in V$, we have
\begin{eqnarray*}
& &\langle -\rho^*(\alpha(u))\alpha^*(b^*)+\alpha^*({\rm
ad}^*(\alpha(u))b^*)+\alpha^*({\rm
ad}^*_{\ttt{\frak{g}}}(\alpha^*(b^*))u),w\rangle\\
&=&\langle
b^*,\alpha(\rho(\alpha(u))w)+[\alpha(w),\alpha(u)]-\alpha(\rho(\alpha(w))u)\rangle \\
&=&\langle b^*,k[\beta(w),\beta(u)]\rangle\\
&=&\langle b^*,-k\beta(\rho(\beta(u))w)\rangle
\\
&=&\langle k\rho^*(\beta(u))\beta^*(b^*),w\rangle.
\end{eqnarray*}
Thus
$$-\rho^*(\alpha(u))\alpha^*(b^*)+\alpha^*({\rm
ad}^*(\alpha(u))b^*)+\alpha^*({\rm
ad}^*_{\ttt{\frak{g}}}(\alpha^*(b^*))u)=k\rho^*(\beta(u))\beta^*(b^*).$$
Similarly, $$\rho^*(\alpha(v))\alpha^*(a^*)-\alpha^*({\rm
ad}^*_{\ttt{\frak{g}}}(\alpha^*(a^*))v)-\alpha^*({\rm
ad}^*(\alpha(v))a^*)=-k\rho^*(\beta(v))\beta^*(a^*).$$ So
\begin{eqnarray*} & &[\ttt{\alpha}_{-}(u+a^*),
\ttt{\alpha}_{-}(v+b^*)]-\ttt{\alpha}_{-} ({\rm
ad}^*_{\ttt{\frak{g}}}(\ttt{\alpha}_{-} (u+a^*))(v+b^*)-{\rm
ad}^*_{\ttt{\frak{g}}}(\ttt{\alpha}_{-} (v+b^*))(u+a^*))\\
&=&(1/4)\big(\kappa[\beta(u),\beta(v)]+\kappa\rho^*(\beta(u))\beta^*(b^*) -\kappa\rho^*(\beta(v))\beta^*(a^*)\big)\\
&=&(1/4)\big(\kappa[\beta(u),\beta(v)] +\kappa[\beta(u),\beta^*(b^*)]+\kappa[\beta^*(a^*),\beta(v)]\big)\\
&=&\kappa[\ttt{\beta}_{+}(u+a^*), \ttt{\beta}_{+}(v+b^*)].
\end{eqnarray*}

Furthermore, since $\beta$ is an antisymmetric
$\frak{g}$-module homomorphism of \ewt $\kappa$, by
Lemma~\mref{le:syco}, the linear map $\ttt{\beta}_{+}$ from $(\ttt{\frak{g}}^*,{\rm
ad}_{\ttt{\frak{g}}}^*)$ to $\ttt{\frak{g}}$ is an antisymmetric
$\ttt{\frak g}$-module homomorphism of \ewt $\kappa$. Therefore $\ttt{\alpha}_{-}$ is an extended
$\calo$-operator with extension $\ttt{\beta}_{+}$ of \ewt $\kappa$.

Conversely, if $\ttt{\alpha}_{-}$ is an extended $\calo$-operator
with extension $\beta$ of \ewt $\kappa$. Then the linear map $\ttt{\beta}_{+}$ from
$(\ttt{\frak{g}}^*,{\rm ad}_{\ttt{\frak{g}}}^*)$ to $\ttt{\frak{g}}$
is an antisymmetric $\ttt{\frak g}$-module homomorphism of \ewt $\kappa$, which by
Lemma~\mref{le:syco} implies that $\beta$ is an antisymmetric
$\frak{g}$-module homomorphism of \ewt $\kappa$. Moreover, for any
$u,v\in V$, we have that
$$[\ttt{\alpha}_{-}(u), \ttt{\alpha}_{-}(v)] -\ttt{\alpha}_{-} ({\rm
ad}^*_{\ttt{\frak{g}}}(\ttt{\alpha}_{-}(u))v-{\rm
ad}^*_{\ttt{\frak{g}}}(\ttt{\alpha}_{-}(v))u) =\kappa[\ttt{\beta}_{+}(u),
\ttt{\beta}_{+}(v)].$$
Hence
$$[\alpha(u),\alpha(v)]-\alpha(\rho(\alpha(u))v-\rho(\alpha(v))u) =\kappa[\beta(u),\beta(v)].$$
So $\alpha$ is an extended $\calo$-operator with extension $\beta$
of \ewt $\kappa$.
\end{proof}

Theorem~\mref{thm:skewgm} allows us to give the following
characterization of extended $\calo$-operators in terms of solutions
of CYBE in a suitable Lie algebra. In particular, Baxter Lie
algebras are described by CYBE~\mcite{BV1,Bo}.

\begin{coro}
Let $\frak{g}$ be a  Lie algebra and let $(V,\rho)$ be a
$\frak{g}$-module, both with finite $\bfk$-dimension.
\begin{enumerate}
 \item
Let $\alpha, \beta:V\rightarrow \frak{g}$ be linear maps. Then
$\alpha$ is an \tto $\calo$-operator with \bop $\beta$ of \bwt $k$
if and only if $\ttt{\ltt{\alpha}}_-\pm \ttt{\ltt{\beta}}_+$ is a
solution of ECYBE of \bwt $\frac{\kappa+1}{4}$ in
$\frak{g}\ltimes_{\rho^*}V^*$. \mlabel{it:motoaybe6}
\item {\rm (\mcite{Bai})}\; Let
$\alpha:V\rightarrow \frak{g}$ be a linear map.
Then $\alpha$ is an $\calo$-operator of weight zero if and only if
$\ttt{\ltt{\alpha}}_-$ is a skew-symmetric solution of CYBE in $\frak{g}\ltimes_{\rho^*}V^*$. In
particular, a linear map $P:\frak{g}\rightarrow \frak{g}$ is a Rota-Baxter
operator of weight zero if and only if $r=\ttt{\ltt{P}}_-$ is a
skew-symmetric solution of CYBE in
$\frak{g}\ltimes_{{\rm ad}^*}\frak{g}^*$. \mlabel{it:motoaybe5}
\item Let $R:\frak{g}\rightarrow \frak{g}$ be a linear map. Then $(\frak{g},R)$ is a {\bf Baxter Lie algebra}, i.e.,
the following equation holds:
\begin{equation}
[R(x),R(y)]-R([R(x),y]+[x,R(y)])=-[x,y],\quad \forall x,y\in\frak{g}.
\mlabel{eq:-1myb}
\end{equation}
 if and only if $\ttt{\ltt{R}}_-\pm \ttt{\ltt{{\rm
id}}}_+$ is a solution of CYBE in
$\frak{g}\ltimes_{{\rm ad}^*}\frak{g}^*$. \mlabel{it:motoaybe3}
\item Let
$P:\frak{g}\rightarrow \frak{g}$ be a linear map. Then $P$ is a Rota-Baxter operator of
weight $\lambda\neq 0$ if and only if both
$\frac{2}{\lambda}\ttt{\ltt{P}}_-+2\ttt{\ltt{{\rm id}}}$ and
$\frac{2}{\lambda}\ttt{\ltt{P}}_--2\sigma(\ttt{\ltt{{\rm id}}})$ are solutions of
CYBE in $\frak{g}\ltimes_{{\rm ad}^*}\frak{g}^*$.
\mlabel{it:motoaybe4}
\end{enumerate}
\mlabel{co:motoaybe1}
\end{coro}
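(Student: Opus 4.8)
The plan is to derive Corollary~\ref{co:motoaybe1} as a sequence of applications of Theorem~\ref{thm:skewgm} (combined with Theorem~\ref{thm:cybea} applied to the enlarged Lie algebra $\ttt{\frakg}=\frakg\ltimes_{\rho^*}V^*$). First I would prove Item~(\ref{it:motoaybe6}). Given linear maps $\alpha,\beta:V\to\frakg$, Theorem~\ref{thm:skewgm} says $\alpha$ is an \tto $\calo$-operator with \bop $\beta$ of \bwt $\kappa$ if and only if $\ttt{\alpha}_-$ is an \tto $\calo$-operator with \bop $\ttt{\beta}_+$ of \bwt $\kappa$, viewed as maps $\ttt{\frakg}^*\to\ttt{\frakg}$. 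Now apply Theorem~\ref{thm:cybea} with the Lie algebra taken to be $\ttt{\frakg}$: set $r:=\ttt{\ltt{\alpha}}_- + \ttt{\ltt{\beta}}_+ \in \ttt{\frakg}\ot\ttt{\frakg}$, so that $r_- = \ttt{\ltt{\alpha}}_-$ (its skew part, since $\ttt{\ltt{\alpha}}_-$ is skew-symmetric and $\ttt{\ltt{\beta}}_+$ is symmetric) and $r_+ = \ttt{\ltt{\beta}}_+$. By Lemma~\ref{le:syco}, $\beta$ antisymmetric of weight $\kappa$ over $\frakg$ is equivalent to $\ttt{\beta}_+$ antisymmetric of weight $\kappa$ over $\ttt{\frakg}$, which by Lemma~\ref{le:symmetry} is equivalent to $r_+=\ttt{\ltt{\beta}}_+$ being invariant. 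Hence Theorem~\ref{thm:cybea} applies and gives: $\ttt{\alpha}_-$ is an \tto $\calo$-operator with \bop $\ttt{\beta}_+$ of \bwt $\kappa$ iff $r$ is a solution of ECYBE of \bwt $\frac{\kappa+1}{4}$ in $\ttt{\frakg}$. For the ``$\pm$'' in the statement one uses that replacing $\beta$ by $-\beta$ changes nothing (an \tto $\calo$-operator condition of weight $\kappa$ with extension $\beta$ is symmetric under $\beta\mapsto-\beta$, since $\kappa[\beta(x),\beta(y)]$ is even in $\beta$, and likewise $\beta\mapsto-\beta$ preserves antisymmetry), so $\ttt{\ltt{\alpha}}_-\pm\ttt{\ltt{\beta}}_+$ works for either sign. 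This establishes~(\ref{it:motoaybe6}).

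Next I would deduce~(\ref{it:motoaybe5}) by specializing~(\ref{it:motoaybe6}) to $\kappa=0$ (equivalently $\beta=0$): then ``\tto $\calo$-operator with \bop $0$ of \bwt $0$'' is exactly an $\calo$-operator of weight zero in the sense of Eq.~(\ref{eq:oop}), and ECYBE of \bwt $\frac{0+1}{4}=\frac14\cdot$\,$0$ — wait, one must be careful: with $\beta=0$ the right-hand side of the ECYBE vanishes, so it reduces to the ordinary CYBE~(\ref{eq:cybe}) in $\ttt{\frakg}$ regardless of the nominal weight, because the extension term is $\frac{\kappa+1}{4}[(r_{13}+r_{31}),(r_{23}+r_{32})]$ and $r=\ttt{\ltt{\alpha}}_-$ has $r+\sigma(r)=0$, so $r_{13}+r_{31}=0$. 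Thus $\ttt{\ltt{\alpha}}_-$ solves CYBE in $\frakg\ltimes_{\rho^*}V^*$ iff $\alpha$ is an $\calo$-operator of weight zero — which recovers Theorem~\ref{thm:bai}. The Rota-Baxter specialization is the case $(V,\rho)=(\frakg,\ad)$, where an $\calo$-operator of weight zero is precisely a Rota-Baxter operator of weight zero by Eq.~(\ref{eq:rotabaxter}) with $\lambda=0$, and $\rho^*=\ad^*$.

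For~(\ref{it:motoaybe3}), observe that the Baxter Lie algebra equation~(\ref{eq:-1myb}) is the modified Yang-Baxter equation~(\ref{eq:mbe}), which I would rewrite as the \tto $\calo$-operator equation~(\ref{eq:gmcybe}) with $(\frak k,\pi)=(\frakg,\ad)$, $\lambda=0$, $V=\frakg$, $\rho=\ad$, extension $\beta=\id$, weight parameters: matching $\kappa[\id(x),\id(y)]+\mu\cdot 0 = -[x,y]$ forces $\kappa=-1$ (and $\mu$ irrelevant). One checks $\id:\frakg\to\frakg$ is an antisymmetric $\frakg$-module homomorphism of weight $-1$: Eq.~(\ref{eq:thantisymmetry}) reads $-[x,y]-[y,x]=0$, true, and Eq.~(\ref{eq:thginvariance}) reads $-[\xi,x]=-[\xi,x]$, true. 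Then~(\ref{it:motoaybe6}) with $\kappa=-1$ gives: $(\frakg,R)$ is Baxter iff $\ttt{\ltt{R}}_-\pm\ttt{\ltt{\id}}_+$ is a solution of ECYBE of \bwt $\frac{-1+1}{4}=0$ in $\frakg\ltimes_{\ad^*}\frakg^*$, i.e.\ a solution of CYBE there. For~(\ref{it:motoaybe4}), I would start from $P$ Rota-Baxter of weight $\lambda\neq0$, i.e.\ Eq.~(\ref{eq:rotabaxter}), and rescale: set $Q:=\frac{2}{\lambda}P+\id$ (resp.\ $Q':=\frac{2}{\lambda}P-\id$ — note $\frac{2}{\lambda}\ttt{\ltt{P}}_--2\sigma(\ttt{\ltt{\id}})$ corresponds at the operator level to $\frac{2}{\lambda}P$ minus the adjoint of $\id$, i.e.\ again shifting $P$), and verify by direct substitution that $Q$ satisfies the modified Yang-Baxter equation~(\ref{eq:-1myb}) with respect to a rescaled bracket $\frac{\lambda}{2}[\,,\,]$; equivalently, that $\frac{2}{\lambda}P\pm\id$ is Baxter for the bracket $[\,,\,]$ up to the usual scaling, so that~(\ref{it:motoaybe3}) applies to each. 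The bookkeeping of which shift ($+\id$ versus $-\sigma(\id)$ at the tensor level) corresponds to which sign in~(\ref{it:motoaybe3}) — and checking that BOTH shifts must simultaneously give solutions of CYBE to characterize a weight-$\lambda$ Rota-Baxter operator — is the part requiring the most care; the computation that $\frac{2}{\lambda}P\pm\id$ each satisfy a modified Yang-Baxter equation is a short but sign-sensitive algebraic manipulation starting from Eq.~(\ref{eq:rotabaxter}). The rest is routine once Theorem~\ref{thm:skewgm} and Theorem~\ref{thm:cybea} are in hand.
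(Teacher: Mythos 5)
Your proposal is correct and follows essentially the same route as the paper: item (a) from Theorem~\ref{thm:skewgm} combined with Theorem~\ref{thm:cybea} applied in $\frakg\ltimes_{\rho^*}V^*$ (with Lemmas~\ref{le:syco} and~\ref{le:symmetry} supplying the invariance of $\ttt{\ltt{\beta}}_+$, and $\beta\mapsto-\beta$ giving the two signs), items (b) and (c) as the specializations $\kappa=0,\ \beta=0$ and $(V,\rho)=(\frakg,\ad),\ \kappa=-1,\ \beta=\id$, and item (d) by reducing to the Baxter Lie algebra case via $P\mapsto\frac{2}{\lambda}P+\id$. The only difference is that for this last equivalence the paper simply cites Ebrahimi-Fard~\cite{E}, whereas you propose the (short, and correct) direct verification; your honest flag that the tensor-level bookkeeping in (d) is the delicate step matches the paper, which is equally terse there.
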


\begin{proof}
\noindent (\mref{it:motoaybe6}) This follows from
Theorem~\mref{thm:skewgm} and Theorem~\mref{thm:cybea}.
\medskip

\noindent
(\mref{it:motoaybe5}) This follows from Theorem~\mref{thm:skewgm}
for $\kappa =0$ (or $\beta=0$)  and Eq.~(\mref{eq:kuper}).
\medskip

\noindent (\mref{it:motoaybe3}) This follows from Item~(\mref{it:motoaybe6})
in the case that $(V,\rho)=(\frak{g},{\rm ad})$, $\kappa=-1$ and $\beta={\rm id}$.

\medskip

\noindent (\mref{it:motoaybe4}) By~\cite{E}, $P$ is a Rota-Baxter operator of
weight $\lambda\neq 0$ if and only if $\frac{2P}{\lambda}+{\rm
id}$ is an \tto $\calo$-operator with \bop ${\rm id}$ of \bwt $-1$
from $(\frak{g},{\rm ad})$ to $\frak{g}$, i.e., $\frac{2P}{\lambda}+{\rm id}$
satisfies Eq.~(\mref{eq:-1myb}). Then the conclusion follows from
Item~(\mref{it:motoaybe3}).
\end{proof}

\section{\Tto $\calo$-operators and generalized CYBE}
\mlabel{sec:gcyb} In this section, we consider the relationship
between extended $\calo$-operators and the generalized CYBE.

Recall that a
{\bf Lie bialgebra} structure on a Lie algebra $\mathfrak{g}$ is a
skew-symmetric $\bfk$-linear map
$\delta:\mathfrak{g}\rightarrow\mathfrak{g}\otimes\mathfrak{g}$,
called the {\bf cocommutator}, such that $(\mathfrak{g},\delta)$
is a Lie coalgebra and $\delta$ is a $1$-cocycle of $\mathfrak{g}$
with coefficients in $\mathfrak{g}\otimes\mathfrak{g}$, that is,
$\delta$ satisfies the following equation:
$$
\delta([x,y])=({\rm ad}(x)\otimes \id +\id\otimes {\rm
ad}(x))\delta(y)- ({\rm ad}(y)\otimes \id+\id\otimes {\rm
ad}(y))\delta(x),\;\; \forall x,y\in\mathfrak{g}.
$$

\begin{prop}{\rm (\mcite{CP})}
Let $\mathfrak{g}$ be a Lie algebra and
$r\in\mathfrak{g}\otimes\mathfrak{g}$. Define a linear map
$\delta:\frak{g}\to \frak{g}\otimes\frak{g}$
 by
\begin{equation}
\delta(x)=({\rm ad}(x)\otimes \id+\id\otimes {\rm ad}(x))r,\;\;
\forall x\in\mathfrak{g}.\label{eq:1coboundary}
\end{equation}
Then $(\frakg,\delta)$ becomes a Lie coalgebra, i.e.,
$\delta^*:\frakg\otimes\frakg\to\frakg$ defines a Lie algebra
structure on $\mathfrak{g}$, if and only if the following conditions
are satisfied for all $x\in\frak{g}$:
\begin{enumerate}
\item
 $({\rm ad}(x)\otimes \id+\id\otimes{\rm ad}(x))r_+=0$ for $r_+$ defined in Eq.~$($\mref{eq:alphabeta}$)$.
 \mlabel{it:syinvariant}
\item
$({\rm ad}(x)\otimes \id\otimes \id+\id\otimes{\rm ad}(x)\otimes
\id+\id\otimes \id\otimes{\rm
ad}(x))([r_{12},r_{13}]+[r_{12},r_{23}]+[r_{13},r_{23}])=0$.
\end{enumerate}
\mlabel{pp:liebialgebra}
\end{prop}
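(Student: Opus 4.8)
The plan is to unwind the defining axioms of a Lie coalgebra and match the two conditions that emerge with (a) and (b). Recall that $(\frakg,\delta)$ is a Lie coalgebra exactly when the $\bfk$-linear map $\delta:\frakg\to\frakg\otimes\frakg$ is \emph{co-anticommutative}, $\sigma\circ\delta=-\delta$, and satisfies the \emph{co-Jacobi identity} $(\id+\tau+\tau^2)\circ(\delta\otimes\id)\circ\delta=0$, where $\tau$ denotes the cyclic permutation of the three tensor legs; in the finite-dimensional case these two axioms dualize to the statement that $\delta^*$ is a Lie bracket on $\frakg^*$. Abbreviate $\theta_x=\ad(x)\otimes\id+\id\otimes\ad(x)$ and $\theta^{(3)}_x=\ad(x)\otimes\id\otimes\id+\id\otimes\ad(x)\otimes\id+\id\otimes\id\otimes\ad(x)$, so that $\delta(x)=\theta_x r$ by Eq.~(\mref{eq:1coboundary}), condition~(a) reads $\theta_x r_+=0$ for all $x$, and condition~(b) reads $\theta^{(3)}_x\mathbf{C}(r)=0$ for all $x$, with $\mathbf{C}(r)$ the expression from Eq.~(\mref{eq:cybe}). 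As a preliminary I would record that $\delta$ as defined in Eq.~(\mref{eq:1coboundary}) is automatically a $1$-cocycle, which is a routine consequence of the Jacobi identity in $\frakg$; this repackages the Jacobi identity in the form most convenient for the co-Jacobi computation.

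Next I would dispose of co-anticommutativity. Since $\sigma$ commutes with $\theta_x$,
\[
\delta(x)+\sigma(\delta(x))=\theta_x r+\theta_x\sigma(r)=\theta_x\bigl(r+\sigma(r)\bigr)=2\,\theta_x r_+,
\]
using $r_+=(r+\sigma(r))/2$ from Eq.~(\mref{eq:alphabeta}). Since $\rchar\,\bfk\neq 2$, the left-hand side vanishes for all $x\in\frakg$ if and only if $\theta_x r_+=0$ for all $x$, that is, if and only if (a) holds. Hence (a) is equivalent to co-anticommutativity of $\delta$.

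It then remains, assuming (a), to show that the co-Jacobi identity is equivalent to (b). Writing $r=\sum_i a_i\otimes b_i$ and applying $\delta$ twice, $(\delta\otimes\id)\delta(x)$ expands into a finite sum of triple tensors each carrying one or two nested brackets. Applying the cyclic symmetrizer $\id+\tau+\tau^2$, then repeatedly using the Jacobi identity in $\frakg$ to reorganize the nested brackets and using (a) to cancel the terms mixing $r$ with $\sigma(r)$, collapses the whole expression to
\[
(\id+\tau+\tau^2)(\delta\otimes\id)\delta(x)=\theta^{(3)}_x\,\mathbf{C}(r);
\]
in other words, the co-Jacobiator of $\delta$ is precisely the diagonal $\frakg$-action on $\mathbf{C}(r)$ (one must pin down the sign convention for $\tau$, but this is immaterial here). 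Consequently the co-Jacobi identity holds if and only if $\mathbf{C}(r)$ is $\frakg$-invariant, that is, if and only if (b). Combined with the previous paragraph, $(\frakg,\delta)$ is a Lie coalgebra if and only if both (a) and (b) hold; note that this also identifies (b) with the assertion that $\mathbf{C}(r)$, the left-hand side of CYBE, is $\ad$-invariant.

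The main obstacle is the computation in the third step: carrying out the expansion of $(\delta\otimes\id)\delta(x)$, sorting the resulting nested-bracket terms according to which pair of tensor legs each inner bracket involves, and verifying via the Jacobi identity that, after cyclic symmetrization, everything other than $\theta^{(3)}_x\mathbf{C}(r)$ either cancels in the cyclic sum or is annihilated by invoking (a). This is the classical calculation underlying Drinfeld's coboundary construction of Lie bialgebras (see \mcite{CP}); beyond routine algebra the only delicate points are tracking the signs introduced by $\sigma$ and by $\tau$ and being explicit about exactly where the invariance of $r_+$ is used.
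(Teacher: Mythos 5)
The paper itself offers no proof of this proposition: it is quoted from Chari--Pressley \mcite{CP} and used as a black box, so there is no in-paper argument to compare against. Your outline is the standard coboundary computation (essentially the one in \mcite{CP}), and its overall architecture is right: splitting the Lie-coalgebra axioms into co-anticommutativity and co-Jacobi, and matching these with conditions (a) and (b) respectively. The co-anticommutativity half is complete and correct --- $\sigma$ does commute with $\ad(x)\otimes \id+\id\otimes\ad(x)$, so $\delta(x)+\sigma(\delta(x))=2(\ad(x)\otimes\id+\id\otimes\ad(x))r_+$, and $\rchar\,\bfk\neq 2$ gives the equivalence with (a).

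The one genuine gap is that the entire content of the second half is carried by the identity $(\id+\tau+\tau^2)(\delta\otimes\id)\delta(x)=(\ad(x)\otimes\id\otimes\id+\id\otimes\ad(x)\otimes\id+\id\otimes\id\otimes\ad(x))\mathbf{C}(r)$ modulo terms killed by (a), which you assert and describe but never derive. This is not a wrong step --- the identity is correct and is exactly Drinfeld's computation --- but as written the proposal defers the only nontrivial calculation: expanding $(\delta\otimes\id)\delta(x)$ for $r=\sum_i a_i\otimes b_i$ into the eight families of nested-bracket terms, cyclically symmetrizing, applying the Jacobi identity, and isolating precisely where the invariance of $r_+$ is invoked to dispose of the cross terms between $r$ and $\sigma(r)$. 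You correctly flag this as the main obstacle; to turn the proposal into a proof you would need to carry it out (or, as the paper does, simply cite \mcite{CP} for the whole statement). A minor remark: the preliminary observation that $\delta$ is automatically a $1$-cocycle is true but is not needed for the Lie-coalgebra claim itself --- it belongs to the bialgebra compatibility, which is not at issue here.
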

Such a Lie bialgebra $(\frak{g},\delta)$ is called a {\bf coboundary
Lie bialgebra}~\mcite{CP}.
\begin{defn}{\rm (\mcite{Vai})}
Let $\frak{g}$ be a Lie algebra. The following equation is called
the {\bf generalized classical Yang-Baxter equation (GCYBE)} in
$\frak{g}$:
\begin{equation}
({\rm ad}(x)\otimes \id\otimes \id+\id\otimes{\rm ad}(x)\otimes
\id+\id\otimes \id\otimes{\rm
ad}(x))([r_{12},r_{13}]+[r_{12},r_{23}]+[r_{13},r_{23}])=0,\;
\forall x\in\frak{g}.\mlabel{eq:gcybe1}
\end{equation}
\end{defn}
\begin{lemma}{\rm (\mcite{BGN1})}
Let $\frak{g}$ be a Lie algebra with finite $\bfk$-dimension
and let $r\in \frak{g}\otimes \frak{g}$. Let $[,]_{\delta}$ be the
bracket on $\frakg^*$ induced by Eq.~$($\mref{eq:1coboundary}$)$,
defined by
$$ \langle [a^*,b^*]_\delta, x\rangle = \langle a^*\ot b^*, \delta(x)\rangle, \quad \forall x\in \frakg, a^*, b^*\in \frakg^*.$$
Then for the $\ttl{r}:\frakg^*\to \frakg$ induced from $r$, we have
\begin{equation}
[a^*,b^*]_{\delta}={\rm ad}^*(\ttl{r}(a^*))b^*+{\rm
ad}^*(\ttl{r}^{\dual}(a^*))b^*,\quad \forall
a^*,b^*\in\frak{g}^*.\mlabel{eq:deltap}
\end{equation}
Further, let $\ttl{r}_\pm:\frak g^*\rightarrow \frak g$ be the two
linear maps given by Eq.~(\mref{eq:alphabeta}). If $r_+$ is
invariant, then
\begin{equation}
[a^*,b^*]_{\delta}={\rm ad}^*(\ttl{r}_-(a^*))b^*-{\rm
ad}^*(\ttl{r}_-(a^*))b^*,\quad \forall
a^*,b^*\in\frak{g}^*.\mlabel{eq:braalpha}
\end{equation}
\mlabel{le:bradelta}
\end{lemma}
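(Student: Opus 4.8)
The plan is to prove the two identities \eqref{eq:deltap} and \eqref{eq:braalpha} by direct computation, pairing both sides against an arbitrary element $x\in\frakg$ and using only the definitions of the dual map, the coadjoint action, and $\ttl{r}$.

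First I would establish \eqref{eq:deltap}. Write $r=\sum_i u_i\ot v_i$. By the definition of $[\,,\,]_\delta$ and of $\delta$ in \eqref{eq:1coboundary}, for $a^*,b^*\in\frakg^*$ and $x\in\frakg$ we have $\langle[a^*,b^*]_\delta,x\rangle=\langle a^*\ot b^*,({\rm ad}(x)\ot\id+\id\ot{\rm ad}(x))r\rangle=\sum_i\big(\langle a^*,[x,u_i]\rangle\langle b^*,v_i\rangle+\langle a^*,u_i\rangle\langle b^*,[x,v_i]\rangle\big)$. In the first sum, $\langle a^*,[x,u_i]\rangle=-\langle{\rm ad}^*(x)a^*,u_i\rangle$... rather, it is cleaner to recognize $\sum_i\langle a^*,u_i\rangle[x,v_i]=[x,\ttl{r}(a^*)]$ and $\sum_i\langle a^*,[x,u_i]\rangle v_i$: since $\langle a^*,[x,u_i]\rangle=\langle{\rm ad}^*(x)^{\!*}\!\cdots\rangle$, one checks this equals $-\ttl{r}({\rm ad}^*(x)a^*)$ paired appropriately; collecting terms, $\langle[a^*,b^*]_\delta,x\rangle=\langle b^*,[x,\ttl{r}(a^*)]\rangle-\langle{\rm ad}^*(x)a^*,\ldots\rangle$. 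Converting each $\langle-,[x,-]\rangle$ into a coadjoint pairing $\langle{\rm ad}^*(\ttl{r}(a^*))b^*,x\rangle$ and using Eq.~\eqref{eq:tt} (so that $\sum_i u_i\langle v_i,\cdot\rangle$ gives $\ttl{r}^\dual$) produces exactly ${\rm ad}^*(\ttl{r}(a^*))b^*+{\rm ad}^*(\ttl{r}^\dual(a^*))b^*$. This is essentially the computation already carried out in the proof of Proposition~\ref{pp:tl}, Item~\eqref{it:dual}, adapted with the Lie bracket inserted.

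Next I would deduce \eqref{eq:braalpha} from \eqref{eq:deltap}. Since $\ttl{r}=\ttl{r}_++\ttl{r}_-$ and, by the remark following Eq.~\eqref{eq:alphabeta} together with Eq.~\eqref{eq:tt}, $\ttl{r}^\dual=\widehat{\sigma(r)}=\ttl{r}_+-\ttl{r}_-$, the right-hand side of \eqref{eq:deltap} becomes ${\rm ad}^*((\ttl{r}_++\ttl{r}_-)(a^*))b^*+{\rm ad}^*((\ttl{r}_+-\ttl{r}_-)(a^*))b^*=2\,{\rm ad}^*(\ttl{r}_+(a^*))b^*$ before using invariance — hmm, that is not yet what is claimed, so I would instead keep the two $\ttl{r}_\pm$ terms separate: $[a^*,b^*]_\delta={\rm ad}^*(\ttl{r}_+(a^*))b^*+{\rm ad}^*(\ttl{r}_-(a^*))b^*+{\rm ad}^*(\ttl{r}_+(a^*))b^*-{\rm ad}^*(\ttl{r}_-(a^*))b^*$. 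Now the hypothesis that $r_+$ is invariant is invoked: by Lemma~\ref{le:symmetry}, $\ttl{r}_+=\widehat{r_+}$ being antisymmetric means ${\rm ad}^*(\ttl{r}_+(a^*))b^*+{\rm ad}^*(\ttl{r}_+(b^*))a^*=0$, but more useful here is the $\frakg$-invariance; in any case the two $\ttl{r}_+$ contributions cancel against something, leaving the asserted form $[a^*,b^*]_\delta={\rm ad}^*(\ttl{r}_-(a^*))b^*-{\rm ad}^*(\ttl{r}_-(a^*))b^*$.

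The main obstacle — and the point deserving the most care — is the bookkeeping of which $\ttl{r}_+$-terms survive in the second step: as literally written, the claimed right-hand side of \eqref{eq:braalpha} is a difference of two identical expressions (hence zero), so the intended reading must be ${\rm ad}^*(\ttl{r}_-(a^*))b^*-{\rm ad}^*(\ttl{r}_-(b^*))a^*$ (a typo for a $-[a^*,b^*]$-type antisymmetrization), and the role of the invariance of $r_+$ is precisely to kill the ${\rm ad}^*(\ttl{r}_+(\cdot))$ terms via the antisymmetry/$\frakg$-invariance equivalences of Lemma~\ref{le:symmetry}. I would therefore state the second identity in the corrected antisymmetrized form, verify it by substituting $\ttl{r}=\ttl{r}_++\ttl{r}_-$ and $\ttl{r}^\dual=\ttl{r}_+-\ttl{r}_-$ into \eqref{eq:deltap}, and then apply Lemma~\ref{le:symmetry} to discard the symmetric part. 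All remaining steps are routine manipulations of finite-dimensional pairings.
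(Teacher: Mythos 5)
Your overall strategy --- compute $[a^*,b^*]_\delta$ directly from the definitions, then split $\ttl{r}=\ttl{r}_++\ttl{r}_-$ and $\ttl{r}^{\dual}=\ttl{r}_+-\ttl{r}_-$ and invoke Lemma~\mref{le:symmetry} to kill the symmetric part --- is the right one (the paper itself gives no proof here, deferring to \mcite{BGN1}). But there is a genuine gap: you correctly spot that Eq.~(\mref{eq:braalpha}) is misprinted, yet you do not notice that Eq.~(\mref{eq:deltap}) is misprinted in exactly the same way, and this is what derails your second step. Writing $r=\sum_i u_i\ot v_i$, one gets $\langle [a^*,b^*]_\delta,x\rangle=\sum_i\langle a^*,[x,u_i]\rangle\langle b^*,v_i\rangle+\sum_i\langle a^*,u_i\rangle\langle b^*,[x,v_i]\rangle$. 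The second sum is $\langle b^*,[x,\ttl{r}(a^*)]\rangle=\langle {\rm ad}^*(\ttl{r}(a^*))b^*,x\rangle$, but the first sum is $\langle a^*,[x,\ttl{r}^{\dual}(b^*)]\rangle=\langle {\rm ad}^*(\ttl{r}^{\dual}(b^*))a^*,x\rangle$: the coadjoint action lands on $a^*$ with argument $\ttl{r}^{\dual}(b^*)$, not on $b^*$ with argument $\ttl{r}^{\dual}(a^*)$. So the correct first identity is $[a^*,b^*]_\delta={\rm ad}^*(\ttl{r}(a^*))b^*+{\rm ad}^*(\ttl{r}^{\dual}(b^*))a^*$. (Sanity check: for $r$ skew-symmetric this yields ${\rm ad}^*(\ttl{r}(a^*))b^*-{\rm ad}^*(\ttl{r}(b^*))a^*$, matching Eq.~(\mref{eq:kuper}), whereas the printed version would give $0$.) Your sketch asserts that the computation ``produces exactly'' the printed right-hand side; carried out carefully it does not, and the ellipses in your first paragraph hide precisely the index swap that matters.

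This is why your second step stalls. Substituting the printed Eq.~(\mref{eq:deltap}) you get $2\,{\rm ad}^*(\ttl{r}_+(a^*))b^*$, as you observe, and no regrouping of two terms that both act on $b^*$ with arguments built from $a^*$ can ever produce an antisymmetrized expression in $a^*,b^*$; saying the $\ttl{r}_+$ contributions ``cancel against something'' is not an argument. With the corrected first identity the derivation is immediate: $[a^*,b^*]_\delta={\rm ad}^*(\ttl{r}_+(a^*))b^*+{\rm ad}^*(\ttl{r}_-(a^*))b^*+{\rm ad}^*(\ttl{r}_+(b^*))a^*-{\rm ad}^*(\ttl{r}_-(b^*))a^*$, and when $r_+$ is invariant, Lemma~\mref{le:symmetry}~(\mref{it:betaantisymmetry}) gives ${\rm ad}^*(\ttl{r}_+(a^*))b^*+{\rm ad}^*(\ttl{r}_+(b^*))a^*=0$, leaving $[a^*,b^*]_\delta={\rm ad}^*(\ttl{r}_-(a^*))b^*-{\rm ad}^*(\ttl{r}_-(b^*))a^*$, i.e.\ the corrected Eq.~(\mref{eq:braalpha}). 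So: correct the statements of both displayed equations, carry out the first computation explicitly rather than by allusion, and the rest follows.
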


By Proposition~\mref{pp:liebialgebra} and Lemma~\mref{le:bradelta}, one can get
the following known conclusion:
\begin{coro} {\rm (\mcite{YKS1,YKS2})}
Let $\frak{g}$ be a Lie algebra and $r\in \frak{g}\otimes \frak{g}$.
Suppose that $r$ is skew-symmetric, i.e., $r_+=0$. Then $r$
is a solution of GCYBE if and only if Eq.~$($\mref{eq:braalpha}$)$
defines a Lie bracket on $\frakg^*$.\mlabel{co:syliebra}
\end{coro}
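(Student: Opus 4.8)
The plan is to derive Corollary~\mref{co:syliebra} directly from Proposition~\mref{pp:liebialgebra} and Lemma~\mref{le:bradelta} by specializing to the skew-symmetric case $r_+=0$. Since $r$ is skew-symmetric, we have $r_+=0$, so condition~(\mref{it:syinvariant}) of Proposition~\mref{pp:liebialgebra} is automatically satisfied (it reads $({\rm ad}(x)\otimes\id+\id\otimes{\rm ad}(x))r_+=0$, which is trivially true). Therefore, by Proposition~\mref{pp:liebialgebra}, the map $\delta$ defined by Eq.~(\mref{eq:1coboundary}) makes $(\frakg,\delta)$ a Lie coalgebra --- equivalently, $\delta^*$ defines a Lie bracket on $\frakg^*$ --- if and only if condition~(b) of that proposition holds, namely
\begin{equation}
({\rm ad}(x)\otimes \id\otimes \id+\id\otimes{\rm ad}(x)\otimes
\id+\id\otimes \id\otimes{\rm
ad}(x))([r_{12},r_{13}]+[r_{12},r_{23}]+[r_{13},r_{23}])=0
\notag
\end{equation}
for all $x\in\frakg$. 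But this is exactly the statement that $r$ is a solution of GCYBE (Eq.~(\mref{eq:gcybe1})).

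First I would record that the bracket on $\frakg^*$ coming from $\delta^*$ is precisely $[\,,\,]_\delta$ as defined in Lemma~\mref{le:bradelta}, by the pairing formula $\langle [a^*,b^*]_\delta, x\rangle = \langle a^*\otimes b^*, \delta(x)\rangle$; this identifies the ``Lie bracket on $\frakg^*$'' appearing in Proposition~\mref{pp:liebialgebra} with the bracket appearing in the statement of the corollary. Next, since $r_+=0$ is in particular invariant, Lemma~\mref{le:bradelta} applies and gives $[a^*,b^*]_\delta = {\rm ad}^*(\ttl{r}_-(a^*))b^* - {\rm ad}^*(\ttl{r}_-(a^*))b^*$, which is exactly Eq.~(\mref{eq:braalpha}) (note that when $r_+=0$ we have $\ttl{r}_-=\ttl{r}$, so this also matches Eq.~(\mref{eq:deltap}) directly with $\ttl{r}^\dual = -\ttl{r}$). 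Hence ``Eq.~(\mref{eq:braalpha}) defines a Lie bracket on $\frakg^*$'' is equivalent to ``$\delta^*$ defines a Lie algebra structure on $\frakg$,'' and the chain of equivalences closes.

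Putting the pieces together: $r$ is a solution of GCYBE $\iff$ condition~(b) of Proposition~\mref{pp:liebialgebra} holds $\iff$ (using that condition~(a) is free when $r_+=0$) $\delta^*$ defines a Lie algebra structure on $\frakg$ $\iff$ $[\,,\,]_\delta$ is a Lie bracket on $\frakg^*$ $\iff$ Eq.~(\mref{eq:braalpha}) defines a Lie bracket on $\frakg^*$. There is essentially no analytic obstacle here; the only point requiring a little care is the bookkeeping that the abstract ``Lie bracket on $\frakg^*$'' furnished by Proposition~\mref{pp:liebialgebra} coincides, as a concrete bilinear operation, with the bracket $[\,,\,]_\delta$ written down via the pairing and then re-expressed through $\ttl{r}_-$ by Lemma~\mref{le:bradelta}. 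Once that identification is made explicit, the corollary is immediate. I would keep the write-up to a couple of sentences, simply citing Proposition~\mref{pp:liebialgebra} and Lemma~\mref{le:bradelta} and noting the vanishing of condition~(a) in the skew-symmetric case.
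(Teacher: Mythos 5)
Your proposal is correct and follows exactly the route the paper intends: the paper itself derives this corollary by simply citing Proposition~\mref{pp:liebialgebra} and Lemma~\mref{le:bradelta}, and your write-up just makes explicit the two observations needed (condition~(a) of the proposition is vacuous when $r_+=0$, and the abstract bracket $\delta^*$ coincides with $[\,,\,]_\delta$, which Lemma~\mref{le:bradelta} rewrites as Eq.~(\mref{eq:braalpha})). No gaps; nothing further to add.
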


\begin{lemma}
Let $\frak{g}$ be a Lie algebra and $\rho:\frak{g}\to\frak{gl}(V)$
be a representation of $\frakg$ on a vector space $V$. Let
$\alpha:V\to\frak{g}$ be a linear map. Then the bracket
\begin{equation}
[u,v]_{\alpha}:=\rho(\alpha(u))v-\rho(\alpha(v))u,\quad \forall
u,v\in V,\mlabel{eq:doublelie}
\end{equation}
defines a Lie algebra structure on $V$ if and only if the following
equation holds:
\begin{equation}
\rho([\alpha(v),\alpha(u)]-\alpha(\rho(\alpha(v))u-\rho(\alpha(u))v)w+{\rm
cycl.}=0,\quad \forall u,v\in V.\mlabel{eq:bracycl}
\end{equation}
Here for an expression $f(u,v,w)$ in $u,v,w$, $f(u,v,w)+{\rm cycl.}$
means $f(u,v,w)+f(v,w,u)+f(w,u,v)$. \mlabel{le:bracycl}
\end{lemma}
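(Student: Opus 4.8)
The claim is a direct Jacobi-identity computation: the bracket $[\,,\,]_\alpha$ defined by Eq.~(\ref{eq:doublelie}) is automatically bilinear and skew-symmetric, so the only thing to check is that it satisfies the Jacobi identity, and we must show this is equivalent to Eq.~(\ref{eq:bracycl}). The plan is to expand $[[u,v]_\alpha,w]_\alpha + {\rm cycl.}$ using the definition and sort the resulting terms.

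\smallskip

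\textbf{Step 1: Expand the Jacobiator.} I would write out
$$[[u,v]_\alpha,w]_\alpha = \rho(\alpha([u,v]_\alpha))w - \rho(\alpha(w))[u,v]_\alpha
= \rho\big(\alpha(\rho(\alpha(u))v-\rho(\alpha(v))u)\big)w - \rho(\alpha(w))\big(\rho(\alpha(u))v-\rho(\alpha(v))u\big),$$
and then add the two cyclic permutations. The terms of the second type, namely $-\rho(\alpha(w))\rho(\alpha(u))v$ and its relatives, combine (after using that $\rho$ is a representation, so $\rho(x)\rho(y)-\rho(y)\rho(x)=\rho([x,y])$) with leftover pieces to produce the commutator terms $\rho([\alpha(v),\alpha(u)])w$ appearing in Eq.~(\ref{eq:bracycl}). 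Concretely, the six ``double-$\rho$'' terms $-\rho(\alpha(w))\rho(\alpha(u))v$, $+\rho(\alpha(w))\rho(\alpha(v))u$, and cyclic, regroup by the argument vector: the coefficient of $w$ is $-\rho(\alpha(u))\rho(\alpha(v)) + \rho(\alpha(v))\rho(\alpha(u)) = \rho([\alpha(v),\alpha(u)])$, and similarly for $u$ and $v$. Meanwhile the ``$\alpha$ applied to a bracket'' terms $\rho(\alpha(\rho(\alpha(u))v - \rho(\alpha(v))u))w$ and cyclic give exactly the $-\alpha(\rho(\alpha(v))u - \rho(\alpha(u))v)$ pieces (with a sign bookkeeping that I would track carefully). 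Collecting everything, the Jacobiator equals precisely the left-hand side of Eq.~(\ref{eq:bracycl}), namely $\rho\big([\alpha(v),\alpha(u)] - \alpha(\rho(\alpha(v))u - \rho(\alpha(u))v)\big)w + {\rm cycl.}$

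\smallskip

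\textbf{Step 2: Conclude.} Since $[\,,\,]_\alpha$ is visibly bilinear and antisymmetric, it defines a Lie bracket on $V$ if and only if its Jacobiator vanishes for all $u,v,w\in V$; by Step~1 this Jacobiator is identically the left side of Eq.~(\ref{eq:bracycl}), so the equivalence follows. (One may also note the consistency check: when $\alpha$ is an $\calo$-operator of weight zero, the inner expression $[\alpha(v),\alpha(u)]-\alpha(\rho(\alpha(v))u-\rho(\alpha(u))v)$ vanishes identically by Eq.~(\ref{eq:oop}), recovering the classical fact that $[\,,\,]_\alpha$ is then a Lie bracket.)

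\smallskip

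\textbf{Main obstacle.} There is no conceptual difficulty here; the only real hazard is sign and index bookkeeping in the cyclic sum, since the expression inside $\rho(\cdot)$ in Eq.~(\ref{eq:bracycl}) is written as $[\alpha(v),\alpha(u)]$ (note the order) rather than $[\alpha(u),\alpha(v)]$, and one must make sure the cyclic-sum convention $f(u,v,w)+f(v,w,u)+f(w,u,v)$ is applied consistently so that the six double-$\rho$ terms and the three $\alpha$-of-bracket terms land with exactly the stated signs. I would organize the computation by grouping terms according to which of $u,v,w$ sits in the outermost $\rho(\cdot)(\,\cdot\,)$ slot, which makes the regrouping essentially mechanical.
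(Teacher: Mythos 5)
Your proposal is correct and follows exactly the same route as the paper's proof: expand the Jacobiator of $[\,,\,]_\alpha$, group the six double-$\rho$ terms by the vector in the final slot, and use $\rho(x)\rho(y)-\rho(y)\rho(x)=\rho([x,y])$ together with the $\alpha$-of-bracket terms to identify the result with the left-hand side of Eq.~(\mref{eq:bracycl}). The sign bookkeeping you flag works out as you describe (the coefficient of $w$ is $\rho([\alpha(v),\alpha(u)]-\alpha(\rho(\alpha(v))u-\rho(\alpha(u))v))$, matching the stated ordering), so no changes are needed.
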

\begin{proof} For any $u,v,w\in V$, we have
\begin{eqnarray*}
{[[u,v]_\alpha,w]}_\alpha&=&\rho(\alpha(\rho(\alpha(u))v-\rho(\alpha(v))u))w-\rho(\alpha(w))\rho(\alpha(u))v+\rho(\alpha(w))\rho(\alpha(v))u,\\
{[[w,u]_\alpha,v]}_\alpha&=& \rho(\alpha(\rho(\alpha(w))u-\rho(\alpha(u))w))v-\rho(\alpha(v))\rho(\alpha(w))u+\rho(\alpha(v))\rho(\alpha(u))w,\\
{[[v,w]_\alpha,u]}_\alpha&=&\rho(\alpha(\rho(\alpha(v))w-\rho(\alpha(w))v))u-\rho(\alpha(u))\rho(\alpha(v))w+\rho(\alpha(u))\rho(\alpha(w))v.
\end{eqnarray*}
Therefore,
\begin{eqnarray*}
& &[[u,v]_\alpha,w]_\alpha+[[w,u]_\alpha,v]_\alpha+[[v,w]_\alpha,u]_\alpha\\
&=&\rho([\alpha(v),\alpha(u)]-\alpha(\rho(\alpha(v))u-\rho(\alpha(u))v))w+{\rm
cycl}, \quad \forall u,v,w\in V.
\end{eqnarray*}
\end{proof}

The following result can be obtained from~\mcite{YKS1, YKS2} in terms of the cocycle conditions. In order to be self-contained,
we give a separate proof.

\begin{theorem} {\rm (\mcite{YKS1,YKS2})}
Let $\frak{g}$ be a Lie algebra with finite $\bfk$-dimension,  $\rho:\frak{g}\to\frak{gl}(V)$ be a finite-dimensional
representation of $\frakg$ and $\alpha:V\to \frakg$ be a linear operator. Using the same notations as in Eq.~$($\mref{eq:ltl}$)$,
$\ttt{\ltt{\alpha}}_{-}\in \ttt{\frak{g}}\otimes\ttt{\frak{g}}$ is a
skew-symmetric solution of GCYBE~(\mref{eq:gcybe1}) if and only if
$\alpha$ satisfies Eq.~$($\mref{eq:bracycl}$)$
 and
 \begin{equation}
[x,B_{\alpha}(u,v)]=B_{\alpha}(\rho(x)u,v)+B_{\alpha}(u,\rho(x)v)\quad
\forall u,v\in V,x\in\frak{g},\mlabel{eq:schoutn}
 \end{equation}
 where
 \begin{equation}
B_{\alpha}(u,v)=[\alpha(u),\alpha(v)]-\alpha(\rho(\alpha(u))v-\rho(\alpha(v))u),\quad
\forall u,v\in V.
\mlabel{eq:cocycle}
 \end{equation}
\mlabel{thm:exogcybe}
\end{theorem}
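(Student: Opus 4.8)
The plan is to compute the GCYBE operator $(\mathrm{ad}(X)\otimes \id\otimes \id + \id\otimes \mathrm{ad}(X)\otimes \id + \id\otimes \id\otimes \mathrm{ad}(X))\mathbf{C}(r)$ applied to $r = \ttt{\ltt{\alpha}}_-$ inside $\ttt{\frak g}^{\otimes 3}$, and to decompose it into components indexed by the summands of $\ttt{\frak g} = \frak g \oplus V^*$ in each tensor slot, showing that the vanishing of all these components is equivalent to the pair of conditions~(\mref{eq:bracycl}) and~(\mref{eq:schoutn}). First I would recall from Theorem~\mref{thm:bai} (or from the proof of Corollary~\mref{co:motoaybe1}) the explicit form of $\widehat{(\ttt{\ltt{\alpha}}_-)}$: it is the skew-symmetrization of the map $\ttt{\ltt{\alpha}}$, which sends $(v + a^*) \mapsto \tfrac12(\alpha(v) - \alpha^*(a^*)) \in \frak g \oplus V^*$; equivalently $r = \ttt{\ltt{\alpha}}_-$ lies in $(\frak g\otimes V^*) \oplus (V^*\otimes \frak g)$ with the two pieces interchanged by $\sigma$. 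The natural tool here is Corollary~\mref{co:syliebra}, which says $r$ skew-symmetric solves GCYBE iff Eq.~(\mref{eq:braalpha}) — i.e. the induced cobracket on $\ttt{\frak g}^*$ — defines a Lie bracket on $\ttt{\frak g}^* = \frak g^* \oplus V$. So the cleanest route is: identify the bracket $[\,,\,]_\delta$ on $\ttt{\frak g}^* = \frak g^* \oplus V$ coming from $\delta(X) = (\mathrm{ad}(X)\otimes\id + \id\otimes\mathrm{ad}(X))r$, and then impose the Jacobi identity for it.

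The key computational step is Lemma~\mref{le:bradelta}: since $r = \ttt{\ltt{\alpha}}_-$ is skew-symmetric, $[\xi,\eta]_\delta = 2\,\mathrm{ad}^*_{\ttt{\frak g}}(\widehat r(\xi))\eta$ for $\xi,\eta\in\ttt{\frak g}^*$ (up to the harmless normalization; one may absorb the factor $\tfrac12$ in $\widehat r$ and rescale, as in the proof of Lemma~\mref{le:syco}, since $\bfk$ is a field). Writing $\xi = a^* + u$, $\eta = b^* + v$ with $a^*,b^*\in\frak g^*$, $u,v\in V$, and using that $\widehat r(a^* + u) = \tfrac12(\alpha(u) - \alpha^*(a^*))\in\frak g$ — note the image lies in $\frak g\subset\ttt{\frak g}$ — one evaluates $\mathrm{ad}^*_{\ttt{\frak g}}$ of an element of $\frak g$ on $\ttt{\frak g}^* = \frak g^*\oplus V$: on the $\frak g^*$ part it is $\mathrm{ad}^*_{\frak g}$, and on the $V$ part it is (a sign times) $\rho$, because $(V^*,\rho^*)$ is the dual module inside the semidirect product $\ttt{\frak g} = \frak g\ltimes_{\rho^*}V^*$. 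Carrying this through, the bracket on $\ttt{\frak g}^*$ splits as: the $V$-component of $[a^*+u,\,b^*+v]_\delta$ is (proportional to) $\rho(\alpha(u))v - \rho(\alpha(v))u = [u,v]_\alpha$ of Eq.~(\mref{eq:doublelie}), while the $\frak g^*$-component mixes $\mathrm{ad}^*(\alpha(u))b^*$, $\mathrm{ad}^*(\alpha(v))a^*$ and the coadjoint-type terms $\alpha^*$ applied appropriately. So the bracket $[\,,\,]_\delta$ on $\frak g^*\oplus V$ is exactly a "matched-pair / double" type bracket built from $\alpha$, with its $V$-slot being $[\,,\,]_\alpha$.

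Then I would split the Jacobi identity for $[\,,\,]_\delta$ on $\frak g^*\oplus V$ according to how many $V$-arguments appear. The "three $V$-arguments, $V$-component" piece of Jacobi is precisely the content of Lemma~\mref{le:bracycl}, giving condition~(\mref{eq:bracycl}). The remaining pieces — those involving $\frak g^*$-arguments, or the $\frak g^*$-component — should, after using that $\rho$ is a representation (so $[\rho(x),\rho(y)] = \rho([x,y])$) and the definition of the coadjoint action, collapse to the single extra requirement that $B_\alpha(u,v)$ of Eq.~(\mref{eq:cocycle}) be a $\frak g$-module homomorphism in the sense of Eq.~(\mref{eq:schoutn}); intuitively, $B_\alpha$ is the "curvature" measuring failure of $\alpha$ to be an $\calo$-operator, and~(\mref{eq:schoutn}) is the cocycle-type invariance of this curvature. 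I expect the main obstacle to be the bookkeeping in this last step: one must verify that the several mixed-type Jacobiator components (e.g. two $\frak g^*$'s and one $V$, or one $\frak g^*$ and two $V$'s looking at the $\frak g^*$-component) either vanish automatically or reduce to~(\mref{eq:schoutn}), rather than yielding new constraints — this is where the duality identities from Proposition~\mref{pp:tl} (especially $\widehat{\sigma(r)} = \ttl r^\dual$ and the compatibility of $\ttt{\cdot}$ with $\LTT$) and careful tracking of signs in $\mathrm{ad}^*$ versus $\rho$ must be deployed. A sanity check: setting $\alpha$ to be an honest $\calo$-operator makes $B_\alpha = 0$, condition~(\mref{eq:schoutn}) becomes vacuous, (\mref{eq:bracycl}) holds automatically, and one recovers that $\ttt{\ltt{\alpha}}_-$ solves the ordinary CYBE — consistent with Corollary~\mref{co:motoaybe1}(\mref{it:motoaybe5}), since CYBE implies GCYBE.
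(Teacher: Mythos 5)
Your proposal is correct and follows essentially the same route as the paper: reduce via Corollary~\mref{co:syliebra} to the requirement that the induced bracket on $\ttt{\frakg}^*=\frakg^*\oplus V$ satisfy the Jacobi identity, convert this via Lemma~\mref{le:bracycl} (applied to $\ttt{\alpha}_-$ and $\ad^*_{\ttt{\frakg}}$) into a cyclic vanishing condition, and then decompose by the types of the three arguments, with the all-$V$ component giving Eq.~(\mref{eq:bracycl}) and the mixed components each collapsing to Eq.~(\mref{eq:schoutn}). The only slip is your intermediate formula $[\xi,\eta]_\delta=2\,\ad^*_{\ttt{\frakg}}(\widehat r(\xi))\eta$, which for skew-symmetric $r$ should be the skew-symmetrization $\ad^*_{\ttt{\frakg}}(\widehat r(\xi))\eta-\ad^*_{\ttt{\frakg}}(\widehat r(\eta))\xi$ (as you in fact use when you identify the $V$-component with $[u,v]_\alpha$), so it does not affect the argument.
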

If a linear operator $\alpha$ satisfies Eq.~(\mref{eq:schoutn}), it is called a {\bf generalized $\calo$-operator}.

One can consider more general Lie brackets and cocycle conditions than given in Eq.~(\mref{eq:doublelie}) and (\mref{eq:cocycle}), by involving (nonzero) Lie structures on the representation spaces and Rota-Baxter operators or, more generally, $\calo$-operators of nonzero weights. We refer the reader to \mcite{BGN1} for some results in this direction. It would be interesting to consider explicit cycle conditions corresponding to Rota-Baxter operators with nonzero weights.

\begin{proof}
By Corollary~\mref{co:syliebra} and Lemma~\mref{le:bracycl} we know
that $\ttt{\ltt{\alpha}}_{-}\in \ttt{\frak{g}}\otimes\ttt{\frak{g}}$
is a skew-symmetric solution of GCYBE~(\mref{eq:gcybe1}) if and only
if for any $u,v,w\in V$ and $a^*,b^*,c^*\in \frak{g}^*$, the following
equation holds {\small
\begin{eqnarray*}
& &{\rm ad}_{\ttt{\frak{g}}}^*([\ttt{\alpha}_{-}(u+a^*),
\ttt{\alpha}_{-}(v+b^*)]- \ttt{\alpha}_{-}({\rm
ad}^*_{\ttt{\frak{g}}} (\ttt{\alpha}_{-}(u+a^*))(v+b^*)-{\rm
ad}^*_{\ttt{\frak{g}}}(\ttt{\alpha}_{-}(v+b^*))(u+a^*)))(w+c^*)\\
& &+{\rm ad}_{\ttt{\frak{g}}}^*([\ttt{\alpha}_{-}(v+b^*),
\ttt{\alpha}_{-}(w+c^*)]-\ttt{\alpha}_{-} ({\rm
ad}^*_{\ttt{\frak{g}}}(\ttt{\alpha}_{-}(v+b^*))(w+c^*)-{\rm
ad}^*_{\ttt{\frak{g}}}(\ttt{\alpha}_{-}(w+c^*))(v+b^*)))(u+a^*)\\
& & +{\rm ad}_{\ttt{\frak{g}}}^*([\ttt{\alpha}_{-}(w+c^*),
\ttt{\alpha}_{-}(u+a^*)]-\ttt{\alpha}_{-} ({\rm
ad}^*_{\ttt{\frak{g}}}(\ttt{\alpha}_{-}(w+c^*))(u+a^*)-{\rm
ad}^*_{\ttt{\frak{g}}}(\ttt{\alpha}_{-}(u+a^*))(w+c^*)))(v+b^*)=0.
\end{eqnarray*}
} By the proof of Theorem~\mref{thm:skewgm}, this is equivalent to
\begin{eqnarray*}
& &{\rm
ad}_{\ttt{\frak{g}}}^*([\alpha(u),\alpha(v)]-\alpha(\rho(\alpha(u))v)+
\alpha(\rho(\alpha(v))u)-\rho^*(\alpha(u))\alpha^*(b^*)
+\alpha^*({\rm ad}^*(\alpha(u))b^*)\\
& &+\alpha^*({\rm
ad}^*_{\ttt{\frak{g}}}(\alpha^*(b^*))u)+\rho^*(\alpha(v))\alpha^*(a^*)-\alpha^*({\rm
ad}^*_{\ttt{\frak{g}}}(\alpha^*(a^*))v)-\alpha^*({\rm
ad}^*(\alpha(v))a^*))(w+c^*)\\
& &+{\rm
ad}_{\ttt{\frak{g}}}^*([\alpha(v),\alpha(w)]-\alpha(\rho(\alpha(v))w)+
\alpha(\rho(\alpha(w))v)-\rho^*(\alpha(v))\alpha^*(c^*)
+\alpha^*({\rm ad}^*(\alpha(v))c^*)\\
& &+\alpha^*({\rm
ad}^*_{\ttt{\frak{g}}}(\alpha^*(c^*))v)+\rho^*(\alpha(w))\alpha^*(b^*)-\alpha^*({\rm
ad}^*_{\ttt{\frak{g}}}(\alpha^*(b^*))w)-\alpha^*({\rm
ad}^*(\alpha(w))b^*))(u+a^*)\\
& &+{\rm
ad}_{\ttt{\frak{g}}}^*([\alpha(w),\alpha(u)]-\alpha(\rho(\alpha(w))u)+
\alpha(\rho(\alpha(u))w)-\rho^*(\alpha(w))\alpha^*(a^*)
+\alpha^*({\rm ad}^*(\alpha(w))a^*)\\
& &+\alpha^*({\rm
ad}^*_{\ttt{\frak{g}}}(\alpha^*(a^*))w)+\rho^*(\alpha(u))\alpha^*(c^*)-\alpha^*({\rm
ad}^*_{\ttt{\frak{g}}}(\alpha^*(c^*))u)-\alpha^*({\rm
ad}^*(\alpha(u))c^*))(v+b^*)=0.
\end{eqnarray*}
Since for any $s^*\in V^*$ and $a^*\in\frak{g}^*$, we have ${\rm
ad}_{\ttt{\frak{g}}}^*(s^*)a^*=0$, the above equation is equivalent
to the following equations:
{\allowdisplaybreaks
\begin{eqnarray}
& &{\rm
ad}_{\ttt{\frak{g}}}^*([\alpha(u),\alpha(v)]-\alpha(\rho(\alpha(u))v-\rho(\alpha(v))u))w+{\rm
cycl}.=0,\label{eq:equiva1}\\
& &{\rm
ad}_{\ttt{\frak{g}}}^*(-\rho^*(\alpha(u))\alpha^*(b^*)+\alpha^*({\rm
ad}^*(\alpha(u))b^*)+\alpha^*({\rm
ad}_{\ttt{\frak{g}}}^*(\alpha^*(b^*))u))w+{\rm
ad}_{\ttt{\frak{g}}}^*(\rho^*(\alpha(w))\alpha^*(b^*)\label{eq:equiva2}\\
& &-\alpha^*({\rm
ad}_{\ttt{\frak{g}}}^*(\alpha^*(b^*))w)-\alpha^*({\rm
ad}(\alpha(w))b^*))u+{\rm
ad}_{\ttt{\frak{g}}}^*([\alpha(w),\alpha(u)] -\alpha(\rho(\alpha(w))u-\rho(\alpha(u))w))b^*=0,\notag\\
& &{\rm
ad}_{\ttt{\frak{g}}}^*(\rho^*(\alpha(v))\alpha^*(a^*)-\alpha^*({\rm
ad}_{\ttt{\frak{g}}}^*(\alpha^*(a^*))v)-\alpha^*({\rm
ad}^*(\alpha(v))a^*))w+{\rm
ad}_{\ttt{\frak{g}}}^*([\alpha(v),\alpha(w)]\notag\\
& &-\alpha(\rho(\alpha(v))w)+\alpha(\rho(\alpha(w))v))a^*+{\rm
ad}_{\ttt{\frak{g}}}^*(-\rho^*(\alpha(w))\alpha^*(a^*)+\alpha^*({\rm
ad}^*(\alpha(w))a^*)\label{eq:equiva3}\\
& &+\alpha^*({\rm
ad}_{\ttt{\frak{g}}}^*(\alpha^*(a^*))w))v=0,\notag\\
& &{\rm ad}_{\ttt{\frak{g}}}^*([\alpha(u),\alpha(v)]
-\alpha(\rho(\alpha(u))v)+\alpha(\rho(\alpha(v))u))c^*+{\rm
ad}_{\ttt{\frak{g}}}^*(-\rho^*(\alpha(v))\alpha^*(c^*)+\alpha^*({\rm
ad}^*(\alpha(v))c^*)\label{eq:equiva4}\\
& &+\alpha^*({\rm ad}_{\ttt{\frak{g}}}^*(\alpha^*(c^*))v))u+{\rm
ad}_{\ttt{\frak{g}}}^*(\rho^*(\alpha(u))\alpha^*(c^*)-\alpha^*({\rm
ad}_{\ttt{\frak{g}}}^*(\alpha^*(c^*))u)-\alpha^*({\rm
ad}^*(\alpha(u))c^*))v=0.\notag
\end{eqnarray}
}
We shall prove
\begin{enumerate}
\item
Eq.~(\mref{eq:equiva1}) $\Leftrightarrow $ Eq.~(\mref{eq:bracycl}),
\item
Eq.~(\mref{eq:equiva2}) $\Leftrightarrow$ Eq.~(\mref{eq:equiva3})
$\Leftrightarrow$ Eq.~(\mref{eq:equiva4}) $\Leftrightarrow$
Eq.~(\mref{eq:schoutn}).
\end{enumerate}
The proofs of these statements are similar. So we
just prove that Eq.~(\mref{eq:equiva2}) holds if and only if
Eq.~(\mref{eq:schoutn}) holds. Let $LHS$ denotes the left-hand
side of Eq.~(\mref{eq:equiva2}). For any $v^*,s^*\in V^*$, $w\in
V$ and $x,y\in\frakg$, we have $\langle {\rm
ad}^*_{\tilde{\frakg}}(v^*)w,s^*\rangle=0, \langle{\rm
ad}^*_{\tilde{\frakg}}([x,y])b^*,s^*\rangle=0$. Thus we obtain
$\langle LHS,s^*\rangle=0$.

Further, for any $x\in\frak{g}$,
\begin{eqnarray*}
\langle LHS,x\rangle&=&\langle
w,[\rho^*(\alpha(u))\alpha^*(b^*)-\alpha^*({\rm
ad}^*(\alpha(u))b^*)-\alpha^*({\rm
ad}_{\ttt{\frak{g}}}^*(\alpha^*(b^*))u),x]\rangle\\
& &+\langle u,[-\rho^*(\alpha(w))\alpha^*(b^*)+\alpha^*({\rm
ad}_{\ttt{\frak{g}}}^*(\alpha^*(b^*))w)+\alpha^*({\rm
ad}^*(\alpha(w))b^*),x]\rangle \\
& &+\langle
b^*,[x,[\alpha(w),\alpha(u)]-\alpha(\rho(\alpha(w))u-\rho(\alpha(u))w)]\rangle\\
&=&\langle
-\alpha(\rho(\alpha(u))\rho(x)w)+[\alpha(u),\alpha(\rho(x)w)],b^*\rangle-
\langle[\alpha(\rho(x)w),\alpha^*(b^*)],u\rangle\\
& &+\langle\alpha(\rho(\alpha(w))\rho(x)u)+
[\alpha(\rho(x)u),\alpha(w)],b^*\rangle+ \langle
[\alpha(\rho(x)u),\alpha^*(b^*)],w\rangle\\
& & + \langle
b^*,[x,[\alpha(w),\alpha(u)]-\alpha(\rho(\alpha(w))u-\rho(\alpha(u))w)]\rangle\\
&=&\langle[\alpha(u),\alpha(\rho(x)w)]-\alpha(\rho(\alpha(u))\rho(x)w)+
\alpha(\rho(\alpha(\rho(x)w))u),b^*\rangle\\
& &+\langle[\alpha(\rho(x)u),\alpha(w)]+
\alpha(\rho(\alpha(w))\rho(x)u)-\alpha(\rho(\alpha(\rho(x)u))w),b^*\rangle\\
& &
+\langle[x,[\alpha(w),\alpha(u)]-\alpha(\rho(\alpha(w))u-\rho(\alpha(u))w)],b^*\rangle.
\end{eqnarray*}
So Eq.~(\mref{eq:equiva2}) holds if and only if
Eq.~(\mref{eq:schoutn}) holds.
\end{proof}

\begin{remark}{\rm With the notations as above, it is in fact straightforward to show that the bracket
\begin{equation}
[u,v]_{\alpha}:=\rho(\alpha(u))v-\rho(\alpha(v))u,\quad \forall
u,v\in V,
\end{equation}
defines a Lie algebra structure on $V$ if and only if Eq.~(\ref{eq:bracycl}) hold.}\end{remark}

\begin{coro}
Let $\frak{g}$ be a Lie algebra with finite $\bfk$-dimension.
\begin{enumerate}
\item Let $(\frak{k},\pi)$ be a $\frak{g}$-Lie algebra with finite
$\bfk$-dimension. Let $\alpha,\beta:\frak{k}\to\frakg$ be two
linear maps such that $\alpha$ is an extended $\calo$-operator of
weight $\lambda$ with extension $\beta$ of \ewt $(\kappa,\mu)$.
Then
$\ttt{\ltt{\alpha}}_{-}\in(\frakg\ltimes_{\pi^*}\frak{k}^*)\otimes(\frakg\ltimes_{\pi^*}\frak{k}^*)$
is a skew-symmetric solution of GCYBE if and only if the following
equations hold:
\begin{equation}
\lambda\pi(\alpha([u,v]_{\frak{k}}))w+\lambda\pi(\alpha([w,u]_{\frak{k}}))v+
\lambda\pi(\alpha([v,w]_{\frak{k}}))u=0,\mlabel{eq:lakm1}
\end{equation}
\begin{equation}
\lambda[x,\alpha([u,v]_{\frak{k}})]_{\frakg}=\lambda\alpha([\pi(x)u,v]_{\frak{k}})+
\lambda\alpha([u,\pi(x)v]_{\frak{k}}), \quad \forall x\in\frakg,u,v,w\in\frak{k}.
\mlabel{eq:lakm2}
\end{equation}
In particular, if
$\lambda=0$, i.e., $\alpha$ is an extended
$\calo$-operator of weight 0 with extension $\beta$ of
\ewt $(\kappa,\mu)$,
then $\ttt{\ltt{\alpha}}_{-}\in(\frakg\ltimes_{\pi^*}\frak{k}^*)\otimes(\frakg\ltimes_{\pi^*}\frak{k}^*)$
is a skew-symmetric solution of GCYBE. \mlabel{it:lakm1} \item Let
$(\frak{k},\pi)$ be a $\frak{g}$-Lie algebra with finite
$\bfk$-dimension. Let $\alpha:\frak{k}\to\frakg$ an
$\calo$-operator of weight $\lambda$. Then
$\ttt{\ltt{\alpha}}_{-}\in(\frakg\ltimes_{\pi^*}\frak{k}^*)\otimes(\frakg\ltimes_{\pi^*}\frak{k}^*)$
is a skew-symmetric solution of GCYBE if and only if
Eq.~$($\mref{eq:lakm1}$)$ and Eq.~$($\mref{eq:lakm2}$)$
hold.\mlabel{it:lakm2} \item Let $\rho:\frak{g}\to\frak{gl}(V)$ be
a finite dimensional representation of $\frakg$. Let
$\alpha,\beta:\frak{k}\to\frakg$ be two linear maps such that
$\alpha$ is an extended $\calo$-operator with extension $\beta$ of
\ewt $\kappa$. Then
$\ttt{\ltt{\alpha}}_{-}\in(\frakg\ltimes_{\rho^*}V^*)\otimes(\frakg\ltimes_{\rho^*}V^*)$
is a skew-symmetric solution of GCYBE.\mlabel{it:lakm3}
\end{enumerate}
\end{coro}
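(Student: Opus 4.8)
The corollary has three parts, all to be derived from Theorem~\mref{thm:exogcybe} together with the structural results already established. The key observation is that Theorem~\mref{thm:exogcybe} reduces ``$\ttt{\ltt{\alpha}}_-$ solves GCYBE in $\frakg\ltimes_{\pi^*}\frak{k}^*$'' to two conditions on $\alpha$: the Jacobi-type condition Eq.~(\mref{eq:bracycl}) and the cocycle condition Eq.~(\mref{eq:schoutn}), where both are phrased in terms of $B_\alpha(u,v)=[\alpha(u),\alpha(v)]-\alpha(\pi(\alpha(u))v-\pi(\alpha(v))u)$. So for each part the strategy is the same: substitute the defining identity of the relevant (extended) $\calo$-operator into $B_\alpha$, simplify, and check Eq.~(\mref{eq:bracycl}) and Eq.~(\mref{eq:schoutn}).

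\smallskip

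\textbf{Part (\mref{it:lakm1}).} Here $\alpha$ is an extended $\calo$-operator of weight $\lambda$ with extension $\beta$ of mass $(\kappa,\mu)$, so by Eq.~(\mref{eq:gmcybe}), $B_\alpha(u,v)=\lambda\,\alpha([u,v]_\frak{k})+\kappa[\beta(u),\beta(v)]_\frakg+\mu\beta([u,v]_\frak{k})$. The plan is: (i) plug this into Eq.~(\mref{eq:bracycl}), i.e.\ compute $\pi(B_\alpha(v,u))w+\mathrm{cycl.}$; the $\beta$-terms must vanish because $\beta$ is an antisymmetric $\frakg$-module homomorphism of mass $(\kappa,\mu)$ --- the $\kappa[\beta(u),\beta(v)]$ part dies by Eq.~(\mref{eq:thantisymmetry}) combined with Eq.~(\mref{eq:thginvariance}) (so that $\kappa\pi([\beta(v),\beta(u)])w=\kappa[\beta(v),\beta(u)]\cdot w=0$ cyclically), and the $\mu\beta([u,v]_\frak{k})$ part dies by Eq.~(\mref{eq:thequivalence}) (which gives $\mu\pi(\beta([v,u]_\frak{k}))w=\mu[\beta(v)\cdot u,w]_\frak{k}$, and the $\frak{k}$-Jacobi identity makes the cyclic sum vanish). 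What survives is exactly the $\lambda$-term, namely Eq.~(\mref{eq:lakm1}). (ii) Do the same with Eq.~(\mref{eq:schoutn}): $[x,B_\alpha(u,v)]_\frakg=\lambda[x,\alpha([u,v]_\frak{k})]_\frakg+\kappa[x,[\beta(u),\beta(v)]]_\frakg+\mu[x,\beta([u,v]_\frak{k})]_\frakg$, and the $\beta$-dependent terms match $B_\alpha(\pi(x)u,v)+B_\alpha(u,\pi(x)v)$ on the $\beta$-side automatically, again by Eq.~(\mref{eq:thginvariance}) and Eq.~(\mref{eq:thequivalence}); the leftover is Eq.~(\mref{eq:lakm2}). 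The ``in particular'' case $\lambda=0$ is immediate since Eqs.~(\mref{eq:lakm1})--(\mref{eq:lakm2}) become vacuous.

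\smallskip

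\textbf{Part (\mref{it:lakm2})} is the special case $\kappa=\mu=0$ and $\beta=0$ of Part (\mref{it:lakm1}) (an $\calo$-operator of weight $\lambda$ is exactly an extended $\calo$-operator of weight $\lambda$ with trivial extension), so it follows with no extra work. \textbf{Part (\mref{it:lakm3})} is the case where $\frak{k}=V$ carries the \emph{trivial} bracket; then $[u,v]_V=0$, so Eqs.~(\mref{eq:lakm1})--(\mref{eq:lakm2}) hold trivially and Part (\mref{it:lakm1}) (or directly Theorem~\mref{thm:exogcybe} after checking Eqs.~(\mref{eq:bracycl}) and (\mref{eq:schoutn}) reduce to statements about $B_\alpha(u,v)=\kappa[\beta(u),\beta(v)]$, which vanish appropriately under the antisymmetric-module-homomorphism hypothesis on $\beta$) gives the skew-symmetric solution of GCYBE unconditionally.

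\smallskip

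\textbf{Main obstacle.} The routine but genuinely delicate part is the bookkeeping in Part (\mref{it:lakm1}): one must verify carefully that \emph{every} $\beta$-term (there are mixed terms like $\kappa\pi([\beta(u),\beta(v)])w$ and $\mu\pi(\beta([u,v]_\frak{k}))w$, plus their images under $\pi(x)$) cancels in the cyclic sums, using precisely the three defining identities of an antisymmetric $\frakg$-module homomorphism of mass $(\kappa,\mu)$ together with the Jacobi identities in $\frak{k}$ and $\frakg$ --- and that nothing involving $\beta$ leaks into Eqs.~(\mref{eq:lakm1})--(\mref{eq:lakm2}). Once that cancellation is confirmed, the isolation of the $\lambda$-terms is forced. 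I expect no conceptual difficulty beyond this, since the hard analytic content (the equivalence of the GCYBE condition with Eqs.~(\mref{eq:bracycl}) and (\mref{eq:schoutn})) is already supplied by Theorem~\mref{thm:exogcybe}.
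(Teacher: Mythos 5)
Your proposal is correct and follows the paper's proof essentially verbatim: reduce via Theorem~\mref{thm:exogcybe} to Eqs.~(\mref{eq:bracycl}) and (\mref{eq:schoutn}), substitute $B_{\alpha}(u,v)=\lambda\alpha([u,v]_{\frak{k}})+\kappa[\beta(u),\beta(v)]_{\frakg}+\mu\beta([u,v]_{\frak{k}})$, show the $\beta$-terms drop out of both the cyclic-sum and cocycle conditions, and obtain (\mref{it:lakm2}) and (\mref{it:lakm3}) as the specializations $\kappa=\mu=0$ and $(\frak{k},\pi)=(V,\rho)$ respectively. The one small correction to your bookkeeping sketch: the cancellation of the $\mu$-terms in the cyclic sum is driven by the antisymmetry of $\beta$ together with the fact that $\pi(\beta(w))$ acts as a derivation of $[\,,\,]_{\frak{k}}$ (and Eq.~(\mref{eq:thequivalence}) applied backwards), not by the Jacobi identity of $\frak{k}$.
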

\begin{proof}
(\mref{it:lakm1}) Since $\alpha$ is an extended $\calo$-operator of
weight $\lambda$ with extension $\beta$ of \ewt $(\kappa,\mu)$, for
any $u,v\in\frak{k}$, we have
$$B_{\alpha}(u,v)=[\alpha(u),\alpha(v)]_{\frakg}-\alpha(\pi(\alpha(u))v-\pi(\alpha(v))u)=
\lambda\alpha([u,v]_{\frak{k}})+\kappa[\beta(u),\beta(v)]_{\frakg}+\mu\beta([u,v]_{\frak{k}}).$$
Thus, by Theorem~\mref{thm:exogcybe},
$\ttt{\ltt{\alpha}}_{-}\in(\frakg\ltimes_{\pi^*}\frak{k}^*)\otimes(\frakg\ltimes_{\pi^*}\frak{k}^*)$
is a skew-symmetric solution of GCYBE if and only if the following
equations hold:
\begin{equation}
\pi(\lambda\alpha([u,v]_{\frak{k}})+\kappa[\beta(u),\beta(v)]_{\frakg}+\mu\beta([u,v]_{\frak{k}}))w+{\rm
cycl}.=0,\mlabel{eq:prcogyb1}
\end{equation}
\begin{eqnarray}
&&[x,\lambda\alpha([u,v]_{\frak{k}})+\kappa[\beta(u),\beta(v)]_{\frakg}+\mu\beta([u,v]_{\frak{k}})]_{\frakg}
\mlabel{eq:prcogyb2}\\
&=&\lambda\alpha([\pi(x)u,v]_{\frak{k}})+\kappa[\beta(\pi(x)u),\beta(v)]_{\frakg}+
\mu\beta([\pi(x)u,v]_{\frak{k}})\notag\\
& &+\lambda\alpha([u,\pi(x)v]_{\frak{k}})+
\kappa[\beta(u),\beta(\pi(x)v)]_{\frakg} +\mu\beta([u,\pi(x)v]_{\frak{k}}),\quad \forall x\in\frakg,u,v\in\frak{k}.
\notag
\end{eqnarray}
On the other hand, for any
$u,v,w\in\frak{k}$, we have
\begin{eqnarray*}
\kappa\pi([\beta(w),\beta(u)]_{\frakg})v+\kappa\pi([\beta(v),\beta(w)]_{\frakg})u&=&
\kappa\pi(\beta(\pi(\beta(w))u))v+\kappa\pi(\beta(\pi(\beta(v))w))u\\
&=&-\kappa\pi(\beta(v))\pi(\beta(w))u-\kappa\pi(\beta(u))\pi(\beta(v))w\\
&=&\kappa\pi(\beta(v))\pi(\beta(u))w-\kappa\pi(\beta(u))\pi(\beta(v))w\\
&=&\kappa\pi([\beta(v),\beta(u)]_{\frakg})w.
\end{eqnarray*}
Therefore, $\kappa\pi([\beta(u),\beta(v)]_{\frakg})w+{\rm cycl}.=0$.
Moreover,
\begin{eqnarray*}
\mu\pi(\beta([u,v]_{\frak{k}}))w&=&-\mu\pi(\beta(w))[u,v]_{\frak{k}}\\
&=&-\mu[\pi(\beta(w))u,v]_{\frak{k}}-\mu[x,\pi(\beta(w))v]_{\frak{k}}\\
&=&-\mu\pi(\beta([w,u]_{\frak{k}}))v+\mu[x,\pi(\beta(v))w]_{\frak{k}}\\
&=&-\mu\pi(\beta([w,u]_{\frak{k}}))v-\mu\pi(\beta([v,w]_{\frak{k}}))u.
\end{eqnarray*}
Therefore
$\mu\pi(\beta([u,v]_{\frak{k}}))w+\mu\pi(\beta([w,u]_{\frak{k}}))v+\mu\pi(\beta([v,w]_{\frak{k}}))u=0$.
So Eq.~(\mref{eq:lakm1}) holds if and only if
Eq.~(\mref{eq:prcogyb1}) holds. Furthermore, for any $x\in\frakg$,
we have that
$$[x,\kappa[\beta(u),\beta(v)]_{\frakg}]_{\frakg}=\kappa[[x,\beta(u)]_{\frakg},\beta(v)]_{\frakg}+
\kappa[\beta(u),[x,\beta(v)]_{\frakg}]_{\frakg}=\kappa[\beta(\pi(x)u),\beta(v)]_{\frakg}+
\kappa[\beta(u),\beta(\pi(x)v)]_{\frakg}$$ and
$$[x,\mu\beta([u,v]_{\frak{k}})]_{\frakg}=\mu\beta(\pi(x)[u,v]_{\frak{k}})=\mu\beta([\pi(x)u,v]_{\frak{k}})+
\mu\beta([u,\pi(x)v]_{\frak{k}}).$$ Therefore,
Eq.~(\mref{eq:lakm2}) holds if and only if
Eq.~(\mref{eq:prcogyb2}) holds. In conclusion,
$\ttt{\ltt{\alpha}}_{-}\in(\frakg\ltimes_{\pi^*}\frak{k}^*)\otimes(\frakg\ltimes_{\pi^*}\frak{k}^*)$
is a skew-symmetric solution of GCYBE if and only if
Eq.~(\mref{eq:lakm1}) and Eq.~(\mref{eq:lakm2}) hold.
\medskip

\noindent (\mref{it:lakm2}) This follows from Item~(\mref{it:lakm1})
by setting $\kappa =\mu=0$.
\medskip

\noindent (\mref{it:lakm3}) This follows from Item~(\mref{it:lakm1})
by setting $(\frak{k},\pi)=(V,\rho)$.
\end{proof}

\section{Rota-Baxter operators, $\calo$-operators and relative differential operators}
\mlabel{sec:rbod}
In this section, we show that an $\calo$-operator can be recovered
from a Rota-Baxter operator on a large space. We also introduce a differential variation of the $\calo$-operator and study its relation with $\calo$-operators.

\subsection{Rota-Baxter operators and $\calo$-operators}
We start with the relationship between $\calo$-operators on a $\frakg$-Lie algebra and Rota-Baxter operators.
\begin{prop}
Let $\frak{g}$ be a Lie algebra and $(\frak{k},\pi)$ be a
$\frak{g}$-Lie algebra. Let $\alpha:\frak{k}\to\frak{g}$ be a linear
map and let $\lambda\in \bfk$. Then the following statements are equivalent.
\begin{enumerate}
\item
The linear map $\alpha$ is an $\calo$-operator of weight
$\lambda$.\mlabel{it:crbb1}
\item
The linear map
\begin{equation}
\otr{\alpha}:\frak{g}\mltimes_{\pi}\frak{k}\to\frak{g}\mltimes_{\pi}\frak{k},\quad
\otr{\alpha}(x,u)=(\alpha(u)-\lambda x,0),\quad \forall
x\in\frak{g},u\in\frak{k},
\end{equation}
is a Rota-Baxter operator of weight $\lambda$.\mlabel{it:crbb2}
\item
The linear map
\begin{equation}
-\lambda \id-\otr{\alpha}: \frak{g}\mltimes_{\pi}\frak{k}\to\frak{g}\mltimes_{\pi}\frak{k},\quad
(-\lambda \id -\otr{\alpha})(x,u)=(-\alpha(u),-\lambda u),\quad \forall
x\in\frak{g},u\in\frak{k},
\end{equation}
is a Rota-Baxter operator of weight $\lambda$.\mlabel{it:crbb3}
\end{enumerate}
\mlabel{pp:crbb}
\end{prop}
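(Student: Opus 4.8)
The plan is to prove the three-way equivalence by direct computation, exploiting the fact that the relevant Rota-Baxter equation on $\frakg\mltimes_\pi\frak{k}$ becomes very sparse because the images of all three operators lie in $\frakg$ (for $\otr{\alpha}$) or are determined by the $\frak{k}$-component in a controlled way. First I would fix notation: write a generic element of $\frakg\mltimes_\pi\frak{k}$ as $(x,u)$ with $x\in\frakg$, $u\in\frak{k}$, and recall from Proposition~\mref{pp:semidi} that $[(x_1,u_1),(x_2,u_2)]=([x_1,x_2]_\frakg,\pi(x_1)u_2-\pi(x_2)u_1+[u_1,u_2]_{\frak k})$. The key observation for part~(\mref{it:crbb2}) is that $\otr{\alpha}(x,u)=(\alpha(u)-\lambda x,0)$ always has trivial $\frak{k}$-component, so $[\otr{\alpha}(x_1,u_1),\otr{\alpha}(x_2,u_2)]=([\alpha(u_1)-\lambda x_1,\alpha(u_2)-\lambda x_2]_\frakg,0)$, and the bracket expands into four terms: $[\alpha(u_1),\alpha(u_2)]_\frakg$, $-\lambda[\alpha(u_1),x_2]_\frakg$, $-\lambda[x_1,\alpha(u_2)]_\frakg$, and $\lambda^2[x_1,x_2]_\frakg$.

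Next I would compute the right-hand side $\otr{\alpha}\big([\otr{\alpha}(x_1,u_1),(x_2,u_2)]+[(x_1,u_1),\otr{\alpha}(x_2,u_2)]+\lambda[(x_1,u_1),(x_2,u_2)]\big)$. Since $\otr{\alpha}$ only depends on the input through the combination $\alpha(\text{second slot})-\lambda(\text{first slot})$, I only need the $\frakg$-component and the $\frak{k}$-component of each inner bracket, then apply $\otr{\alpha}$. Using $\pi(\alpha(u))=\alpha(u)\cdot$ and the semidirect bracket formula, a routine expansion shows that after applying $\otr{\alpha}$ the $\lambda$-linear cross terms and the $\lambda^2$ term match the corresponding terms on the left automatically (they come from the $\frakg$-part of the semidirect bracket and the derivation property of $\pi(x)$), and what remains to be matched is precisely
\begin{equation}
[\alpha(u_1),\alpha(u_2)]_\frakg=\alpha\big(\alpha(u_1)\cdot u_2-\alpha(u_2)\cdot u_1+\lambda[u_1,u_2]_{\frak k}\big),\notag
\end{equation}
which is exactly Eq.~(\mref{eq:gcybe}), the defining equation of an $\calo$-operator of weight $\lambda$. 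This establishes (\mref{it:crbb1})$\Leftrightarrow$(\mref{it:crbb2}).

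For (\mref{it:crbb1})$\Leftrightarrow$(\mref{it:crbb3}), I would note that $-\lambda\id-\otr{\alpha}$ sends $(x,u)\mapsto(-\alpha(u),-\lambda u)$, so its $\frak{k}$-component is no longer trivial; nevertheless its image is the graph-like subspace $\{(-\alpha(u),-\lambda u)\}$, and the bracket of two such elements, followed by applying $-\lambda\id-\otr{\alpha}$ again, can be expanded in the same way. Rather than redo the computation, the cleaner route is to invoke the general fact (used implicitly in Corollary~\mref{co:motoaybe1}(\mref{it:motoaybe4}) via~\cite{E}) that if $P$ is a Rota-Baxter operator of weight $\lambda$ on a Lie algebra $\frak h$ then so is $-\lambda\id-P$; applying this with $\frak h=\frakg\mltimes_\pi\frak{k}$ and $P=\otr{\alpha}$ gives (\mref{it:crbb2})$\Leftrightarrow$(\mref{it:crbb3}) for free, once one checks $-\lambda\id-\otr{\alpha}$ has the stated closed form, which is immediate. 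The main obstacle is purely bookkeeping: in the expansion for (\mref{it:crbb2}) one must be careful that the terms $\pi(\alpha(u_1))u_2$ arising from the $\frak{k}$-component of $[\otr{\alpha}(x_1,u_1),(x_2,u_2)]$ are fed correctly into the outer $\otr{\alpha}$ (which reads the $\frak{k}$-slot via $\alpha$), and that the spurious $[u_1,u_2]_{\frak k}$ contributions and the $x_i$-dependent contributions cancel in the right pattern — this is where the weight-$\lambda$ term $\lambda[u_1,u_2]_{\frak k}$ in the $\calo$-operator equation is produced. I expect no conceptual difficulty beyond organizing this cancellation.
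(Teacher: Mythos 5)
Your proposal is correct and follows essentially the same route as the paper: a direct expansion of the Rota--Baxter identity for $\otr{\alpha}$ in the semidirect product $\frak{g}\mltimes_{\pi}\frak{k}$, observing that the $\lambda^2$ and $\lambda$-linear cross terms cancel or match so that only the weight-$\lambda$ $\calo$-operator equation remains, followed by the standard fact that $P$ is Rota--Baxter of weight $\lambda$ if and only if $-\lambda\,\id-P$ is, to get (\mref{it:crbb2})$\Leftrightarrow$(\mref{it:crbb3}). No substantive differences from the paper's argument.
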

\begin{proof}
(\mref{it:crbb1})$\Leftrightarrow$(\mref{it:crbb2}). Let
$x,y\in\frak{g},u,v\in\frak{k}$. Then we have
\begin{eqnarray*}
[\otr{\alpha}(x,u),\otr{\alpha}(y,v)]&=&(\lambda^2[x,y]-\lambda[x,\alpha(v)]-\lambda[\alpha(u),y]+
[\alpha(u),\alpha(v)],0),\\
\otr{\alpha}([\otr{\alpha}(x,u),(y,v)])&=&(\lambda^2[x,y]-\lambda[\alpha(u),y]-
\lambda\alpha(\pi(x)v)+\alpha(\pi(\alpha(u))v),0),\\
\otr{\alpha}([(x,u),\otr{\alpha}(y,v)])&=&(\lambda^2[x,y]-\lambda[x,\alpha(v)]+\lambda\alpha(\pi(y)u)-
\alpha(\pi(\alpha(v))u),0),\\
\lambda\otr{\alpha}([(x,u),(y,v)])&=&(-\lambda^2[x,y]+\lambda\alpha(\pi(x)v)-\lambda\alpha(\pi(y)u)+
\lambda\alpha([u,v]),0).
\end{eqnarray*}
Therefore
$$[\alpha(u),\alpha(v)]=\alpha(\pi(\alpha(u))v-\pi(\alpha(v))u)+\lambda\alpha([u,v])$$
if and only if
$$[\otr{\alpha}(x,u),\otr{\alpha}(y,v)]=\otr{\alpha}([\otr{\alpha}(x,u),(y,v)])+
\otr{\alpha}([(x,u),\otr{\alpha}(y,v)])+\lambda\otr{\alpha}([(x,u),(y,v)]),$$
for any $x,y\in\frak{g},u,v\in\frak{k}$.

(\mref{it:crbb2})$\Leftrightarrow$(\mref{it:crbb3}). This follows
from the following basic fact on Rota-Baxter operators: a
linear map $P$ on a Lie algebra is a Rota-Baxter operator of weight
$\lambda\in\bfk$ if and only if $-\lambda{\rm id}-P$ is a
Rota-Baxter operator of weight $\lambda\in\bfk$.
\end{proof}

For $\calo$-operators on a $\frakg$-module, we have
\begin{coro}
Let $\frak{g}$ be a Lie algebra and $(V,\rho)$ be a $\frakg$-module.
Let $\alpha:V\to\frak{g}$ be a linear map. Let $\lambda$ and $\mu\neq 0$ be in $\bfk$. The following
statements are equivalent.
\begin{enumerate}
\item The linear map $\alpha$ is an $\calo$-operator (of weight 0).
\item The linear map
\begin{equation}
\otr{\alpha}:\frak{g}\ltimes_{\rho} V\to\frak{g}\ltimes_{\rho}
V,\quad \otr\alpha(x,u)=(\mu\alpha(u)-\lambda x,0),\quad
\forall x\in\frak{g},u\in V,
\end{equation}
 is a Rota-Baxter operator of weight $\lambda$.
\item
The linear map
\begin{equation}
-\lambda \id - \otr{\alpha}:\frak{g}\ltimes_{\rho}
V\to\frak{g}\ltimes_{\rho} V,\quad (-\lambda\id
-\otr{\alpha})(x,u)=(-\mu\alpha(u),-\lambda u),\quad \forall
x\in\frak{g},u\in V,
\end{equation}
is a Rota-Baxter operator of weight $\lambda$.
\end{enumerate}
\mlabel{co:crbb}
\end{coro}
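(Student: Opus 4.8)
The plan is to deduce Corollary~\mref{co:crbb} from Proposition~\mref{pp:crbb} by viewing $(V,\rho)$ as a $\frak{g}$-Lie algebra carrying the zero Lie bracket and by absorbing the nonzero scalar $\mu$ into the operator.

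First I would record two preliminary observations. (i) When $V$ carries the trivial bracket, the semidirect product $\frak{g}\mltimes_\rho V$ is the ordinary one $\frak{g}\ltimes_\rho V$ of Eq.~(\mref{eq:sdir}) (the remark following Proposition~\mref{pp:semidi}), and for every $\lambda\in\bfk$ the defining identity Eq.~(\mref{eq:gcybe}) of an $\calo$-operator of weight $\lambda$ on this $\frak{g}$-Lie algebra collapses to Eq.~(\mref{eq:oop}); hence ``$\calo$-operator of weight $0$'' and ``$\calo$-operator of weight $\lambda$'' coincide in this situation. (ii) Because $\mu\neq 0$, a linear map $\alpha:V\to\frak{g}$ satisfies Eq.~(\mref{eq:oop}) if and only if $\mu\alpha$ does, since both sides of Eq.~(\mref{eq:oop}) simply scale by $\mu^2$ under $\alpha\mapsto\mu\alpha$. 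This last point is the only genuinely new ingredient, and it is precisely where the hypothesis $\mu\neq 0$ enters; I expect it to be the main (small) obstacle, since the same rescaling does not preserve the weight in the setting of Proposition~\mref{pp:crbb} with a nonzero bracket, so the device is really particular to the module case.

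With (i) and (ii) in hand, I would apply the equivalence (\mref{it:crbb1})$\Leftrightarrow$(\mref{it:crbb2}) of Proposition~\mref{pp:crbb} to the linear map $\mu\alpha:V\to\frak{g}$ and the $\frak{g}$-Lie algebra $(V,\rho)$: it asserts that $\mu\alpha$ is an $\calo$-operator of weight $\lambda$ if and only if the map $(x,u)\mapsto(\mu\alpha(u)-\lambda x,0)$ on $\frak{g}\ltimes_\rho V$ is a Rota-Baxter operator of weight $\lambda$. By (i) the left-hand condition is equivalent to ``$\mu\alpha$ is an $\calo$-operator of weight $0$'', which by (ii) is equivalent to ``$\alpha$ is an $\calo$-operator'', and the map on the right is exactly $\otr{\alpha}$ as defined in the statement. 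This yields the equivalence of items~(1) and~(2). For (2)$\Leftrightarrow$(3) I would invoke the same basic fact already used in the proof of Proposition~\mref{pp:crbb}, namely that a linear map $P$ on a Lie algebra is a Rota-Baxter operator of weight $\lambda$ if and only if $-\lambda\id-P$ is, together with the identity $(-\lambda\id-\otr{\alpha})(x,u)=(-\lambda x,-\lambda u)-(\mu\alpha(u)-\lambda x,0)=(-\mu\alpha(u),-\lambda u)$, which is precisely the map appearing in item~(3). No further computation is needed, since the weight-$\lambda$ Rota-Baxter identity was already checked coordinate-wise in Proposition~\mref{pp:crbb}.
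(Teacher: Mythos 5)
Your proof is correct and follows essentially the same route as the paper: the paper's own argument also views $(V,\rho)$ as a $\frakg$-Lie algebra with the zero bracket, applies Proposition~\mref{pp:crbb}, and uses the fact that $\alpha$ is an $\calo$-operator if and only if $\mu\alpha$ is one for $\mu\neq 0$. Your write-up merely makes explicit the two observations (collapse of the weight-$\lambda$ condition under the trivial bracket, and the $\mu^2$-scaling of Eq.~(\mref{eq:oop})) that the paper leaves implicit.
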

\begin{proof}
Since a $\frakg$-module $(V,\rho)$ is a $\frakg$-Lie algebra when $V$ is equipped with the zero bracket, the corollary follows from Proposition~\mref{pp:crbb} and the simple fact that a linear operator $\alpha:V\to \frakg$ is an $\calo$-operator if and only if $\mu\alpha$ is one for $0\neq \mu\in \bfk$.
\end{proof}

By a similar argument as that for Proposition~\mref{pp:crbb}, we also obtain the following relation of invertible $\calo$-operators with Rota-Baxter operators.
\begin{prop}
Let $\frak{g}$ be a Lie algebra and $(V,\rho)$ be a $\frakg$-module.
Let $\alpha:V\to\frak{g}$ be an invertible linear map. Let $\lambda$, $\mu_1\neq 0$ and $\mu_2\neq \pm\lambda$ be in $\bfk$. Then $\alpha$
is an $\calo$-operator of weight 0 if and only if
\begin{equation}
\otr{\alpha}(x,u)=\left(\mu_1\alpha(u)-\frac{\mu_2+\lambda}{2}x,\frac{\lambda^2-\mu_2^2}{4\mu_1}\alpha^{-1}(x)+
\frac{\mu_2-\lambda}{2}u\right), \quad \forall x\in \frakg, u\in V, \end{equation}
is a Rota-Baxter operator of weight $\lambda$ on
$\frak{g}\ltimes_{\rho}V$.
\end{prop}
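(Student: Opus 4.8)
Following the hint, the strategy is the one used for Proposition~\mref{pp:crbb}: write out the Rota-Baxter identity of weight $\lambda$ for $\otr{\alpha}$ on $\frak{g}\ltimes_{\rho}V$, expand the four terms into their $\frak{g}$- and $V$-components, and show that the whole identity is equivalent to the vanishing of $B_{\alpha}$ from Eq.~(\mref{eq:cocycle}), i.e. to $\alpha$ being an $\calo$-operator of weight $0$. Throughout I would abbreviate $p=\tfrac{\mu_2+\lambda}{2}$, $q=\tfrac{\mu_2-\lambda}{2}$ and $s=\tfrac{\lambda^2-\mu_2^2}{4\mu_1}$, so that $\otr{\alpha}(x,u)=(\mu_1\alpha(u)-px,\ s\alpha^{-1}(x)+qu)$ and
\begin{equation*}
p-q=\lambda,\qquad p+q=\mu_2,\qquad \mu_1 s=-pq .
\end{equation*}
The hypotheses $\mu_1\ne 0$ and $\mu_2\ne\pm\lambda$ say exactly that $\mu_1\ne 0$ and $p,q$ (equivalently $s$) are nonzero; it is these three relations that make the bookkeeping collapse.

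For $z_1=(x,u)$, $z_2=(y,v)$, using $[(x,u),(y,v)]=([x,y],\rho(x)v-\rho(y)u)$, I would compute the four expressions $[\otr{\alpha}(z_1),\otr{\alpha}(z_2)]$, $\otr{\alpha}[\otr{\alpha}(z_1),z_2]$, $\otr{\alpha}[z_1,\otr{\alpha}(z_2)]$, $\lambda\,\otr{\alpha}[z_1,z_2]$, each with its $\frak{g}$- and $V$-slot kept separate, and assemble the Rota-Baxter defect $E=[\otr{\alpha}(z_1),\otr{\alpha}(z_2)]-\otr{\alpha}\big([\otr{\alpha}(z_1),z_2]+[z_1,\otr{\alpha}(z_2)]+\lambda[z_1,z_2]\big)$. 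In the $\frak{g}$-slot the terms in $[\alpha(u),y]$ and $[x,\alpha(v)]$ cancel outright; the terms in $\rho(x)v,\ \rho(y)u,\ \rho(\alpha(u))v,\ \rho(\alpha(v))u$ cancel after one applies $p-q=\lambda$; and $\mu_1 s=-pq$ merges the surviving $[x,y]$ term with the surviving $\alpha\!\big(\rho(x)\alpha^{-1}(y)-\rho(y)\alpha^{-1}(x)\big)$ term, leaving
\begin{equation*}
E_{\frak{g}}=\mu_1^2\,B_{\alpha}(u,v)+\mu_1 s\,B_{\alpha}(\alpha^{-1}(x),\alpha^{-1}(y)),\qquad \forall\, x,y\in\frak{g},\ u,v\in V .
\end{equation*}
Since $\mu_1\ne 0$, setting $x=y=0$ shows that $E_{\frak{g}}\equiv 0$ forces $B_{\alpha}\equiv 0$, and conversely $B_{\alpha}\equiv 0$ makes $E_{\frak{g}}$ vanish identically. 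In particular this already yields: $\otr{\alpha}$ a Rota-Baxter operator of weight $\lambda$ $\Rightarrow$ $\alpha$ an $\calo$-operator of weight $0$.

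For the converse it remains to verify that, once $B_{\alpha}\equiv 0$, the $V$-slot $E_{V}$ vanishes too. Here I would feed in the consequences of $B_{\alpha}\equiv 0$ together with the invertibility of $\alpha$, namely $\rho(x)\alpha^{-1}(y)-\rho(y)\alpha^{-1}(x)=\alpha^{-1}([x,y])$, $\ \alpha^{-1}([\alpha(u),y])=\rho(\alpha(u))\alpha^{-1}(y)-\rho(y)u$ and $\ \alpha^{-1}([x,\alpha(v)])=\rho(x)v-\rho(\alpha(v))\alpha^{-1}(x)$; substituting these and using $p+q=\mu_2$, $p-q=\lambda$, $\mu_1 s=-pq$ once more for the coefficients, every surviving monomial of $E_{V}$ (in $\rho(\alpha(u))\alpha^{-1}(y)$, $\rho(\alpha(v))\alpha^{-1}(x)$, $\rho(x)v$, $\rho(y)u$, $\alpha^{-1}([x,y])$) cancels in pairs, so $E_{V}\equiv 0$. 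Combining the two slots gives the stated equivalence. (The cases $\mu_2=\pm\lambda$ are excluded only because then $s=0$, so $\alpha^{-1}$ drops out of $\otr{\alpha}$ and the statement degenerates to instances already contained in Corollary~\mref{co:crbb}.)

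The only genuine difficulty is organizational: $E_{\frak{g}}$ and $E_{V}$ together involve on the order of a dozen distinct monomials built from $[x,y]$, $\rho(x)v$, $\rho(\alpha(u))v$, $\alpha^{-1}([x,y])$, $\rho(\alpha(u))\alpha^{-1}(y)$, and so on, and one must track their coefficients carefully so that the three relations above collapse everything onto the single operator $B_{\alpha}$. A convenient sanity check: after identifying $V$ with $\frak{g}$ via the invertible $\alpha$, the map $\otr{\alpha}$ becomes $\left(\begin{smallmatrix}-p&\mu_1\\ s&q\end{smallmatrix}\right)\otimes\id_{\frak{g}}$ on $\frak{g}\oplus\frak{g}$, a matrix of trace $-p+q=-\lambda$ and determinant $-pq-\mu_1 s=0$, hence with eigenvalues $0$ and $-\lambda$; this pair reflects exactly the two Rota-Baxter operators $P$ and $-\lambda\,\id-P$ arising in Proposition~\mref{pp:crbb} and Corollary~\mref{co:crbb}, and explains the particular shape of the coefficients in the proposition.
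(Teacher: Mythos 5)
Your proposal is correct and is essentially the paper's own argument: the paper proves this proposition by the same direct verification used for Proposition~\ref{pp:crbb}, and your bookkeeping (the identities $p-q=\lambda$, $p+q=\mu_2$, $\mu_1 s=-pq$, the reduction of the $\frak{g}$-component to $\mu_1^2 B_{\alpha}(u,v)+\mu_1 s\,B_{\alpha}(\alpha^{-1}(x),\alpha^{-1}(y))$, and the cancellation of the $V$-component once $B_{\alpha}\equiv 0$) checks out. The eigenvalue observation at the end is a nice, correctly flagged heuristic but not part of the proof.
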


\subsection{Rota-Baxter operators and relative differential operators}
We first define a relative version of the differential operator which can also be regarded as a differential variation of the $\calo$-operator.
\begin{defn}
{\rm Let $\frak{g}$ be a Lie algebra.
\begin{enumerate}
\item Let $(\frak{k},\pi)$ be a
$\frak{g}$-Lie algebra. A linear map $f:\frak{g}\to\frak{k}$ is
called a {\bf relative differential operator (on $\frak k$) of weight $\lambda$} if
\begin{equation}
f([x,y]_{\frak{g}})=\pi(x)f(y)-\pi(y)f(x)+\lambda
[f(x),f(y)]_{\frak{k}},\quad \forall x,y\in\frak{g}.
\end{equation}
\item
Let $(V,\rho)$ be a $\frakg$-module. A linear map $f:\frakg \to V$
is called a {\bf relative differential operator (on $V$)} if
\begin{equation}
f([x,y]_{\frak{g}})=\pi(x)f(y)-\pi(y)f(x),\quad \forall x,y\in\frak{g}.
\end{equation}
\end{enumerate}
}
\end{defn}

A relative differential operator on a $\frakg$-module $V$ can be regarded as a special case of a relative differential operator on a $\frakg$-Lie algebra $\frak k$ when $V$ is equipped with the trivial Lie bracket.

\begin{prop}
Let $\frak{g}$ be a Lie algebra and $(\frak{k},\pi)$ be a
$\frak{g}$-Lie algebra. Let $f:\frak{g}\to\frak{k}$ be a linear map.
Then the following statement are equivalent.
\begin{enumerate}
\item
The linear map $f$ is a relative differential operator of weight
1.\mlabel{it:difro1}
\item
The linear map
\begin{equation}
\otr{f}:\frak{g}\mltimes_{\pi}\frak{k}\to\frak{g}\mltimes_{\pi}\frak{k},\quad
\otr{f}(x,u)=(x,f(x)),\quad \forall x\in\frak{g},u\in\frak{k},
\end{equation}
is a Rota-Baxter operator of weight $-1$. \mlabel{it:difro2}
\item
The
linear map
\begin{equation}
\id-\otr{f}:\frak{g}\mltimes_{\pi}\frak{k}\to\frak{g}\mltimes_{\pi}\frak{k},\quad
(\id-\otr{f})(x,u)=(0,u-f(x)),\quad \forall
x\in\frak{g},u\in\frak{k},
\end{equation}
is a Rota-Baxter operator of weight $-1$. \mlabel{it:difro3}
\end{enumerate}
\mlabel{pp:crdiff}
\end{prop}
\begin{proof}
(\mref{it:difro1})$\Leftrightarrow$(\mref{it:difro2}). Let
$x,y\in\frak{g},u,v\in\frak{k}$. Then we have
\begin{eqnarray*}
[\otr{f}(x,u),\otr{f}(y,v)]&=&([x,y],\pi(x)f(y)-\pi(y)f(x)+[f(x),f(y)]);\\
\otr{f}([\otr{f}(x,u),(y,v)])&=&([x,y],f([x,y]));\\
\otr{f}([(x,u),\otr{f}(y,v)])&=&([x,y],f([x,y]));\\
\otr{f}([(x,u),(y,v)])&=&([x,y],f([x,y])).
\end{eqnarray*}
Therefore
$$[\otr{f}(x,u),\otr{f}(y,v)]=\otr{f}([\otr{f}(x,u),(y,v)])+\otr{f}([(x,u),\otr{f}(y,v)])-
\otr{f}([(x,u),(y,v)])$$ if and only if
$$f([x,y])=\pi(x)f(y)-\pi(y)f(x)+[f(x),f(y)].$$
for any $x,y\in\frak{g},u,v\in\frak{k}$.

(\mref{it:difro2})$\Leftrightarrow$(\mref{it:difro3}). This follows
from the same reason as in the case of Proposition~\mref{pp:crbb}.
\end{proof}

By the same argument as for  Corollary~\mref{co:crbb}, we have:

\begin{coro}
Let $\frak{g}$ be a Lie algebra and $(V,\rho)$ be a $\frakg$-module.
Let $f:\frak{g}\to V$ be a linear map. Let $\lambda, \mu\in \bfk$ be nonzero. Then the following
statements are equivalent.
\begin{enumerate}
\item The linear map $f$ is a relative differential operator.
\item The linear map
\begin{equation}
\otr{f}:\frak{g}\ltimes_{\rho} V\to\frak{g}\ltimes_{\rho} V,\quad
\otr{f}(x,u)=(-\lambda x,\mu f(x)),\quad \forall x\in\frak{g},u\in
V,
\end{equation}
is a Rota-Baxter operator of weight $\lambda$.
\item
The linear map
\begin{equation}
-\lambda\id-\otr{f}:\frak{g}\ltimes_{\rho}
V\to\frak{g}\ltimes_{\rho} V,\quad
(-\lambda\id-\otr{f})(x,u)=(0,-\mu f(x)-\lambda u),\quad \forall
x\in\frak{g},u\in V,
\end{equation}
is a Rota-Baxter operator of weight $\lambda$.
\end{enumerate}
\end{coro}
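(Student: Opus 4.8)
The plan is to obtain this corollary from Proposition~\mref{pp:crdiff} by two elementary rescalings, in the same spirit in which Corollary~\mref{co:crbb} was obtained from Proposition~\mref{pp:crbb}. The key starting remark is that the $\frakg$-module $(V,\rho)$ becomes a $\frakg$-Lie algebra once $V$ is equipped with the zero bracket, and that for this $\frakg$-Lie algebra the defining identity of a relative differential operator of weight $1$, namely $f([x,y]_{\frakg})=\rho(x)f(y)-\rho(y)f(x)+[f(x),f(y)]_V$, loses its last term and reduces exactly to the defining identity of a relative differential operator on the module $V$. Hence Proposition~\mref{pp:crdiff}, applied to this $\frakg$-Lie algebra, already yields the special case $\lambda=-1,\ \mu=1$: a linear map $f$ is a relative differential operator on $V$ if and only if $(x,u)\mapsto (x,f(x))$ is a Rota-Baxter operator of weight $-1$ on $\frakg\ltimes_{\rho}V$.

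To pass from this to arbitrary nonzero $\lambda$ and $\mu$, I would record three routine facts. First, both sides of the relative-differential-operator identity on $V$ are $\bfk$-linear in $f$ (there is no quadratic-in-$f$ term), so $f$ is a relative differential operator if and only if $cf$ is, for every $0\ne c\in\bfk$. Second, for every $0\ne c\in\bfk$, a linear map $P$ on a Lie algebra is a Rota-Baxter operator of weight $\lambda$ if and only if $cP$ is a Rota-Baxter operator of weight $c\lambda$, which is immediate from the defining equation~(\mref{eq:rotabaxter}). Third, $P$ is a Rota-Baxter operator of weight $\lambda$ if and only if $-\lambda\,\id-P$ is, the fact already used in the proof of Proposition~\mref{pp:crbb}.

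Assembling these: by the first fact, with $c=-\mu/\lambda$ (which is nonzero, since $\lambda\mu\ne 0$), $f$ is a relative differential operator if and only if $g:=-(\mu/\lambda)f$ is; by the special case above this holds if and only if $\otr{g}\colon(x,u)\mapsto(x,g(x))$ is a Rota-Baxter operator of weight $-1$ on $\frakg\ltimes_{\rho}V$; and by the second fact, with $c=-\lambda$, this holds if and only if $-\lambda\,\otr{g}\colon(x,u)\mapsto(-\lambda x,-\lambda g(x))=(-\lambda x,\mu f(x))=\otr{f}(x,u)$ is a Rota-Baxter operator of weight $\lambda$. This proves the equivalence of the first and second statements. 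The equivalence of the second and third then follows from the third fact together with the immediate computation $(-\lambda\,\id-\otr{f})(x,u)=(0,-\mu f(x)-\lambda u)$, which is precisely the operator appearing in the third statement.

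I expect no genuine obstacle here; the only point requiring care is the bookkeeping of scalars --- arranging the two rescalings so that the Rota-Baxter weight comes out to be exactly $\lambda$ and the coefficient in front of $f$ is exactly $\mu$, and noticing that the hypotheses $\lambda\ne 0$ and $\mu\ne 0$ are precisely what is needed to carry out the divisions. If one prefers to bypass the chain of reductions, the equivalence of the first and second statements can instead be verified by a single direct computation imitating the proof of Proposition~\mref{pp:crdiff}: after expanding the weight-$\lambda$ Rota-Baxter identity for $\otr{f}$ on $\frakg\ltimes_{\rho}V$, the $\frakg$-components cancel identically while the $V$-components collapse to $-\lambda\mu\big(\rho(x)f(y)-\rho(y)f(x)\big)=-\lambda\mu\,f([x,y]_{\frakg})$, which, since $\lambda\mu\ne 0$, is exactly the condition that $f$ be a relative differential operator on $V$.
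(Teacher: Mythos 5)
Your proposal is correct and follows essentially the same route as the paper, which disposes of this corollary by invoking ``the same argument as for Corollary~\mref{co:crbb}'', i.e.\ viewing $(V,\rho)$ as a $\frakg$-Lie algebra with trivial bracket, applying Proposition~\mref{pp:crdiff}, and rescaling; your version merely makes the two rescalings (of $f$ by $-\mu/\lambda$ and of the Rota--Baxter operator by $-\lambda$, changing the weight from $-1$ to $\lambda$) explicit, and the scalar bookkeeping checks out. The direct computation you sketch at the end is also correct and would serve equally well.
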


\end{document}